\documentclass[12pt]{article}

\usepackage[english]{babel}


\usepackage{amsmath,amsthm,amssymb, graphicx, multicol, array, bm, bbm}
\usepackage{graphicx}
\usepackage[colorlinks=true, allcolors=blue]{hyperref}
\usepackage{caption}
\usepackage{subcaption}
\usepackage{natbib}
\usepackage{algorithm}
\usepackage[utf8]{inputenc}
\usepackage[margin=1in]{geometry}
\usepackage{booktabs}
\usepackage{array}
\usepackage{amsmath, amsfonts, amssymb}
\usepackage{color}
\usepackage{hyperref}
\usepackage{bbm}
\usepackage[noend]{algpseudocode}
\usepackage{subcaption}
\usepackage{graphbox}
\usepackage{accents}
\usepackage{algorithm}
\usepackage{pdflscape}
\usepackage{lscape}
\usepackage{multirow}
\usepackage{cleveref}
\usepackage{authblk}

\setlength{\parskip}{0.5em}

\newcommand{\vDelta}{\mathbf{\Delta}}

\newcommand{\obs}{\text{obs}}
\newcommand{\mis}{\text{mis}}

\definecolor{darkgreen}{rgb}{0.0, 0.5, 0.0}


\usepackage{amsmath,amsfonts,bm}









\def\eqref#1{equation~\ref{#1}}









\def\1{\bm{1}}







\def\vzero{{\bm{0}}}

\def\vtheta{{\bm{\theta}}}
\def\vomega{{\bm{\omega}}}
\def\vbeta{{\bm{\beta}}}
\def\vpsi{{\bm{\psi}}}
\def\vomega{{\bm{\omega}}}
\def\vgamma{{\bm{\gamma}}}
\def\vGamma{{\bm{\Gamma}}}

\def\vc{{\bm{c}}}

\def\vm{{\bm{m}}}



\def\mI{{\bm{I}}}

\DeclareMathAlphabet{\mathsfit}{\encodingdefault}{\sfdefault}{m}{sl}
\SetMathAlphabet{\mathsfit}{bold}{\encodingdefault}{\sfdefault}{bx}{n}











\newcommand{\E}{\mathbb{E}}

\newcommand{\R}{\mathbb{R}}

\newcommand{\Var}{\mathrm{Var}}

\newcommand{\Cov}{\mathrm{Cov}}


\renewcommand{\cal}{\mathcal}

\newcommand{\cC}{{\cal C}}

\newcommand{\cY}{{\cal Y}}
\newcommand{\cD}{{\cal D}}

\newcommand{\cN}{{\cal N}}

\renewcommand{\b}{\mathbb}


 
\renewcommand{\leq}{\leqslant}
\renewcommand{\geq}{\geqslant}



\newtheorem{theorem}{Theorem}
\newtheorem{proposition}{Proposition}

\newtheorem{corollary}{Corollary}[section]
\newtheorem{remark}{Remark}[section]
\newtheorem{assumption}{Assumption}

\newtheorem{example}{Example}

\renewcommand{\b}{\mathbb}

\def\vzero{\boldsymbol{0}}
\def\vY{\boldsymbol{Y}}



\newcommand{\Ab}{\mathbf{A}}
\newcommand{\Bb}{\mathbf{B}}

\newcommand{\Ib}{\mathbf{I}}
\newcommand{\Jb}{\mathbf{J}}

\newcommand{\Mb}{\mathbf{M}}

\newcommand{\Wb}{\mathbf{W}}
\newcommand{\Xb}{\mathbf{X}}

\newcommand{\bR}{\bm{R}}

\newcommand{\bX}{\bm{X}}
\newcommand{\bY}{\bm{Y}}


\newcommand{\bpi}{\bm{\pi}}

\title{A Unified Framework for Inference with General Missingness Patterns and Machine Learning Imputation}

\begin{document}

\author[1]{Xingran Chen}
\author[2]{Tyler McCormick}
\author[3]{Bhramar Mukherjee}
\author[1]{Zhenke Wu\thanks{Corresponding Author.}}

\affil[1]{Department of Biostatistics, University of Michigan}
\affil[2]{Department of Statistics, University of Washington}
\affil[3]{Department of Biostatistics, Yale University}
\affil[ ]{\texttt{\{chenxran,zhenkewu\}@umich.edu}}
\maketitle

\begin{abstract}

Pre-trained machine learning (ML) predictions have been increasingly used to complement incomplete data to enable downstream scientific inquiries, but their naive integration risks biased inferences. Recently, multiple methods have been developed to provide valid inference with ML imputations regardless of prediction quality and to enhance efficiency relative to complete-case analyses. However, existing approaches are often limited to missing outcomes under a missing-completely-at-random (MCAR) assumption, failing to handle general missingness patterns (missing in both the outcome and exposures) under the more realistic missing-at-random (MAR) assumption. This paper develops a novel method that delivers a valid statistical inference framework for general Z-estimation problems using ML imputations under the MAR assumption and for general missingness patterns. The core technical idea is to stratify observations by distinct missingness patterns and construct an estimator by appropriately weighting and aggregating pattern-specific information through a masking-and-imputation procedure on the complete cases. We provide theoretical guarantees of asymptotic normality of the proposed estimator and efficiency dominance over weighted complete-case analyses. Practically, the method affords simple implementations by leveraging existing weighted complete-case analysis software. Extensive simulations are carried out to validate theoretical results. A real data example is provided to further illustrate the practical utility of the proposed method. The paper concludes with a brief discussion on practical implications, limitations, and potential future directions.

\end{abstract}


\newpage

\section{Introduction}

With the rapid development of machine learning (ML) and artificial intelligence (AI) techniques, ML models, particularly those trained on large-scale datasets, have become increasingly capable of making highly accurate predictions. In statistical analysis, these predictions have the potential to be treated as imputations for missing data (in both dependent $Y$ and independent variables $\bX$) due to nonresponse, expensive data collection costs, or other factors, and then the imputed datasets can be seamlessly applied in downstream statistical analyses. However, such an appealing impute-then-analyze procedure is a double-edged sword. On the one hand, the ease of obtaining ML predictions for imputation catalyzes wide adoption across various disciplines, including finance~\citep{bryzgalova2025missing}, social science~\citep{egami2023using}, clinical psychology~\citep{little2024missing}, biomedicine~\citep{barrabes2024advances}, and public health~\citep{perkins2018principled} studies, where missing data appear commonly during the data collection process. On the other hand, imputing missing values with ML predictions of arbitrary quality may produce datasets that have different conditional and marginal distributions from the underlying true distribution. Proceeding with analysis without careful consideration may result in biased estimation and invalid inference. 

To address this issue, recent research has explored prediction-based methods to incorporate ML predictions for single imputation while guaranteeing valid statistical inference on datasets with partially missing outcomes (see, for example,~\citet{wang2020methods, angelopoulos2023prediction, egami2023using, salerno2025moment,salerno2025ipd}). Specifically, prediction-powered inference~\citep[PPI,][]{angelopoulos2023prediction} is a framework where inference is conducted using ML-imputed outcomes, with model parameter estimates calibrated, or ``rectified'', using a complete-case subset. Since this approach does not impose assumptions on the imputation model itself, it allows for a broad selection of ML-based imputation methods, the quality of which impacts the efficiency but not the validity of inference (e.g., consistency, achieving nominal coverage rate). Several extensions of this framework have adapted the idea of PPI to important areas, including causal inference~\citep{demirel2024prediction}, Bayesian inference~\citep{hofer2024bayesian, rovckova2025ai}, and genome-wide association studies~\citep[GWAS,][]{miao2024valid}. Furthermore, recognizing that the PPI approach does not guarantee a smaller variance relative to complete-case analysis (CCA), a few authors have proposed weighting the rectifier with tuning parameters to achieve better efficiency than CCA~\citep{angelopoulos2023PPI++, miao2023assumption, gan2024prediction, gronsbell2024another, ji2025predictions, xu2025unified}.

Despite the tremendous progress and development in prediction-based inference, two key challenges remain unaddressed. First, the existing literature considers only the assumption of missing completely at random (MCAR), leaving an open question of the applicability of prediction-based inference under the more common and complicated assumption, that is, the assumption missing at random (MAR)~\citep{rubin1976inference, little2019statistical}. Except for several specific cases such as linear regression analysis and odds ratio estimation under certain missing probability assumptions ~\citep[Example 3.3 and 3.4,][]{little2019statistical}, failure to account for the MAR assumption can result in biased estimates. Second, the current literature focuses exclusively on partially missing outcomes or specific patterns in covariate missingness~\citep[e.g., missing multiple covariates][]{kluger2025prediction}, but does not account for general patterns of non-monotone missingness, where missing values can occur in any cell within the tabular (``rectangular'') dataset. For example, in regression analysis, missing values are common in both outcomes and covariates, and we may observe different patterns of missingness in different subjects, e.g., missing $Y$ and $X_1$ for one subject and missing $Y$ and $X_2$ for another.

In this paper, we address these challenges by proposing a unified framework called Pattern-Stratified PPI (PS-PPI)\footnote{By Corollary~\ref{corollary:best_tuning_parameter}, the proposed procedure enjoys efficiency dominance over WCCA (under MAR, general missing patterns). This resembles the property of PPI\texttt{++} being no less efficient than CCA (under MCAR, outcome missing only). We chose to term our method “PS-PPI” rather than “PS-PPI\texttt{++}” to avoid redundancy.} to account for both general missingness patterns and missing-at-random mechanism under the Z-estimation framework. We make a few theoretical and methodological contributions with useful practical implications. First, we develop an inferential procedure that integrates the concept of missingness patterns that are used in semiparametric inference to deal with non-monotone missing data \citep{tsiatis2006semiparametric}. We will show that the proposed estimator belongs to the class of Augmented Inverse Probability Weighted (AIPW) estimators, building on a prior work under MCAR with only outcomes subject to missingness~\citep{gronsbell2024another}. Second, we provide a theoretical guarantee that the resulting PS-PPI estimator is at least as efficient as the weighted complete-case analysis (WCCA) even when the imputation models are of low quality. Third, our proposed PS-PPI procedure is straightforward to implement. It works by breaking the analysis into several weighted complete-case (WCC) Z-estimation problems, each of which can be solved with existing statistical packages designed for WCC analysis.

The remainder of the paper is organized as follows. In Section~\ref{sec:preliminary}, we introduce necessary notation, the concepts of missing data mechanisms and missingness patterns, and the inferential target under the Z-estimation framework; we review the core idea used in the existing prediction-based inference literature that deals with missing outcome or missing multiple covariates, along with their limitations. In Section~\ref{sec:methodology}, we introduce the details of the proposed PS-PPI procedure and present theoretical results on its asymptotic properties; we provide examples to show that our proposed framework subsumes previous works as special cases. In Section~\ref{sec:experiments}, we perform data simulation to verify the validity and superior performance of our proposed PS-PPI relative to alternatives that have limited ability to extract and integrate information from incomplete datasets with multiple different missingness patterns. We provide a real data example in Section~\ref{sec:data_example} to illustrate the practical utility of the PS-PPI method. Finally, we discuss the limitations of this work and potential future directions in Section~\ref{sec:discussion}.

\section{Setup}
\label{sec:preliminary}

\subsection{Notation}


Let $\mathcal{D} = \{\boldsymbol{Y}_{1}, \ldots, \boldsymbol{Y}_{N}\}$ denote the full data for $N$ observations, subject to potential missingness, where $\boldsymbol{Y}_{i} = (Y_{i1}, \ldots, Y_{ip})^\top \in \cY = \cY_1 \times \cdots \times \cY_p$ represents the $p$-dimensional data for subject $i$, $\cY_j\subseteq \b{R}$ is the marginal sample space of the $j$-th component in $\vY_i$. Each $\boldsymbol{Y}_{i}$ is assumed to be independently and identically sampled from a population distribution $\b{P}_{\vY}$. In practice, three types of variables may be included in the vector $\bY_i$: (a) a variable treated as the scalar outcome in a specific scientific inquiry; (b)  $q - 1$ variables treated as covariates (e.g., main exposure and potential confounders); (c) $p-q$  auxiliary variables not included in the scientific analysis but potentially useful to predict variables in type (a) and (b). In a real data analysis, one or more variables of type (b) may not be subject to missingness, which can be used along with type (c) variables as inputs to ML prediction models for imputations. We partition $\vY_i$ into three types of variables for ease of presentation; this partition does not impact the validity of our theoretical results. For readers who require this distinction in their scientific inquiries (e.g., via a regression analysis), one may assume that $Y_{i1}$ and \{$Y_{i2}, \ldots, Y_{iq}$\} represent the outcome and the covariates, respectively. 



We assume that the analyst may not observe the full data $\cD$. Let $\boldsymbol{R}_i = (R_{i1}, \ldots, R_{ip})^\top \\ \in \{0, 1\}^p$ be a binary vector of $p$ dimension, where $R_{ij} = 1$ indicates that $Y_{ij}$ is observed and $R_{ij} = 0$ otherwise. Consequently, we denote the observed components $\vY_{i, \obs} = \{Y_{ij} | R_{ij} = 1, j = 1, \ldots, p \} \in \prod_{\{j:R_{ij}=1\}} \cY_j$, and similarly the missing components $\vY_{i, \mis} = \{Y_{ij} | R_{ij} = 0, j = 1, \ldots, p \}  \in \prod_{\{j:R_{ij}=0\}} \cY_j$, for subject $i=1, \ldots, N$. Let $\cD_{\obs} = \{\vY_{1, \obs}, ..., \vY_{N, \obs}\}$ denote the observed data.




\subsubsection{Missingness patterns and mechanisms}
\label{sec:missing_patterns}

In the existing literature on prediction-based inference, typically only the outcome variable is subject to potential missingness \citep[e.g.,][]{angelopoulos2023prediction, angelopoulos2023PPI++, miao2023assumption, gan2023prediction, gronsbell2024another}, with the exception of ~\citet{kluger2025prediction}, which focused on the missing-multiple-covariate pattern. However, in practice, missingness can occur anywhere within the dataset, e.g., for both outcomes and covariates under the regression setting. 

To address a more general missingness structure, we use the concept of missingness patterns and formalize the general missingness pattern problem by treating each missingness pattern as a special case of \textit{coarsening}~\citep{heitjan1991ignorability}, following~\cite{tsiatis2006semiparametric}. A missingness pattern is defined as an element in the sample space $\{0, 1\}^p$ of the missing indicator vector $\bR_i$. Excluding the case where all variables are missing, there are at most $2^p - 1$ different missingness patterns appearing in the data. In the context of prediction-based inference, it is generally assumed that there exist fully observed, low-cost variables that serve as inputs for predicting the missing values (i.e., type (c) variables). Under this assumption, the number of distinct missingness patterns is reduced to at most $2^q$. Each of the missingness patterns can be indexed with $k \in \{1, ..., K\}$; we define $k = \infty$ to indicate that the subject is fully observed. The $k$-th missingness pattern is denoted as $\bR^{(k)}=(R^{(k)}_1, \ldots, R^{(k)}_p) \in \{0, 1\}^p$. We call the $i$-th subject $\vY_i$ to be in the $k$-th stratum of missingness pattern if $\bR_i = \bR^{(k)}$. Following~\citet{tsiatis2006semiparametric}, for the $k$-th missingness pattern, we introduce a deterministic and many-to-one coarsening function $G_k: \cY \mapsto \prod_{\{j: R^{(k)}_j = 1\}} \cY_j$, such that $G_k(\vY_i) = \vY_{i, \obs}$ if $\vY_i$ is under the $k$-th missingness pattern. For fully observed data, $G_\infty(\vY_i)$ is an identity mapping. Note that the coarsening notation is borrowed only to emphasize the constructive nature of our proposed method, which combines pattern-stratified estimates (see Equation (\ref{eq:general_missing_pattern_estimator})); the prediction-based inference for general coarsened data where $G_k$ is broader than entry subset selection is beyond the scope of this work. With the notation of the coarsening functions and missingness patterns, we can represent the observed dataset as $\cD_{\obs} = \{(G_{\cC_i}(\boldsymbol{Y}_i), \cC_i)\}^N_{i=1}$, where $\cC_i \in \{1, ..., K, \infty\}$ is the coarsening index for subject $i$.

Missing data mechanisms play an important role in dictating how researchers should deal with the missingness in data analyses~\citep{rubin1976inference}. Here we provide a brief overview for completeness and refer readers to \citet{little2019statistical} for a comprehensive, book-length treatment of this topic. Based on how the conditional distribution of the coarsening index $\cC_i$ given $\boldsymbol{Y}_i$ can be simplified, a few general categories with distinct analytic implications can be defined \citep{tsiatis2006semiparametric}. In the simplest case with the most restrictive assumption of missing completely at random (MCAR), the probability distribution of the coarsening index $\cC_i$ does not depend on any observed or unobserved variables, namely, $\pi(\cC_i = k| \boldsymbol{Y}_i; \vomega) = \pi(\cC_i; \vomega)$
where $\vomega$ is a vector of parameters of the conditional distribution.  A more common and realistic case compared to MCAR is that the probability of a missingness pattern only depends on observed variables: $\pi(\cC_i | \boldsymbol{Y}_i; \vomega) = \pi(\cC_i | \boldsymbol{Y}_{i, \obs}; \vomega)$, referred to as  missing at random (MAR). If the missingness depends on the missing values themselves: $\pi(\cC_i | \boldsymbol{Y}_i; \vomega) = \pi(\cC_i | \boldsymbol{Y}_{i, \obs}, \boldsymbol{Y}_{i, \mis}; \vomega),$
it is referred to as missing not at random (MNAR). In practice, the MAR assumption is more commonly considered because it does not suffer from the identifiability problems that exist in the MNAR assumption, yet it is weaker than the MCAR assumption~\citep{tsiatis2006semiparametric}. In this paper, we will introduce and provide theoretical analyses of our methods under the MAR assumption. 

\subsubsection{Prediction of missing or masked values}


For each missingness pattern $k=1,\ldots, K$, suppose we have access to a prediction function $h_k$ trained on data external to $\cD_{\text{obs}}$ where $h_k: \prod_{\{j:\bR^{(k)}_j=1\}} \cY_j \mapsto \prod_{\{j:\bR^{(k)}_j=0\}} \cY_j$. A key ingredient in our proposed method is the following ``mask-and-impute'' operation: in addition to filling the missing values for the subjects in the stratum of the $k$-th missingness pattern using $h_k$, we can also mask the entries for any fully-observed subject according to the missingness pattern $k$ (i.e., $\{Y_{ij}: j\in \{j': R_{j'}^{(k)}=0\}\}$ for subject $i$ with $\cC_i =\infty$), and then use the same $h_k$ to obtain their predictions to impute these ``pseudo'' missing entries. 

In fact, to study prediction-based inference, we will show in our theoretical analyses that it is central to generalize this notion of data masking-and-imputation by missingness pattern $k$ from the fully observed subjects ($\cC_i=\infty$) to the entire population. For each missingness pattern $k=1, \ldots, K$, let the random vector $\vY^{(k)}_{\text{imp}} = \left \{G_{k}(\vY), h_{k}\left (G_{k}(\vY) \right ) \right \}$ be comprised of observed components $G_{k}(\vY)$ and masked-and-imputed components $h_{k}\left (G_{k}(\vY)\right)$, where $\vY$ follows the population \textit{full-data} distribution $\b{P}_{\vY}$. We assume that $\vY^{(k)}_{\text{imp}}$ is distributed according to a population-level distribution $\b{P}^{(k)}_{\vY_{\text{imp}}}$. We note two key points. First, $h_k$ can be deterministic or stochastic as long as there exists a fixed population distribution $\b{P}^{(k)}_{\vY_{\text{imp}}}$ which we will assume in theoretical analyses (Section \ref{sec:methodology}). This assumption is less tenable if the prediction mappings $\{h_k\}$ shift over time, e.g., due to model updates that are continually pushed into online repositories to progressively impute different subsets of data. Second, the current setup does not require distinct prediction models to be trained for each missingness pattern. In practice, the predictions can be accomplished by using at most $q$ univariate base prediction models for type (a) and (b) variables, whenever a missing value is encountered. In this sense, $h_k$ can be viewed as invoking a subset of $\sum_{j}(1-R^{(k)}_j)$ of the $q$ base prediction models for missingness pattern $k$. For example, the model for predicting diabetes status using BMI only can be used for any missingness pattern $k$ with observed BMI and missing diabetes status. As a result, although the number of missingness patterns may grow exponentially with the number of variables $p$, the number of base prediction models required increases only linearly.

\subsection{Inferential target}

In this paper, we focus on obtaining valid statistical inference on a $d$-dimensional parameter $\vtheta$ in a Z-estimation framework based on a dataset with general missingness patterns under the MAR assumption. Z-estimation is a popular framework for defining estimands and studying properties of estimators in statistical literature. Several most commonly used statistical methods, such as generalized linear model and generalized estimating equations, can be understood via the Z-estimation framework. Formally, define the true population parameter $\vtheta^\ast$ as the unique solution to
$$\E_{\b{P}_{\vY}} [\vpsi(\boldsymbol{Y}_i; \vtheta)] = \boldsymbol{0},$$
where $\vpsi$ are known vector-valued maps to a typically $d$-dimensional space, as defined in~\cite{van2000asymptotic} and $\vtheta \in \R^d$ is the parameter attached to the population distribution $\b{P}_{\vY}$. For independent and identically distributed, fully observed data $\cD$ of size $N$, a Z-estimator $\widehat{\vtheta}_{\textrm{Full}}$ is obtained by solving the following empirical estimating equations: 
\begin{equation}
    \label{eq:full_data_estimating_equations}
    \frac{1}{N}\sum^N_{i=1} \vpsi(\vY_i; \vtheta) = \vzero.
\end{equation}

Under certain regularity conditions introduced in~\citet{van2000asymptotic} and detailed in Section~\ref{appendix:proof:thrm1} in the Supplementary Materials, $\widehat{\vtheta}_{\text{Full}}$ is asymptotically normal with mean $\vtheta^\ast$ and variance $\dot{\vpsi}^{-1}_{\textrm{Full}} \E[\vpsi^*\vpsi^{*\top}]\left(\dot{\vpsi}^{-1}_{\textrm{Full}}\right)^{\top}$, where $\dot{\vpsi}_{\text{Full}} = \E \left [ \frac{\partial \vpsi}{\partial \vtheta} \Bigr|_{\vtheta=\vtheta^\ast} \right ]$ is an invertible $d\times d$ matrix and $\vpsi^* = \vpsi(\boldsymbol{Y}_i; \vtheta^\ast)$ with $\E[\vpsi^*\vpsi^{*\top}]<\infty$.

\subsection{Rationale of prediction-based inference}

In the presence of missing values, we will propose a general approach and provide the theoretical guarantee of valid inference in Section \ref{sec:methodology}. The proposed approach couples the insights from the original outcome-based prediction-powered inference (PPI) framework \citep{angelopoulos2023prediction,angelopoulos2023PPI++} and the classical augmented
inverse probability weighted (AIPW) estimators, of which outcome-based PPI is a special case \citep{gronsbell2024another}. It is useful to review the rationale behind the original outcome-based PPI approach, which is designed to incorporate additional information from observations with missing outcomes and the predicted outcomes. Technically, they can be written as a difference between a base estimate (CCA estimate) $\widehat{\vtheta}_{\text{CC}}$ and an estimate of zero $\widehat{\vDelta}$:\footnote{The notation of $\widehat{\vDelta}$ in this paper differs from the one used in~\cite{angelopoulos2023prediction} which denotes a rectifier to correct the bias introduced by ML predictions. The current form, in our view, is more helpful for motivating the construction of more efficient estimators and for theoretical analyses.}
$$\widehat{\vtheta}_{\text{PP}} = \widehat{\vtheta}_{\text{CC}} - \widehat{\vDelta},$$
which will be more efficient than $\widehat{\vtheta}_{\text{CC}}$ while preserving consistency when $2\Cov(\widehat{\vtheta}_{\text{CC}},\widehat{\vDelta})-\Var(\widehat{\vDelta})$ is positive definite.

\begin{example}[PPI for mean outcome estimation] \label{example:mean}
\end{example}
\vspace{-1em}
We illustrate with a simple example based on population mean estimation of a scalar outcome $Y$ in the presence of outcome missingness introduced in \citet{angelopoulos2023prediction}. Suppose the data are independently and identically distributed according to a population distribution comprised of two partitions of size $m$ and $M$ with marginal mean outcome $\mu=\E[Y]$. For the first partition, we observe the outcome $Y^{\tt lab}_i$ and their corresponding ML-predicted alternatives $\widehat{Y}^{\tt lab}_i$: $\cD_{\tt lab} = \{(Y^{\tt lab}_i, \widehat{Y}^{\tt lab}_i)\}^m_{i=1}$, which is referred to as the ``labeled subset''. For the second, often larger, partition of data, we do not get to observe the outcome but can obtain their ML-predicted surrogates, i.e., $\cD_{\tt unlab} = \{\widehat{Y}^{\tt unlab}_i\}^M_{i=1}$, referred to as the ``unlabeled subset''. We assume the outcome is MCAR, i.e., the full data distribution of $Y$ between the two partitions is identical. For both partitions, the predictions are obtained from other observed covariates and prediction models that have been externally trained to predict $Y$ using these covariates. As such, the pair $(Y,\widehat{Y})$ has the same distribution between the labeled and unlabeled partitions, although $Y$ is missing for the unlabeled subset.

In PPI, the CCA estimator for $\mu$ is the empirical mean based on the complete subset of data: $\widehat{\theta}_{\text{CC}} = \bar{Y}^{\tt lab}$, and the corresponding estimation of zero is computed as $\widehat{\Delta} = \bar{\widehat{Y}}^{\tt lab} - \bar{\widehat{Y}}^{\tt unlab}$ because the predicted outcomes in the labeled and unlabeled partitions share the same distribution under MCAR. By rewriting $\widehat{\theta}_{\text{PP}} = \widehat{\theta}_{\text{CC}}-\widehat{\Delta}= \bar{Y}^{\tt lab}-(\bar{\widehat{Y}}^{\tt lab}-\bar{\widehat{Y}}^{\tt unlab})= \bar{\widehat{Y}}^{\tt unlab}-\underbrace{(\bar{\widehat{Y}}^{\tt lab}-\bar{Y}^{\tt lab})}_{\tt ``rectifier"}$, we have
$$\Var(\widehat{\theta}_{\text{PP}}) = \frac{1}{M}\Var(\widehat{Y}^{\tt unlab}_i) + \frac{1}{m}\Var(Y^{\tt lab}_i - \widehat{Y}^{\tt lab}_i),$$
where the denominator of the first term is determined by the sample size of the unlabeled subset $M$, while the numerator of the second term is determined by the accuracy of the outcome prediction. $\widehat{\theta}_{\text{PP}}$ will have a small variance when $M \gg m$ to let the second term dominate, and the machine learning predictions are precise enough such that the variance in the second term is small. To see this, when $M\gg m$, we will have $\Var(\widehat{\theta}_{\text{PP}})\approx \frac{1}{m}\Var(Y^{\tt lab}_i - \widehat{Y}^{\tt lab}_i)<\Var(\widehat{\theta}_{\text{CC}})=\frac{1}{m}\Var(Y_i^{\tt lab})$ when the predictions are of good quality, or more precisely when the correlation satisfies $\text{Corr}(Y,\widehat{Y})>\frac{1}{2}\sqrt{\Var(Y)/\Var(\widehat{Y})}$.

However, it is critical to understand when $\widehat{\vtheta}_{\text{PP}}$ is more efficient than the complete-case analysis in settings other than the mean estimation problem.~\citet{gronsbell2024another} analyzed the efficiency of the $\widehat{\vtheta}_{\text{PP}}$ by deriving its influence function under the linear regression setting, and connecting it to the class of AIPW estimators. Specifically,~\citet{gronsbell2024another} shows that $\widehat{\vtheta}_{\text{PP}}$ is an AIPW estimator and is not the most efficient in its class. Although the optimal estimator is difficult to obtain, a weighted version of the PPI estimator was proposed to guarantee the efficiency improvement compared to the complete-case analysis~\citep{gronsbell2024another}. The estimator is of the form $\widehat{\vtheta}_{\text{Chen}} = \widehat{\vtheta}_{\text{CC}} - \widehat{\Wb}_{\text{Chen}}\widehat{\vDelta}$, where $\widehat{\Wb}_{\text{Chen}}$ is a tuning parameter estimated from the data, originally proposed in ~\citet{chen2000unified} under a different double-sampling application context but the same setup. Furthermore, recent modifications to the $\widehat{\vtheta}_{\text{PP}}$ all consider multiplying $\widehat{\vDelta}$ with a data-driven weight to seek efficiency improvement over
$\widehat{\vtheta}_{\text{CC}}$~\citep{angelopoulos2023PPI++, miao2023assumption, gan2024prediction}.

\subsection{Limitations of existing prediction-based inference methods}
We now discuss the limitations of the existing methods, which we will address. First, existing prediction-based inference methods do not account for the missing-at-random mechanism. For example, in mean estimation, suppose $Y$ is the annual income response in a questionnaire. An auxiliary variable, the level of education, affects the missing probability of $Y$, such that respondents with a lower level of education tend to skip the response to the income question, leaving their $Y$ missing. In such a case, the CCA estimator $\widehat{\theta}_{\text{CC}}$ itself is biased, because it can only reflect the income levels of the population with higher education levels. Second, existing methods only consider missing outcome or missing-multiple-covariate scenarios, although in practice, general missingness patterns commonly appear. For example, in responding to a questionnaire, respondents may randomly skip multiple questions, leading to non-monotone missingness in the resulting dataset; in a longitudinal study, participants may skip a visit and return in the next visit~\citep{sun2018inverse}. In such cases, existing prediction-based inference methods can only be carried out by removing the incomplete covariates, or truncating a longitudinal trajectory up to when the first missingness occurred. 


\section{Pattern-Stratified Prediction-Powered Inference}
\label{sec:methodology}





\subsection{Method}

\begin{figure}[!t]
    \centering
    \includegraphics[width=1\linewidth]{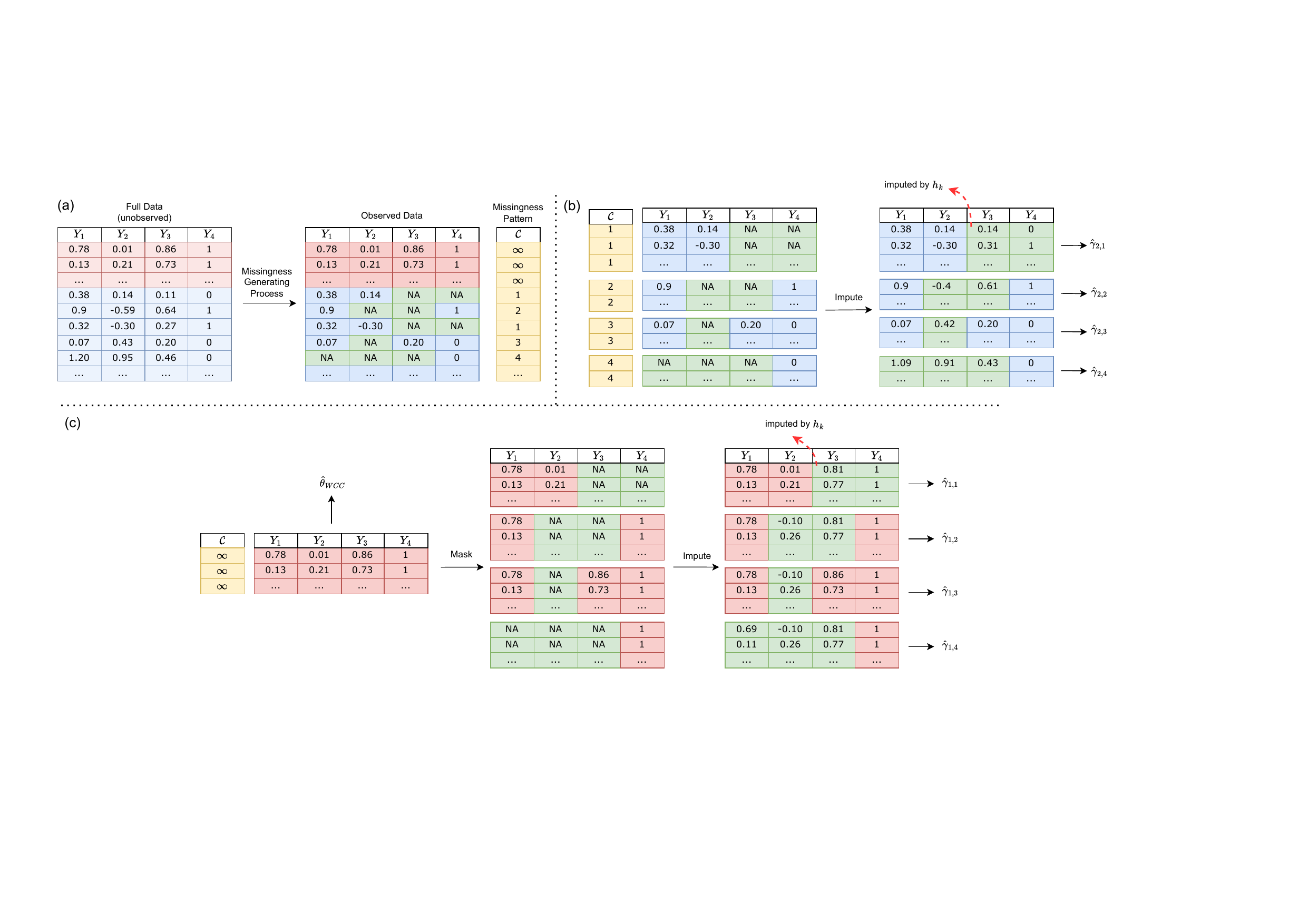}
    \caption{\small Proposed statistical procedure for ensuring valid prediction-based inference in datasets with general missingness patterns under the missing at random (MAR) assumption. (a) We observe a set of samples with missing values. We assign each missingness pattern with an index; (b) For the $k$-th missingness pattern, we collect the samples with $\cC_i = k$, and impute the missing values with ML predictions. Weighted complete-case analysis is then conducted on the imputed dataset to obtain $\widehat{\vgamma}_{2, k}$; (c) For the completely observed samples, we obtain the weighted analysis estimates $\widehat{\vtheta}_{\text{WCC}}$. For each $k$-th missingness pattern, we mask the observed values from the fully observed samples to construct each missingness pattern. We then impute the masked values with machine learning predictions and conduct weighted complete-case analysis to obtain $\widehat{\vgamma}_{1, k}$. $\hat{\Delta}_k=\hat{\gamma}_{1,k}-\hat{\gamma}_{2,k}$ converges to zero in probability (implication of Proposition \ref{proposition:gamma_1k_gamma_2k}).}
    \label{fig:general_missing_pattern}
\end{figure}

The PS-PPI estimator is defined as the difference between the WCCA estimator $\widehat{\vtheta}_{\text{WCC}}$, computed from the fully observed subset, and a weighted summation of $K$ estimators of zero $\widehat{\vDelta}_k$ derived from each missingness pattern:
\begin{equation}
    \label{eq:general_missing_pattern_estimator}
    \widehat{\vtheta}_{\text{PS-PPI}} = \widehat{\vtheta}_{\text{WCC}} - \sum^{K}_{k = 1}\widehat{\Wb}_k \widehat{\vDelta}_k.
\end{equation}

We introduce each term in (\ref{eq:general_missing_pattern_estimator}) in turn. First, $\widehat{\vtheta}_{\text{WCC}}$ is the WCCA estimator obtained from the following weighted estimating equations:
    \begin{equation}
    \label{eq:general_missing_pattern_ee}
        \sum_{i = 1}^{N} \vpsi^{(i)}_{\text{WCC}} = \sum_{i = 1}^{N} \frac{I(\cC_i = \infty)}{\widehat{\pi}_\infty(\boldsymbol{Y}_i)} \vpsi \left (\boldsymbol{Y}_i; \vtheta \right ) = \vzero,
    \end{equation}
where $I(A)=1$ if the statement $A$ is true and $I(A)=0$ otherwise. Unlike the empirical estimating equations for the full, complete data in Equation (\ref{eq:full_data_estimating_equations}), an estimated propensity score model $\widehat{\bpi}(\cdot)=\{\widehat{\pi}_k(\cdot),k=1,\ldots, K, \infty\} \in \R^{K + 1}$ parameterized by $\vomega \in \R^{d_{\pi}}$ is introduced in Equation (\ref{eq:general_missing_pattern_ee}) to account for the missing mechanism of the MAR. Specifically, the $k$-th element in $\widehat{\bpi}(\cdot)$, i.e., $\widehat{\pi}_{k}(\cdot)$ represents the estimated probability of $k$-th missingness patterns, and similarly for the pattern $\cC = \infty$; we have $\mathbf{1}^\top \widehat{\bpi}(\cdot) = 1$. The form of propensity score models has been well studied in the semiparametric and missing data literature. Specifically, ~\citet{robins1994estimation, tsiatis2006semiparametric} propose to model the monotone missingness patterns with a discrete hazard function; ~\citet{sun2018inverse} models each missingness pattern with a submodel to deal with the non-monotone missingness pattern setting. Nevertheless, the identification and estimation of such propensity score models remain challenging, especially when multiple missingness patterns occur. Because under MAR, different missingness patterns may depend on different subsets of variables in the dataset. Investigation into the identification and estimation of the propensity score models is beyond the main scope of this paper. We leave empirical investigation of PS-PPI against propensity score models misspecification in Section~\ref{sec:experiments}, and discuss challenges in building propensity score models in Section~\ref{sec:discussion}. In this work, we follow \citet{sun2018inverse} and assume that the propensity score model is correctly specified when it is unknown. For the underlying true propensity score model, we assume that $\pi_\infty(\cdot)$ is bounded away from 0. Note that assuming $\pi_k(\cdot), k \in \{1, \cdots, K\}$ to be bounded away from 0 is technically trivial since if, for example, $k$-th missingness pattern has $\pi_k(\cdot) = 0$ given any $G_k(\vY_i)$, we will not observe such missingness pattern and will not include it in the estimating procedure. Furthermore, we assume the estimation of $\vomega$ can be formulated as obtaining $\widehat{\vomega}$ by solving
\begin{equation}
\label{eq:omega_ee}
    \sum^N_{i=1} \vm^{(i)} = \sum^N_{i=1} \vm(G_{\cC_i}(\vY_i); \vomega) = \vzero,
\end{equation}
where $\vm$ is a vector-valued function that has the same dimension as $\vomega$ and satisfies regularity conditions detailed in Section~\ref{appendix:proof:thrm1} of the Supplementary Materials.

Second, $\widehat{\vDelta}_k$ is an estimate of zero obtained from the $k$-th missingness pattern. Specifically, we define $\widehat{\vDelta}_k = \widehat{\vgamma}_{1, k} - \widehat{\vgamma}_{2, k}$, where $\widehat{\vgamma}_{1, k}$ and $\widehat{\vgamma}_{2, k}$ are the solutions to the weighted estimating equations:
\begin{equation}
    \label{eq:general_missing_pattern_ee_gamma_1}
    \sum_{i = 1}^{N} \vpsi^{(i)}_{1,k} = \sum^{N}_{i= 1}  \frac{I(\cC_i = \infty)}{\widehat{\pi}_\infty(\boldsymbol{Y}_i)}  \vpsi \left ( \vY^{(k)}_{\text{imp}, i}; \vgamma_{1, k} \right ) = \vzero,
\end{equation}
and
\begin{equation}
    \label{eq:general_missing_pattern_ee_gamma_2}
    \sum_{i = 1}^{N} \vpsi^{(i)}_{2,k} = \sum^{N}_{i=1} \frac{I(\cC_i = k)}{\widehat{\pi}_k(G_{k}(\boldsymbol{Y}_i))} \vpsi \left ( \vY^{(k)}_{\text{imp}, i}; \vgamma_{2, k} \right ) = \vzero,
\end{equation}
respectively. That is, we first compute $\widehat{\vgamma}_{1,k}$ by substituting the values in the missing entries with predictions generated from $h_k$, and solve the weighted estimating equations based on complete-case data. Second, we compute $\widehat{\vgamma}_{2,k}$ by imputing the actual missing entries with the predictions from $h_k$ using data with the $k$-th missingness pattern. We will show that this construction leads to a consistent and asymptotically normal estimator of zero; as shown in Proposition~\ref{proposition:gamma_1k_gamma_2k}, the population parameter of the estimating equations $\vpsi_{1, k}$ and $\vpsi_{2, k}$ are both $\vgamma^\ast_k$. We need the following assumption to establish the proposition.
\begin{assumption}[Z-estimation target parameter under population-level mask-and-impute distributions]
\label{assumption:gamma_existence}
    There exists a $\vgamma^\ast_k$ such that it is the unique solution of $$\E_{\vY} [\vpsi( \vY^{(k)}_{\text{imp}, i}; \vgamma)] = \boldsymbol{0}.$$
\end{assumption}
\begin{proposition}[Estimation of zero]
\label{proposition:gamma_1k_gamma_2k}
Under Assumption \ref{assumption:gamma_existence}, $(\vomega^\ast, \vgamma^\ast_k)$ will lead to both $\E \vpsi_{1,k} = \vzero$ and $\E \vpsi_{2,k} = \vzero$, where the expectations are taken over the full data distribution $\b{P}_{\vY}$.
\end{proposition}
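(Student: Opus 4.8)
The plan is to prove both moment identities by the same mechanism: condition on the full-data vector $\vY_i$ and exploit the fact that each inverse-propensity weight is built precisely to undo the selection induced by the coarsening indicator, leaving a clean population-level moment that vanishes by Assumption~\ref{assumption:gamma_existence}. Throughout I evaluate the propensity model at the true parameter $\vomega^\ast$, so that $\widehat{\pi}_\infty$ and $\widehat{\pi}_k$ coincide with the true probabilities $\pi_\infty$ and $\pi_k$, and I take all expectations over the full-data law $\b{P}_{\vY}$ (with the mask-and-impute randomness, when $h_k$ is stochastic, absorbed into the fixed distribution $\b{P}^{(k)}_{\vY_{\text{imp}}}$ assumed in Section~\ref{sec:methodology}). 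Here $\E\,\vpsi_{1,k}$ and $\E\,\vpsi_{2,k}$ denote the expectation of a single summand $\vpsi^{(i)}_{1,k}$ and $\vpsi^{(i)}_{2,k}$, which is identical across $i$ by the i.i.d.\ assumption.

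For $\E\,\vpsi_{1,k}$, I first note that for a fully observed subject ($\cC_i=\infty$) the entire $\vY_i$ is available, so the masked-and-imputed vector $\vY^{(k)}_{\text{imp},i}=\{G_k(\vY_i),h_k(G_k(\vY_i))\}$ is a genuine function of $\vY_i$ and coincides with the population construction in Assumption~\ref{assumption:gamma_existence}. Conditioning on $\vY_i$, the only stochastic factor in $\vpsi^{(i)}_{1,k}$ is the indicator $I(\cC_i=\infty)$, whose conditional mean is $\pi_\infty(\vY_i)$ by the definition of the propensity. Hence
\begin{equation*}
\E\,\vpsi^{(i)}_{1,k}=\E_{\vY_i}\!\left[\frac{\pi_\infty(\vY_i)}{\pi_\infty(\vY_i)}\,\vpsi\!\left(\vY^{(k)}_{\text{imp},i};\vgamma^\ast_k\right)\right]=\E_{\vY}\!\left[\vpsi\!\left(\vY^{(k)}_{\text{imp}};\vgamma^\ast_k\right)\right]=\vzero,
\end{equation*}
where the last equality is exactly Assumption~\ref{assumption:gamma_existence}.

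For $\E\,\vpsi_{2,k}$ the same cancellation must go through, but now the weight uses $\pi_k(G_k(\vY_i))$, a function of the observed part only, while the indicator $I(\cC_i=k)$ refers to genuine pattern-$k$ subjects. The decisive step is the MAR (coarsening-at-random) assumption, which I read as $\Pr(\cC_i=k\mid\vY_i)=\pi_k(G_k(\vY_i))$: the chance of landing in pattern $k$ depends on $\vY_i$ only through the components $G_k(\vY_i)$ that pattern $k$ leaves observed. Since $\vY^{(k)}_{\text{imp},i}$ is itself a function of $G_k(\vY_i)$ alone, it is computable for every true pattern-$k$ subject and again matches the population construction. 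Conditioning on $\vY_i$ and substituting $\E[I(\cC_i=k)\mid\vY_i]=\pi_k(G_k(\vY_i))$ cancels the weight, giving $\E\,\vpsi^{(i)}_{2,k}=\E_{\vY}[\vpsi(\vY^{(k)}_{\text{imp}};\vgamma^\ast_k)]=\vzero$, again by Assumption~\ref{assumption:gamma_existence}.

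I expect the main obstacle to be the careful formalization of the MAR assumption in the coarsening language so that the conditional expectation of the selection indicator matches the denominator of the weight. The paper writes MAR as $\pi(\cC_i\mid\vY_i)=\pi(\cC_i\mid\vY_{i,\obs})$, but $\vY_{i,\obs}$ is itself index-dependent; the argument only closes once this is pinned to the pattern-specific statement $\Pr(\cC_i=k\mid\vY_i)=\pi_k(G_k(\vY_i))$. A secondary point to handle cleanly is the stochastic-$h_k$ case: I would condition jointly on $\vY_i$ and the imputation noise, using the assumed independence of that noise from the coarsening mechanism so that $\E[I(\cC_i=k)\mid\vY_i]$ is unaffected, and then let the outer expectation over the noise reproduce the fixed law $\b{P}^{(k)}_{\vY_{\text{imp}}}$ underlying Assumption~\ref{assumption:gamma_existence}. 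The fact that the two constructions---masking a complete case versus imputing a genuine pattern-$k$ case---are literally the same function of $G_k(\vY)$ is precisely what lets a single assumption close both moment equations simultaneously.
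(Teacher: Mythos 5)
Your proof is correct and follows essentially the same route as the paper's: condition on $\vY_i$ via iterated expectations, use that the conditional mean of the selection indicator cancels the inverse-propensity weight (with MAR ensuring $\Pr(\cC_i = k \mid \vY_i) = \pi_k(G_k(\vY_i))$ for the second identity), and then invoke Assumption~\ref{assumption:gamma_existence}. If anything, you are more explicit than the paper about where MAR enters and how the stochastic-$h_k$ case is absorbed into $\b{P}^{(k)}_{\vY_{\text{imp}}}$, both of which the paper's proof leaves implicit.
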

The proof is detailed in Section~\ref{appendix:proof:gamma_1k_gamma_2k} of the Supplementary Materials. The intuition behind Proposition~\ref{proposition:gamma_1k_gamma_2k} is that, after accounting for the missing data mechanism, the ``pseudo'' imputed dataset derived from the complete-case data in Equation (\ref{eq:general_missing_pattern_ee_gamma_1}) is distributed as $\b{P}^{(k)}_{\vY_{\text{imp}}}$, which is identical to the distribution of the imputed dataset based on the subjects with the $k$-th missingness pattern in Equation (\ref{eq:general_missing_pattern_ee_gamma_2}).

Finally, $\{\widehat{\Wb}_k\}$ is a set of tuning parameters in the form of weight matrices with dimension $d\times d$. $\{\widehat{\Wb}_k\}$ may depend on the observed data. In Corollary~\ref{corollary:best_tuning_parameter}, we provide a specific construction that guarantees efficiency gains over WCCA.

\paragraph{Asymptotic properties of $\widehat{\vtheta}_{\text{PS-PPI}}$.} 
We show that $\widehat{\vtheta}_{\text{PS-PPI}}$ is asymptotically normal. 


\begin{theorem}[Asymptotic normality of $\widehat{\vtheta}_{\text{PS-PPI}}$]
\label{thrm:general_missing_patterns}
     Suppose $\widehat{\pi}(\cdot)$ is correctly specified and estimated from Equation (\ref{eq:omega_ee}), $\widehat{
    \Wb}_k \overset{p}{\rightarrow} \Wb_k$, where $\overset{p}{\rightarrow}$ represents convergence in probability, and $\Wb_k$s are constant matrices. Under Assumption~\ref{assumption:gamma_existence}, Assumption~\ref{assumption:main} in the Supplementary Materials, and the MAR assumption, we have that $\widehat{\vtheta}_{\text{PS-PPI}}$ is asymptotically normal with mean $\vtheta^\ast$ and variance-covariance matrix detailed in Equation (\ref{eq:full_asymptotic_variance}) in the Supplementary Materials.
    

\end{theorem}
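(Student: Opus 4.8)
The plan is to recast the whole procedure as a single stacked Z-estimation problem and then read off the influence function of $\widehat{\vtheta}_{\text{PS-PPI}}$ from it. Collect all parameters into one vector $\bxi = (\vomega^\top, \vtheta^\top, \vgamma_{1,1}^\top, \ldots, \vgamma_{1,K}^\top, \vgamma_{2,1}^\top, \ldots, \vgamma_{2,K}^\top)^\top$, with per-observation stacked score $\bPsi_i(\bxi)$ obtained by stacking $\vm^{(i)}$ from Equation (\ref{eq:omega_ee}), $\vpsi^{(i)}_{\text{WCC}}$ from Equation (\ref{eq:general_missing_pattern_ee}), and $\vpsi^{(i)}_{1,k}, \vpsi^{(i)}_{2,k}$ from Equations (\ref{eq:general_missing_pattern_ee_gamma_1})--(\ref{eq:general_missing_pattern_ee_gamma_2}). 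First I would verify that $\E[\bPsi_i(\bxi^\ast)] = \vzero$ at the true value: the $\vm$-block is mean zero by definition of $\vomega^\ast$, the $\vpsi_{1,k}$ and $\vpsi_{2,k}$ blocks are mean zero by Proposition~\ref{proposition:gamma_1k_gamma_2k}, and the $\vpsi_{\text{WCC}}$ block is mean zero because, under MAR with a correctly specified propensity, $\E[I(\cC_i=\infty)/\pi_\infty(\vY_i)\mid\vY_i]=1$, so the inverse-probability weight restores the full-data expectation $\E_{\b{P}_{\vY}}[\vpsi(\vY_i;\vtheta^\ast)]=\vzero$. Consistency of $\widehat{\bxi}$ for $\bxi^\ast$ then follows from the identification and regularity conditions of Assumption~\ref{assumption:main}; in particular $\widehat{\vDelta}_k = \widehat{\vgamma}_{1,k}-\widehat{\vgamma}_{2,k}\overset{p}{\rightarrow}\vgamma^\ast_k-\vgamma^\ast_k=\vzero$.

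Next I would establish joint asymptotic linearity. The key structural observation is that the expected Jacobian $\mathbb{A}=\E[\partial \bPsi_i(\bxi^\ast)/\partial\bxi]$ is block lower-triangular: the $\vm$-score depends only on $\vomega$, whereas every $\vpsi$-block depends on $\vomega$ through the weights but the blocks for $\vtheta$, $\{\vgamma_{1,k}\}$, $\{\vgamma_{2,k}\}$ are mutually decoupled. Invertibility of $\mathbb{A}$ therefore reduces to invertibility of the diagonal blocks $\dot{\vpsi}_{\text{WCC}}$, $\dot{\vpsi}_{1,k}$, $\dot{\vpsi}_{2,k}$ and $\partial\vm/\partial\vomega$, which Assumption~\ref{assumption:main} supplies. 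The standard Z-estimation expansion \citep[Theorem~5.21]{van2000asymptotic} then gives $\sqrt{N}(\widehat{\bxi}-\bxi^\ast) = -\mathbb{A}^{-1} N^{-1/2}\sum_i \bPsi_i(\bxi^\ast) + o_p(1)$, and inverting the triangular $\mathbb{A}$ produces, for each of $\widehat{\vtheta}_{\text{WCC}}$, $\widehat{\vgamma}_{1,k}$, $\widehat{\vgamma}_{2,k}$, an influence function equal to its ``own'' term $-\dot{\vpsi}^{-1}\vpsi$ plus a correction term $-\dot{\vpsi}^{-1}(\partial\vpsi/\partial\vomega)\,\mathrm{IF}_{\vomega}$ that propagates the estimation error of the shared propensity parameter.

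Finally I would assemble the target. Writing $\sqrt{N}(\widehat{\vtheta}_{\text{PS-PPI}}-\vtheta^\ast)=\sqrt{N}(\widehat{\vtheta}_{\text{WCC}}-\vtheta^\ast)-\sum_{k=1}^K \widehat{\Wb}_k\sqrt{N}\widehat{\vDelta}_k$, I note that $\widehat{\Wb}_k=\Wb_k+o_p(1)$ and $\sqrt{N}\widehat{\vDelta}_k=O_p(1)$, so $(\widehat{\Wb}_k-\Wb_k)\sqrt{N}\widehat{\vDelta}_k=o_p(1)$ and the data-driven weights may be replaced by their probability limits $\Wb_k$ at no first-order cost. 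Substituting the asymptotically linear representations from the previous step and collecting the per-observation contributions yields a single influence function $\widetilde{\vphi}_i$ that is a deterministic linear combination (through $\mathbb{A}^{-1}$ and the $\Wb_k$) of the component scores of observation $i$. Since the $\vY_i$ are i.i.d. and $\widetilde{\vphi}_i$ is mean zero with finite second moment under Assumption~\ref{assumption:main}, the Lindeberg--L\'evy CLT delivers $\sqrt{N}(\widehat{\vtheta}_{\text{PS-PPI}}-\vtheta^\ast)\xrightarrow{d}\cN(\vzero,\bSigma)$ with $\bSigma=\E[\widetilde{\vphi}_i\widetilde{\vphi}_i^\top]$, the expression reported in Equation (\ref{eq:full_asymptotic_variance}).

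I expect the main obstacle to be the careful bookkeeping of the shared-nuisance correction. Because $\widehat{\vtheta}_{\text{WCC}}$ and all the $\widehat{\vgamma}_{\cdot,k}$ are built from the same estimated $\widehat{\vomega}$, their influence functions share the $\mathrm{IF}_{\vomega}$ term, inducing cross-covariances that must be tracked exactly to recover the stated variance (and, downstream, to establish the efficiency-dominance corollary); treating the propensity weights as known rather than estimated would yield the wrong, and typically conservative, variance. The remaining work---interchanging differentiation and expectation, confirming nonsingularity of the Jacobian blocks, and verifying finiteness of second moments---is routine under Assumption~\ref{assumption:main}.
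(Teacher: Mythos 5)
Your proposal is correct and follows essentially the same route as the paper's proof: stacking the propensity-score equation with the WCC and pattern-specific estimating equations, invoking Theorem~5.21 of \citet{van2000asymptotic} under Assumption~\ref{assumption:main} for joint asymptotic linearity, exploiting the block-triangular Jacobian (your ordering of $\vomega$ first versus the paper's last is immaterial), replacing $\widehat{\Wb}_k$ by $\Wb_k$ via Slutsky, and reading off the influence function $\vGamma_i + \dot{\vGamma}_\vomega \Mb^{-1}\vm_i$ whose second moment gives Equation~(\ref{eq:full_asymptotic_variance}). Your emphasis on the shared $\widehat{\vomega}$-correction term propagating into all components is exactly the bookkeeping the paper carries out.
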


The proof is detailed in Section~\ref{appendix:proof:thrm1} in the Supplementary Materials. The asymptotic variance-covariance matrix accounts for the uncertainty of the estimated propensity score models. Following~\citet{sun2018inverse} and  \citet{robins1994estimation}, we provide a convenient robust sandwich estimator that ignores the uncertainty contributed by the estimated propensity score model. Such an estimator results in a slightly more conservative yet valid and implementable sandwich variance for $\widehat{\vtheta}_{\text{PS-PPI}}$ when the propensity scores are estimated; the form is shown in Equation (\ref{eq:general_missing_patterns_plusplus_variance}). Our numerical experiments in Section \ref{sec:experiments} validate the efficiency advantage of the PS-PPI estimator based on this strategy.

\begin{corollary}[Asymptotic variance-covariance matrix of $\widehat{\vtheta}_{\text{PS-PPI}}$]
\label{corollary:ignore_uncertainty}
    Ignoring the uncertainty from $\widehat{\pi}(\cdot)$, the asymptotic variance of $\widehat{\vtheta}_{\text{PS-PPI}}$ is
    \begin{equation}
    \label{eq:general_missing_patterns_plusplus_variance}
        \Sigma_{\text{PS-PPI}} = \Sigma_{\vtheta} - \sum^K_{k=1} \Wb_k \Sigma_{\vtheta, \vgamma_{1, k}} - \sum^K_{k=1} (\Sigma_{\vtheta, \vgamma_{1, k}})^\top \Wb_k^\top + \sum^K_{k=1} \sum^K_{k'=1} \Wb_k \Sigma_{\vgamma_{1,k}, \vgamma_{1,k'}} \Wb_{k'}^\top + \sum^K_{k=1} \Wb_k \Sigma_{\vgamma_{2, k}} \Wb_k^\top.
    \end{equation}
    Here, $\Sigma_{\vtheta}$ is the asymptotic variance of $\widehat{\vtheta}_{\text{WCC}}$. $\Sigma_{\vtheta, \vgamma_{1, k}}$ is the asymptotic covariance between $\widehat{\vtheta}_{\text{WCC}}$ and $\widehat{\vgamma}_{1,k}$. $\Sigma_{\vgamma_{1,k}, \vgamma_{1,k'}}$  is the asymptotic covariance between $\widehat{\vgamma}_{1,k}$ and $\widehat{\vgamma}_{1,k'}$; when $k = k'$, it is the asymptotic variance of $\widehat{\vgamma}_{1, k}$. $\Sigma_{\vgamma_{2, k}}$ is the asymptotic variance of $\widehat{\vgamma}_{2, k}$.
\end{corollary}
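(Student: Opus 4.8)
The plan is to treat Corollary~\ref{corollary:ignore_uncertainty} as the variance computation for a fixed linear combination of jointly asymptotically normal Z-estimators, under the simplification that the propensity score $\widehat{\pi}(\cdot)$ is held at its true value --- which is precisely what ``ignoring the uncertainty from $\widehat{\pi}$'' means. Concretely, I would first stack the $2K+1$ estimating equations~(\ref{eq:general_missing_pattern_ee}), (\ref{eq:general_missing_pattern_ee_gamma_1}) and~(\ref{eq:general_missing_pattern_ee_gamma_2}) --- one for $\widehat{\vtheta}_{\text{WCC}}$ and one each for $\{\widehat{\vgamma}_{1,k}\}$ and $\{\widehat{\vgamma}_{2,k}\}$ --- into a single estimating equation with the weights $\pi_\infty,\pi_k$ treated as known. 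The same Z-estimation machinery and regularity conditions invoked in Theorem~\ref{thrm:general_missing_patterns} then yield joint asymptotic normality of $(\widehat{\vtheta}_{\text{WCC}},\{\widehat{\vgamma}_{1,k}\},\{\widehat{\vgamma}_{2,k}\})$, with each component admitting the usual influence-function linearization $-\dot{\vpsi}^{-1}\vpsi^{(i)}(\cdot)$ evaluated at the truth.

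Next, since $\widehat{\Wb}_k\overset{p}{\rightarrow}\Wb_k$ with $\Wb_k$ constant, Slutsky's theorem lets me replace $\widehat{\Wb}_k$ by $\Wb_k$ in the estimator~(\ref{eq:general_missing_pattern_estimator}) without altering the limiting law, so $\sqrt{N}(\widehat{\vtheta}_{\text{PS-PPI}}-\vtheta^\ast)$ behaves asymptotically like $\widehat{\vtheta}_{\text{WCC}}-\sum_k\Wb_k\widehat{\vgamma}_{1,k}+\sum_k\Wb_k\widehat{\vgamma}_{2,k}$ in influence-function form. I would then expand its variance by bilinearity of the (asymptotic) covariance into $\Var(\widehat{\vtheta}_{\text{WCC}})$, the cross blocks of $\widehat{\vtheta}_{\text{WCC}}$ against $\{\widehat{\vgamma}_{1,k}\}$ and $\{\widehat{\vgamma}_{2,k}\}$, the cross blocks between $\{\widehat{\vgamma}_{1,k}\}$ and $\{\widehat{\vgamma}_{2,k}\}$, and the quadratic blocks in $\{\widehat{\vgamma}_{1,k}\}$ and in $\{\widehat{\vgamma}_{2,k}\}$.

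The crux, and the step I expect to require the most care to state cleanly, is identifying which covariance blocks vanish. The key observation is that the per-subject influence functions of $\widehat{\vtheta}_{\text{WCC}}$ and of $\widehat{\vgamma}_{1,k}$ both carry the factor $I(\cC_i=\infty)$ (from~(\ref{eq:general_missing_pattern_ee}) and~(\ref{eq:general_missing_pattern_ee_gamma_1})), whereas that of $\widehat{\vgamma}_{2,k}$ carries $I(\cC_i=k)$ (from~(\ref{eq:general_missing_pattern_ee_gamma_2})). Because each subject has a single coarsening index, these indicators are mutually exclusive: $I(\cC_i=\infty)I(\cC_i=k)=0$ and $I(\cC_i=k)I(\cC_i=k')=0$ for $k\neq k'$. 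Since each asymptotic covariance is the expectation of a product of two per-subject influence functions, this forces $\Cov(\widehat{\vtheta}_{\text{WCC}},\widehat{\vgamma}_{2,k})=\vzero$, $\Cov(\widehat{\vgamma}_{1,k},\widehat{\vgamma}_{2,k'})=\vzero$ for all $k,k'$, and $\Cov(\widehat{\vgamma}_{2,k},\widehat{\vgamma}_{2,k'})=\vzero$ for $k\neq k'$. The surviving blocks are exactly $\Sigma_{\vtheta}$, the $\widehat{\vtheta}_{\text{WCC}}$--$\widehat{\vgamma}_{1,k}$ covariances $\Sigma_{\vtheta,\vgamma_{1,k}}$, the full (including off-diagonal in $k,k'$) covariances $\Sigma_{\vgamma_{1,k},\vgamma_{1,k'}}$, and the diagonal variances $\Sigma_{\vgamma_{2,k}}$, which assemble into Equation~(\ref{eq:general_missing_patterns_plusplus_variance}). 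The only genuine bookkeeping hazard is tracking the transpose conventions so that each $\Wb_k$ multiplies the cross-covariance blocks on the correct side.

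Finally, I would note that fixing $\widehat{\pi}$ at its truth drops the negative correction term that correctly estimating $\vomega$ via~(\ref{eq:omega_ee}) would contribute --- a standard semiparametric phenomenon, see \citet{robins1994estimation} and \citet{sun2018inverse} --- so the displayed $\Sigma_{\text{PS-PPI}}$ is conservative yet valid, which is precisely the practical guarantee the corollary is meant to deliver.
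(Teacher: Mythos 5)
Your proposal is correct and takes essentially the same route as the paper's own proof: both arguments linearize the stacked Z-estimators into per-subject influence functions with the propensity scores fixed at the truth, and both rely on the same key fact that the pattern indicators are mutually exclusive ($I(\cC_i=\infty)I(\cC_i=k)=0$ and $I(\cC_i=k)I(\cC_i=k')=0$ for $k\neq k'$) to eliminate every cross term involving $\widehat{\vgamma}_{2,k}$, leaving exactly the blocks $\Sigma_{\vtheta}$, $\Sigma_{\vtheta,\vgamma_{1,k}}$, $\Sigma_{\vgamma_{1,k},\vgamma_{1,k'}}$, and the diagonal $\Sigma_{\vgamma_{2,k}}$. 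The only difference is organizational --- the paper expands $\E\left[\vGamma_i\vGamma_i^\top\right]$ for the single combined influence function written as $-\Ab+\Bb$, whereas you expand the variance of the linear combination block-by-block via bilinearity --- and your closing caution about transpose bookkeeping is apt, since the paper's stated formula and its derivation place $\Wb_k$ on opposite sides of the cross-covariance blocks.
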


From Equation (\ref{eq:general_missing_patterns_plusplus_variance}), $\Sigma_{\text{PS-PPI}}$ is a function of $\{\Wb_k\}=(\Wb_1, ..., \Wb_K)$, based on which we may seek the optimal $\{\Wb^\ast_k\}$ such that $\Sigma_{\text{PS-PPI}}(\{\Wb^\ast_k\}) \preceq \Sigma_{\text{PS-PPI}}(\{\Wb_k\})$ for any $\{\Wb_k\}$, where, for two square matrices $\Ab$ and $\Bb$, $\Ab\preceq \Bb$ means $\Bb-\Ab$ is positive semidefinite; $\Ab\prec \Bb$ means $\Bb-\Ab$ is positive definite. Such an optimal set of $\{\Wb_k\}$ has no closed-form and may be computationally expensive to solve. However, when assuming identical $\Wb_k$ across $k$, we have a closed-form solution of $\{\Wb^\ast_k\}$, and the resulting $\widehat{\vtheta}_{\text{PS-PPI}}$ is guaranteed to be more efficient than $\widehat{\vtheta}_{\text{WCC}}$, as we state in Corollary~\ref{corollary:best_tuning_parameter} and prove in Section~\ref{appendix:proof:corollary3.2} of the Supplementary Materials.

\begin{corollary}
\label{corollary:best_tuning_parameter}

Suppose $\Wb_k = \Wb$ across $k$, then 
$$\Wb^\ast = \left ( \sum^K_{j=1} \Sigma_{\vtheta, \vgamma_{1, j}} \right ) \times \left [ \left ( \sum^K_{j=1} \sum^K_{j'=1} \Sigma_{\vgamma_{1, j}, \vgamma_{1, j'}} \right ) + \left ( \sum^K_{j=1} \Sigma_{\vgamma_{2, j}} \right ) \right ]^{-1}$$ leads to an optimal $\Sigma_{\text{PS-PPI}}$. Accordingly, the asymptotic variance of $\widehat{\vtheta}_{\text{PS-PPI}}$ is
\begin{equation}
\label{eq:general_missing_patterns_plusplus_variance_opt}
    \Sigma_{\text{PS-PPI}} = \Sigma_{\vtheta} - \left ( \sum^K_{k=1} \Sigma_{\vtheta, \vgamma_{1, k}} \right ) \times \left [ \left ( \sum^K_{k=1} \sum^K_{k'=1} \Sigma_{\vgamma_{1, k}, \vgamma_{1, k'}} \right ) + \left ( \sum^K_{k=1} \Sigma_{\vgamma_{2, k}} \right ) \right ]^{-1} \times \left ( \sum^K_{k=1} \Sigma_{\vtheta, \vgamma_{1, k}} \right )^\top.
\end{equation}
Furthermore, we have $\Sigma_{\text{PS-PPI}} \preceq \Sigma_{\vtheta}$, where $ \Sigma_{\vtheta}$ is the asymptotic variance of $\widehat{\vtheta}_{\text{WCC}}$. $\Sigma_{\text{PS-PPI}} = \Sigma_{\vtheta}$ if and only if $\sum^K_{k=1} \Sigma_{\vtheta, \vgamma_{1, k}} = \vzero_{d\times d}$.
\end{corollary}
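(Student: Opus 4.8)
The plan is to read the right-hand side of Equation~(\ref{eq:general_missing_patterns_plusplus_variance}), after imposing $\Wb_k=\Wb$, as a matrix quadratic in the single unknown matrix $\Wb$ and to minimize it in the Loewner order by completing the square. The cleanest route is through the probabilistic meaning of the blocks. Since $\widehat{\vtheta}_{\text{PS-PPI}}=\widehat{\vtheta}_{\text{WCC}}-\Wb\sum^{K}_{k=1}\widehat{\vDelta}_k$, writing $B=\sum^{K}_{k=1}\widehat{\vDelta}_k$ for the pooled estimate of zero, the objective is just $\Var(\widehat{\vtheta}_{\text{WCC}}-\Wb B)$. First I would set $\Cb:=\Cov(\widehat{\vtheta}_{\text{WCC}},B)=\sum^{K}_{k=1}\Sigma_{\vtheta,\vgamma_{1,k}}$ and $\Mb:=\Var(B)=\sum^{K}_{k=1}\sum^{K}_{k'=1}\Sigma_{\vgamma_{1,k},\vgamma_{1,k'}}+\sum^{K}_{k=1}\Sigma_{\vgamma_{2,k}}$, where the reduction of $\Var(B)$ to these two families of blocks (and the absence of $\widehat{\vgamma}_{2,k}$ cross terms) uses the disjointness of the pattern-$k$ strata from one another and from the complete cases, exactly as in Corollary~\ref{corollary:ignore_uncertainty}. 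With this notation the objective collapses to
\begin{equation*}
\Sigma_{\text{PS-PPI}}(\Wb)=\Sigma_{\vtheta}-\Wb\Cb^\top-\Cb\Wb^\top+\Wb\Mb\Wb^\top .
\end{equation*}

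Next I would complete the square. Since $\Mb$ is symmetric as a variance and, by the nondegeneracy discussed below, invertible, I would verify the identity
\begin{equation*}
\Sigma_{\text{PS-PPI}}(\Wb)=\Sigma_{\vtheta}-\Cb\Mb^{-1}\Cb^\top+(\Wb-\Cb\Mb^{-1})\,\Mb\,(\Wb-\Cb\Mb^{-1})^\top
\end{equation*}
by expanding the final product and cancelling the two copies of $\Cb\Mb^{-1}\Cb^\top$. Because $\Mb\succeq\vzero_{d\times d}$, the trailing term is positive semidefinite for every $\Wb$ and equals $\vzero_{d\times d}$ exactly when $\Wb=\Cb\Mb^{-1}$. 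This yields the stated minimizer $\Wb^\ast=\Cb\Mb^{-1}=\left(\sum_k\Sigma_{\vtheta,\vgamma_{1,k}}\right)\Mb^{-1}$ and the optimal variance $\Sigma_{\text{PS-PPI}}=\Sigma_{\vtheta}-\Cb\Mb^{-1}\Cb^\top$, i.e.\ Equation~(\ref{eq:general_missing_patterns_plusplus_variance_opt}); importantly the completion-of-the-square argument certifies optimality in the full Loewner order, $\Sigma_{\text{PS-PPI}}(\Wb^\ast)\preceq\Sigma_{\text{PS-PPI}}(\Wb)$ for all $\Wb$, rather than for a single scalar functional.

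The dominance and equality claims then follow from elementary positive-definiteness bookkeeping. With $\Mb^{-1}\succ\vzero_{d\times d}$ I would note $\Cb\Mb^{-1}\Cb^\top\succeq\vzero_{d\times d}$, hence $\Sigma_{\text{PS-PPI}}=\Sigma_{\vtheta}-\Cb\Mb^{-1}\Cb^\top\preceq\Sigma_{\vtheta}$. For the characterization of equality, factoring $\Mb^{-1}=\Rb^\top\Rb$ with $\Rb$ invertible gives $\Cb\Mb^{-1}\Cb^\top=(\Rb\Cb^\top)^\top(\Rb\Cb^\top)$, which is $\vzero_{d\times d}$ if and only if $\Rb\Cb^\top=\vzero$, i.e.\ $\Cb=\sum_k\Sigma_{\vtheta,\vgamma_{1,k}}=\vzero_{d\times d}$; this is the ``if and only if'' in the statement.

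The step I expect to require the most care is not the matrix algebra but the standing invertibility of $\Mb=\Var(\sum_k\widehat{\vDelta}_k)$, on which both the closed form $\Wb^\ast=\Cb\Mb^{-1}$ and the strict direction of the equality characterization depend. I would establish $\Mb\succ\vzero_{d\times d}$ from the nondegeneracy of the stacked estimating functions $\{\vpsi_{1,k},\vpsi_{2,k}\}$ under Assumption~\ref{assumption:main}; in particular the block $\sum_k\Sigma_{\vgamma_{2,k}}$ is already positive definite whenever each realized pattern contributes nonvanishing weighted information. Should $\Mb$ fail to be strictly positive definite, the whole argument survives with $\Mb^{-1}$ replaced by the Moore--Penrose pseudoinverse: the weak bound $\Sigma_{\text{PS-PPI}}\preceq\Sigma_{\vtheta}$ is unaffected, and only the equality characterization would need an additional condition confining $\Cb$ to the range of $\Mb$.
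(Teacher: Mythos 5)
Your proof is correct, and it takes a genuinely different route from the paper's. The paper starts from the same pooled quadratic form in $\Wb$ (Equation~(\ref{eq:general_missing_patterns_plusplus_variance}) with $\Wb_k=\Wb$), but it finds $\Wb^\ast$ by matrix calculus: it equates the first-order derivative to $\vzero$, solves the resulting linear equation for $\Wb$, and then plugs the solution back in to obtain Equation~(\ref{eq:general_missing_patterns_plusplus_variance_opt}); the dominance $\Sigma_{\text{PS-PPI}}\preceq\Sigma_{\vtheta}$ and the equality characterization are then argued from the fact that $\Mb=\mathrm{Var}\bigl(\sum_k\widehat{\vgamma}_{1,k}\bigr)+\sum_k\mathrm{Var}(\widehat{\vgamma}_{2,k})$ is positive definite, hence so is its inverse. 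Your completion-of-the-square argument buys something the stationarity argument does not automatically give: it certifies that $\Wb^\ast$ is optimal in the full Loewner order, since $\Sigma_{\text{PS-PPI}}(\Wb)-\Sigma_{\text{PS-PPI}}(\Wb^\ast)=(\Wb-\Cb\Mb^{-1})\Mb(\Wb-\Cb\Mb^{-1})^\top\succeq\vzero$ for every $\Wb$, whereas setting a derivative of a matrix-valued map to zero only identifies a critical point of (in effect) a scalar functional and leaves partial-order optimality to be argued separately. Your reinterpretation of the blocks as $\Cb=\Cov(\widehat{\vtheta}_{\text{WCC}},B)$ and $\Mb=\Var(B)$ with $B=\sum_k\widehat{\vDelta}_k$ is consistent with how Corollary~\ref{corollary:ignore_uncertainty} is derived (the stratum indicators kill the cross terms involving $\widehat{\vgamma}_{2,k}$), and your factorization $\Mb^{-1}=\Rb^\top\Rb$ gives a cleaner and more rigorous version of the paper's ``if and only if'' step. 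Finally, you are right to flag strict positive definiteness of $\Mb$ as the delicate hypothesis: the paper asserts it directly from the variance representation (which by itself only guarantees positive semidefiniteness), while your pseudoinverse fallback shows how the dominance claim survives, and only the equality characterization needs the range condition, if $\Mb$ is singular. The only caution is that your Assumption~\ref{assumption:main} citation concerns nonsingularity of the derivative matrix, not of $\Mb$ itself, so the nondegeneracy of $\Mb$ is still an assumption rather than a consequence — but the paper is in exactly the same position.
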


The $\Wb^\ast$ defined in Corollary~\ref{corollary:best_tuning_parameter} is controlled by three components: 1) $\Sigma_{\vtheta, \vgamma_{1,k}}$: the asymptotic covariance between $\widehat{\vtheta}_{\text{WCC}}$ and each of the $\widehat{\vgamma}_{1,k}$; 2) $\Sigma_{\vgamma_{1,k}, \vgamma_{1,k'}}$: the asymptotic covariance between any two of the $\widehat{\vgamma}_{1,k}$ and $\widehat{\vgamma}_{1,k'}$; 3) $\Sigma_{\vgamma_{1, k}}$ and $\Sigma_{\vgamma_{2, k}}$: the asymptotic variance of $\widehat{\vgamma}_{1,k}$ and $\widehat{\vgamma}_{2,k}$ under each missingness pattern. Specifically, a large weight matrix is assigned to $\widehat{\vDelta}_k$ when: 1) we have a large covariance between $\widehat{\vtheta}_{\text{WCC}}$ and $\widehat{\vgamma}_{1,k}$, which occurs when the imputations faithfully reflect the missing true values; 2) we have small variances of $\widehat{\vgamma}_{1,k}$ and $\widehat{\vgamma}_{2,k}$, which means that we can construct a precise estimation of zero with small variances; 3) we have small covariances between any pairs 
of the $\widehat{\vgamma}_{1,k}$ and $\widehat{\vgamma}_{1,k'}$, which means that we can efficiently aggregate across different missingness patterns without introducing too much extra uncertainty. 

Because the optimal $\Wb^*$ depends on the variance-covariance matrices, it can be estimated by letting $\widehat{\Wb} = \left ( \sum^K_{j=1} \widehat{\Sigma}_{\vtheta, \vgamma_{1, j}} \right ) \times \left [ \left ( \sum^K_{j=1} \sum^K_{j'=1} \widehat{\Sigma}_{\vgamma_{1, j}, \vgamma_{1, j'}} \right ) + \left ( \sum^K_{j=1} \widehat{\Sigma}_{\vgamma_{2, j}} \right ) \right ]^{-1}$.

We present the full algorithmic details in Algorithm~\ref{algo:general_missing_pattern} to obtain $\widehat{\vtheta}_{\text{PS-PPI}}$ and the corresponding uncertainty estimates.

\textcolor{black}{The theoretical results from Theorem~\ref{thrm:general_missing_patterns} and Corollary~\ref{corollary:best_tuning_parameter} allow us to compute $\widehat{\vtheta}_{\text{PS-PPI}}$ and quantify its uncertainty in a Z-estimation framework. In particular, $\widehat{\vtheta}_{WCC}$, $\widehat{\vgamma}_{1,k}$, $\widehat{\vgamma}_{2,k}$ can be obtained by solving the estimating equations in Equation~(\ref{eq:general_missing_pattern_ee}),~(\ref{eq:general_missing_pattern_ee_gamma_1}),~(\ref{eq:general_missing_pattern_ee_gamma_2}). For uncertainty quantification, $\widehat{\Sigma}_{\vtheta}$, $\widehat{\Sigma}_{\vtheta, \vgamma_{1, k}}$, $\widehat{\Sigma}_{\vgamma_{1, k}, \vgamma_{1, k'}}$, and $\widehat{\Sigma}_{\vgamma_{2, k}}$ may be calculated by deriving closed-form expressions in a Z-estimation framework based on results in Equation~(\ref{eq:sigma_theta}),~(\ref{eq:sigma_theta_gamma_1k}),~(\ref{eq:sigma_gamma_1k1k}),~(\ref{eq:sigma_gamma_2k}), and compute their empirical estimates. We provide more detailed calculations for linear regression and logistic regression as special cases in Section~\ref{sec:sigma_examples} in the Supplementary Materials. For general Z-estimation problems without readily computable empirical estimates of the variance-covariance matrices, resampling techniques such as Bootstrap and delete-d Jackknife~\citep{efron1982jackknife} can be used. Under certain technical conditions, both
Bootstrap and delete-$d$ Jackknife can produce a consistent variance-covariance estimation \citep{shao1989general, shao1989efficiency, cheng2010bootstrap}.}

\textcolor{black}{It is worth noting that the PS-PPI method can be implemented provided there exists an off-the-shelf package for the target Z-estimation problem with options for weighted analyses based on complete data. Specifically, Equation (\ref{eq:general_missing_pattern_ee}), (\ref{eq:general_missing_pattern_ee_gamma_1}), (\ref{eq:general_missing_pattern_ee_gamma_2}) can be viewed as separate weighted complete-case analyses on different data using the same Z-estimation framework. Hence, no significant additional efforts need to be made to compute $\widehat{\vtheta}_{\text{WCC}}$, $\widehat{\vgamma}_{1,k}$, $\widehat{\vgamma}_{2,k}$, as well as variance-covariance matrices $\widehat{\Sigma}_{\vtheta}$ and $\widehat{\Sigma}_{\vgamma_{2, k}}$ once a weighted estimation software is available. The major effort for the implementation will be, therefore, to compute $\widehat{\Sigma}_{\vtheta, \vgamma_{1,k}}$ and $\widehat{\Sigma}_{\vgamma_{1,k}, \vgamma_{1,k'}}$. As discussed above, this can be implemented by either deriving the specific closed-form based on Equation~(\ref{eq:sigma_theta_gamma_1k}) and~(\ref{eq:sigma_gamma_1k1k}) in the Supplementary Materials, or applying resampling techniques. Since both $\widehat{\Sigma}_{\vtheta, \vgamma_{1,k}}$ and $\widehat{\Sigma}_{\vgamma_{1,k}, \vgamma_{1,k'}}$ are variance-covariance matrices only relevant to the fully-observed data, the resampling will only be conducted among the fully-observed data. In this paper, we use the delete-$1$ Jackknife to estimate the covariances.}

\subsection{Connections to the existing literature} 

\paragraph{Connections to the setting when only the outcome is missing.} We show that the PS-PPI procedure subsumes the outcome-based prediction-based inference framework in the existing literature. Specifically, assume that in addition to the fully observed data subset, there is only one missingness pattern with only the outcome, say $Y_{i1}$, missing. Furthermore, assume that the underlying missing mechanism is MCAR, i.e., the propensity score model $\bpi(\cdot)$ is a constant 2-dimensional vector $(\pi_1, \pi_\infty)$ across all the samples, where $\pi_1$ is the probability of missing outcome, and $\pi_\infty$ is the probability of being fully observed. The PS-PPI estimator under this setting become $\widehat{\vtheta}_{\text{PS-PPI}} = \widehat{\vtheta}_{\text{CC}} - \widehat{\Wb}(\widehat{\vgamma}_{1, 1} - \widehat{\vgamma}_{2, 1})$. Note that here we have the CCA estimate $\widehat{\vtheta}_{\text{CC}}$ instead of $\widehat{\vtheta}_{\text{WCC}}$, because the weights are the same across all samples. In ~\citet{gronsbell2024another} and \citet{chen2000unified}, their proposed estimator is $\widehat{\vtheta}_{\text{Chen}} = \widehat{\vtheta}_{\text{CC}} - \widehat{\Wb}_{\text{Chen}}(\widehat{\vgamma}_{\text{lab}} - \widehat{\vgamma}_{\text{all}})$. Here, $\widehat{\vtheta}_{\text{CC}}$ follows the same definition in the present paper and $\widehat{\vgamma}_{\text{lab}}=\widehat{\vgamma}_{1,1}$ with $K=1$ missingness pattern (missing outcome only); $\widehat{\vgamma}_{\text{all}}$ is a variant of $\widehat{\vgamma}_{2,1}$ in this paper, which is estimated by all the data in $\cD_{\text{obs}}$, instead of only data with $\cC_i = 1$; $\widehat{\Wb}_{\text{Chen}}$ is a tuning parameter matrix estimated from the data. We show the asymptotic equivalence between $\widehat{\vtheta}_{\text{Chen}}$ and $\widehat{\vtheta}_{\text{PS-PPI}}$ in Proposition~\ref{proposition:equivalence_gronsbell}.

\begin{proposition}
    \label{proposition:equivalence_gronsbell}
Assume that only the outcome, say $Y_{i1}$, is possibly missing, we have that $\widehat{\vtheta}_{\text{PS-PPI}}$ and $\widehat{\vtheta}_{\text{Chen}}$  are asymptotically equivalent in the linear regression (squared error loss) and MCAR setting considered in~\citet{gronsbell2024another}.
\end{proposition}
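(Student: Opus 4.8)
The plan is to reduce both estimators to the common form ``complete-case estimator minus a weight matrix times an estimate of zero,'' show that the two estimates of zero differ only by a fixed nonzero scalar, and then argue that optimal weighting makes the two estimators coincide asymptotically. Specializing Equation~(\ref{eq:general_missing_pattern_estimator}) to $K=1$ with the missing-outcome pattern and constant propensity scores $(\pi_1,\pi_\infty)$ gives $\widehat{\vtheta}_{\text{PS-PPI}} = \widehat{\vtheta}_{\text{CC}} - \widehat{\Wb}(\widehat{\vgamma}_{1,1}-\widehat{\vgamma}_{2,1})$, whereas the estimator of \citet{gronsbell2024another,chen2000unified} is $\widehat{\vtheta}_{\text{Chen}} = \widehat{\vtheta}_{\text{CC}} - \widehat{\Wb}_{\text{Chen}}(\widehat{\vgamma}_{\text{lab}}-\widehat{\vgamma}_{\text{all}})$. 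Since $\widehat{\vgamma}_{\text{lab}}=\widehat{\vgamma}_{1,1}$, the only structural difference is that $\widehat{\vgamma}_{2,1}$ is computed from the unlabeled stratum alone via Equation~(\ref{eq:general_missing_pattern_ee_gamma_2}), whereas $\widehat{\vgamma}_{\text{all}}$ pools the imputed labeled and unlabeled data. The first step is therefore to relate $\widehat{\vgamma}_{\text{all}}$ to $\widehat{\vgamma}_{1,1}$ and $\widehat{\vgamma}_{2,1}$.

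Next I would write the standard Z-estimator linearizations. Under MCAR the masked-and-imputed vector $\vY^{(1)}_{\text{imp}}$ has the same distribution in the labeled and unlabeled strata, so by Assumption~\ref{assumption:gamma_existence} all three of $\widehat{\vgamma}_{1,1},\widehat{\vgamma}_{2,1},\widehat{\vgamma}_{\text{all}}$ converge to the common limit $\vgamma^\ast_1$ and share the Jacobian $\dot{\vpsi}_\vgamma = \E[\partial \vpsi(\vY^{(1)}_{\text{imp}};\vgamma)/\partial\vgamma|_{\vgamma^\ast_1}]$. Using $I(\cC_i=\infty)+I(\cC_i=1)=1$, the unweighted pooled score decomposes as $\tfrac1N\sum_i \vpsi_i = \pi_\infty\cdot\tfrac1N\sum_i \tfrac{I(\cC_i=\infty)}{\pi_\infty}\vpsi_i + \pi_1\cdot\tfrac1N\sum_i \tfrac{I(\cC_i=1)}{\pi_1}\vpsi_i$, where $\vpsi_i=\vpsi(\vY^{(1)}_{\text{imp},i};\vgamma^\ast_1)$. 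Multiplying by $-\dot{\vpsi}_\vgamma^{-1}$ yields the key first-order identity $\widehat{\vgamma}_{\text{all}} = \pi_\infty\widehat{\vgamma}_{1,1} + \pi_1\widehat{\vgamma}_{2,1} + o_p(N^{-1/2})$. Consequently $\widehat{\vgamma}_{\text{lab}}-\widehat{\vgamma}_{\text{all}} = (1-\pi_\infty)\widehat{\vgamma}_{1,1}-\pi_1\widehat{\vgamma}_{2,1}+o_p(N^{-1/2}) = \pi_1(\widehat{\vgamma}_{1,1}-\widehat{\vgamma}_{2,1})+o_p(N^{-1/2})$, i.e., the Chen estimate of zero equals $\pi_1$ times the PS-PPI estimate of zero $\widehat{\vDelta}_1$ up to asymptotically negligible terms.

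Finally I would conclude by the optimal-weight argument. Both $\widehat{\Wb}$ (from Corollary~\ref{corollary:best_tuning_parameter}) and $\widehat{\Wb}_{\text{Chen}}$ are the variance-minimizing weights, and the minimizer of $\Var(\widehat{\vtheta}_{\text{CC}}-\Wb\widehat{\bZ})$ over $\Wb$ depends on a mean-zero $\widehat{\bZ}$ only through its asymptotic linear span: the optimal estimator is the residual of the asymptotic projection of $\widehat{\vtheta}_{\text{CC}}$ onto $\widehat{\bZ}$. Because $\widehat{\vDelta}_1$ and $\pi_1\widehat{\vDelta}_1$ span the same space for the fixed nonzero scalar $\pi_1$, the two limiting optimal weights satisfy $\Wb^\ast = \pi_1\,\Wb^\ast_{\text{Chen}}$, and both optimal estimators equal this common projection residual; hence $\sqrt{N}(\widehat{\vtheta}_{\text{PS-PPI}}-\widehat{\vtheta}_{\text{Chen}}) \overset{p}{\to} 0$.

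The main obstacle is making the convex-combination identity $\widehat{\vgamma}_{\text{all}}\approx\pi_\infty\widehat{\vgamma}_{1,1}+\pi_1\widehat{\vgamma}_{2,1}$ rigorous: one must control the linearization remainders and, crucially, justify that the three estimators share the same probability limit and Jacobian. This shared-limit property is exactly where MCAR is indispensable, as it forces the masked-and-imputed distribution $\b{P}^{(1)}_{\vY_{\text{imp}}}$ to be identical across the labeled and unlabeled strata, which would fail under general MAR. A secondary point is to confirm that the estimated optimal weights converge, so that the scalar proportionality of the limiting estimates of zero transfers to asymptotic equivalence of the weighted estimators; this follows from Slutsky's theorem given $\widehat{\Wb}\overset{p}{\to}\Wb^\ast$ and $\widehat{\Wb}_{\text{Chen}}\overset{p}{\to}\Wb^\ast_{\text{Chen}}$.
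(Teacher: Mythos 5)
Your proposal is correct, but it takes a genuinely different route from the paper. The paper's proof is a direct influence-function calculation: it quotes the influence function of $\widehat{\vtheta}_{\text{Chen}}$ from \citet{gronsbell2024another}, specializes the PS-PPI influence function from Theorem~\ref{thrm:general_missing_patterns} to $K=1$ under MCAR, computes the closed forms of $\Sigma_{\vtheta,\vgamma_{1,1}}$, $\Sigma_{\vgamma_{1,1}}$, $\Sigma_{\vgamma_{2,1}}$ so that Corollary~\ref{corollary:best_tuning_parameter} gives $\Wb^\ast = \pi_1\dot{\vpsi}^{-1}\Cov(\vpsi,\vpsi_1)\Var(\vpsi_1)^{-1}\dot{\vpsi}_1$, and then verifies algebraically that the two influence functions coincide. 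You instead prove the convex-combination identity $\widehat{\vgamma}_{\text{all}} = \pi_\infty\widehat{\vgamma}_{1,1} + \pi_1\widehat{\vgamma}_{2,1} + o_p(N^{-1/2})$ (which the paper never states), deduce that the two rectifiers are asymptotically proportional with scalar $\pi_1$, and conclude via scale-invariance of the optimal linear correction, never writing out influence functions or closed-form weights. Both arguments are sound and use the same regularity and MCAR structure (shared limit $\vgamma^\ast_1$, shared Jacobian, uncorrelatedness across disjoint strata). The paper's computation buys the explicit $\Wb^\ast$, which Remark~\ref{remark:perfect_predictions} relies on; your argument buys a shorter, more structural explanation that directly yields $\Wb^\ast = \pi_1\Wb^\ast_{\text{Chen}}$ and localizes exactly where MCAR is indispensable. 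To make it airtight, justify that $\widehat{\Wb}_{\text{Chen}}$ converges to the variance-minimizing weight for its own rectifier (this is how \citet{chen2000unified} define it, and it is implicit in the quoted influence function), and note $\widehat{\vDelta}_1 = O_p(N^{-1/2})$ so that $(\widehat{\Wb}-\pi_1\widehat{\Wb}_{\text{Chen}})\widehat{\vDelta}_1 = o_p(N^{-1/2})$, which is precisely the claimed $\sqrt{N}$-equivalence.
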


Proposition~\ref{proposition:equivalence_gronsbell} is proved by showing that their influence functions are the same. Note that in $\widehat{\vtheta}_{\text{Chen}}$, $\widehat{\vgamma}_{\text{all}}$ is estimated using both partially and fully-observed data. This is different from $\widehat{\vgamma}_{2, 1}$ in $\widehat{\vtheta}_{\text{PS-PPI}}$, which is estimated using only the missing-outcome subset of data. ~\citet{gronsbell2024another} argue that this modification guarantees better efficiency than the original PPI procedure~\citep{angelopoulos2023prediction}. However, when an optimal $\Wb^\ast$ is selected, Proposition~\ref{proposition:equivalence_gronsbell} indicates that such efficiency improvement is absorbed into the $\Wb^\ast$, and thereby the two estimators become asymptotically equivalent. We elaborate this equivalence using the example of perfect predictions, i.e., the predictions are exactly the true values. Specifically, we have the following remark:
\begin{remark}
    \label{remark:perfect_predictions}
    Suppose perfect predictions are available, then for $\widehat{\vtheta}_{\text{Chen}}$, $\widehat{\Wb}_{\text{Chen}} = \mI$; for $\widehat{\vtheta}_{\text{PS-PPI}}$, $\widehat{\Wb} = \widehat{\pi}_1 \mI$.
\end{remark}
Remark~\ref{remark:perfect_predictions} can be proved by deriving the closed-form of the variance-covariance matrices under linear regression and obtaining the closed form of $\Sigma_{\text{PS-PPI}}$. From Remark~\ref{remark:perfect_predictions}, when perfect predictions are available, $\widehat{\vtheta}_{\text{Chen}} = \widehat{\vgamma}_{\text{all}}$, which is equivalent to the results from the underlying full data since the predictions are perfect. On the other hand, $\widehat{\vtheta}_{\text{PS-PPI}} = \widehat{\vtheta}_{\text{CC}} - \widehat{\pi}_1 (\widehat{\vgamma}_{1, 1} - \widehat{\vgamma}_{2, 1}) = \widehat{\pi}_\infty \widehat{\vtheta}^{(\infty)}_{\text{CC}} + \widehat{\pi}_1 \widehat{\vtheta}^{(1)}_{\text{CC}}$. Here $\widehat{\vtheta}^{(\infty)}_{\text{CC}}$ is the CCA estimate obtained from the complete-case subset; $\widehat{\vtheta}^{(1)}_{\text{CC}}$ is the CCA estimate obtained from the underlying full data of the missing-outcome subset of data. We are able to replace $\widehat{\vgamma}_{2, 1}$ with $\widehat{\vtheta}^{(1)}_{\text{CC}}$ because the perfect predictions make the imputed dataset exactly the same as the underlying full data. As a result, $\widehat{\vtheta}_{\text{PS-PPI}}$ can be viewed as a weighted sum of the full data estimator from the complete-case and the missing-outcome subset of data, respectively; the weights are the proportions of each subset: $\widehat{\pi}_\infty$ and $\widehat{\pi}_1$. Therefore, $\widehat{\vtheta}_{\text{PS-PPI}}$ can be viewed as an inverse variance weighted version of $\widehat{\vtheta}_{\text{Chen}}$. 

$\widehat{\vtheta}_{\text{PS-PPI}}$ also shares core insights with other existing outcome-based prediction-based inference literature~\citep{angelopoulos2023PPI++, miao2023assumption, gan2024prediction}, although they may not be asymptotically equivalent. In particular,~\citet{angelopoulos2023PPI++, miao2023assumption, gan2024prediction} introduce tunable weights within the estimating equations, and derive similar estimators with guarantees of being more efficient than CCA. All of these methods are based on the outcome-missing setting. 


\paragraph{Connections to missing multiple covariates setting.} In addition, the PS-PPI procedure also subsumes the missing covariates scenario studied in~\citet{kluger2025prediction}, which considered two missingness patterns: 1) complete cases (all $p$ variables in $\vY_i$); 2) $s$ different variables in $\vY_i$ are missing simultaneously. Furthermore, the missing mechanism was assumed to be MAR with known probabilities for each missingness pattern. In the above missing covariates setting, we can establish the following asymptotic equivalence between $\widehat{\vtheta}_{\text{PTD}}$ proposed in~\citet{kluger2025prediction} and $\widehat{\vtheta}_{\text{PS-PPI}}$:
\begin{proposition}
    \label{proposition:equivalence_kluger}
    $\widehat{\vtheta}_{\text{PTD}}$ and $\widehat{\vtheta}_{\text{PS-PPI}}$ are asymptotically equivalent.
\end{proposition}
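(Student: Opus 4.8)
The plan is to establish the equivalence by matching the first-order asymptotic expansions (influence functions) of the two estimators, mirroring the strategy announced for Proposition~\ref{proposition:equivalence_gronsbell}. Since $\widehat{\vtheta}_{\text{PTD}}$ is consistent and asymptotically normal and $\widehat{\vtheta}_{\text{PS-PPI}}$ is asymptotically normal by Theorem~\ref{thrm:general_missing_patterns}, asymptotic equivalence reduces to verifying that their influence functions coincide, so that $\sqrt{N}(\widehat{\vtheta}_{\text{PTD}} - \widehat{\vtheta}_{\text{PS-PPI}}) \overset{p}{\rightarrow} \vzero$.

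First I would specialize the general PS-PPI construction to the two-pattern blockwise setting. With $K=1$ (one missingness pattern, in which the block of $s$ covariates is jointly missing) plus the fully observed stratum, Equation~(\ref{eq:general_missing_pattern_estimator}) collapses to $\widehat{\vtheta}_{\text{PS-PPI}} = \widehat{\vtheta}_{\text{WCC}} - \widehat{\Wb}_1(\widehat{\vgamma}_{1,1} - \widehat{\vgamma}_{2,1})$, with $\widehat{\vgamma}_{1,1}$ and $\widehat{\vgamma}_{2,1}$ solving the masked-complete-case and observed-pattern weighted estimating equations of Equations~(\ref{eq:general_missing_pattern_ee_gamma_1}) and~(\ref{eq:general_missing_pattern_ee_gamma_2}). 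Using Theorem~\ref{thrm:general_missing_patterns} together with the known-propensity simplification assumed in this setting (so that no correction term arising from estimating $\vomega$ is needed), I would write the influence function of $\widehat{\vtheta}_{\text{PS-PPI}}$ as the WCC influence function minus $\Wb_1$ times the difference of the two $\vgamma$-influence functions, and then substitute the efficiency-optimal weight $\Wb^\ast$ from Corollary~\ref{corollary:best_tuning_parameter}.

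Next I would recast $\widehat{\vtheta}_{\text{PTD}}$ of~\citet{kluger2025prediction} into the same estimating-equation language and extract its influence function. The predict-then-debias estimator combines an imputation-based estimating equation evaluated over all units with an inverse-probability-weighted correction built from the complete cases; after linearization its influence function should decompose into a prediction (imputation) term and an IPW debiasing term. The crux is to exhibit an algebraic identification between the two decompositions: the $\widehat{\vtheta}_{\text{WCC}}$ and $\widehat{\vgamma}_{2,1}$ contributions in PS-PPI should map onto the IPW-corrected piece of PTD, while the masked $\widehat{\vgamma}_{1,1}$ contribution should map onto PTD's all-units imputation piece, modulo the choice of debiasing weight.

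The main obstacle will be reconciling the two debiasing weights. PTD fixes (or optimizes) its power/debiasing multiplier in a particular manner, whereas PS-PPI carries the data-driven optimal $\Wb^\ast$ of Corollary~\ref{corollary:best_tuning_parameter}; I expect to have to show, exactly as in the outcome-only case captured by Remark~\ref{remark:perfect_predictions} and the discussion following Proposition~\ref{proposition:equivalence_gronsbell}, that in the single-pattern blockwise setting the efficiency-optimal $\Wb^\ast$ reproduces precisely the weight implicit in the PTD construction, so that any apparent discrepancy is absorbed into the optimal tuning and the influence functions agree term by term. A secondary bookkeeping task is aligning the coarsening-and-masking notation used here with the labeled/unlabeled sample-splitting notation of~\citet{kluger2025prediction}, and verifying that the masked-complete-case imputation $\vY^{(1)}_{\text{imp}}$ induces the same population-level prediction distribution that PTD implicitly relies upon, which is guaranteed under MAR by Proposition~\ref{proposition:gamma_1k_gamma_2k}.
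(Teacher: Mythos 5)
Your route is genuinely different from the paper's, and it is worth being clear about the trade. The paper's proof never touches influence functions for this proposition: it simply quotes from \citet{kluger2025prediction} the asymptotic variance $\Sigma_{\text{PTD}} = \Sigma_{\vtheta} - \Sigma_{\vtheta, \vgamma_{1}} \left( \Sigma_{\vgamma_{1}} + \Sigma_{\vgamma_{2}} \right)^{-1}\Sigma_{\vtheta, \vgamma_{1}}^\top$, specializes the optimal-weight variance of Equation~(\ref{eq:general_missing_patterns_plusplus_variance_opt}) to the two-pattern case ($K=1$, so the second subscripts on the $\vgamma$ blocks drop out), observes that the two expressions are identical, and concludes equivalence because both estimators also have asymptotic mean $\vtheta^\ast$. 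In other words, the paper reads ``asymptotically equivalent'' as ``same limiting normal distribution,'' and the entire proof is a comparison of two variance formulas.

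Your plan targets the stronger statement $\sqrt{N}(\widehat{\vtheta}_{\text{PTD}} - \widehat{\vtheta}_{\text{PS-PPI}}) \overset{p}{\rightarrow} \vzero$ via influence-function matching, mirroring the paper's proof of Proposition~\ref{proposition:equivalence_gronsbell}. That is a legitimate and in fact more informative route, but as written it is a roadmap rather than a proof: the two decisive steps --- (i) linearizing $\widehat{\vtheta}_{\text{PTD}}$ and writing out its influence function in the coarsening notation, and (ii) showing that the debiasing/tuning matrix implicit in the PTD construction equals the $\Wb^\ast$ of Corollary~\ref{corollary:best_tuning_parameter} when $K=1$ --- are flagged as ``the crux'' and ``the main obstacle'' but never executed. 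Note moreover that step (ii) collapses into the paper's argument anyway: both tuning matrices are defined as minimizers of the same quadratic-in-$\Wb$ variance functional, so proving they coincide amounts to comparing the variance expressions, which is precisely what the paper does. So if you complete your plan you obtain a stronger conclusion (a coupling of the two estimators, not merely equality of limit laws), at the cost of reproducing the linearization of \citet{kluger2025prediction}; if the proposition is read as the paper reads it, the variance comparison already suffices and your extra machinery can be dropped.
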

\paragraph{Connections to semiparametric inference literature.} In the context of semiparametric inference theory, $\widehat{\vtheta}_{\text{PS-PPI}}$ belongs to the class of AIPW estimators, which can be shown by its influence functions derived from the proof of Theorem~\ref{thrm:general_missing_patterns} (detailed in Section~\ref{sec:aipw} in the Supplementary Materials). In the context of semiparametric theory, one may be interested in searching for the most efficient estimators within the AIPW class of estimators under specific settings~\citep{robins1994estimation, robins1995analysis, chaudhuri2016gmm}. However, under general missingness patterns and with an assumption of MAR, the optimal AIPW Z-estimators, despite being the most efficient estimator in integrating the predicted surrogates, suffer from difficulties in implementation as they generally do not have a closed-form and would require iterative computation to solve. We refer readers to~\citet{robins1994estimation}, \citet{tsiatis2006semiparametric}, and \citet{chaudhuri2016gmm} for a more in-depth discussion. The proposed $\widehat{\vtheta}_{\text{PS-PPI}}$ in this paper does not aim to guarantee optimal influence functions within the AIPW class estimators. Rather, the major advantage of $\widehat{\vtheta}_{\text{PS-PPI}}$ is its ability to be easily applied in analysis without significant additional efforts in the implementation. With the rapid development of deep learning and machine learning techniques, highly precise ML predictions become increasingly available. Enabling analysts to easily combine these ML-based prediction surrogates with their statistical models to perform the analysis without additional significant implementation efforts is necessary. Our proposed PS-PPI procedure circumvents the implementation challenges by allowing researchers to apply existing statistical software designed for weighted complete-case analysis to obtain the PS-PPI estimator: each component of $\widehat{\vtheta}_{\text{PS-PPI}}$ (i.e., $\widehat{\vtheta}_{\text{WCC}}, \{\widehat{\vgamma}_{1,k}\}, \{\widehat{\vgamma}_{2,k}\}$) can be obtained from weighted complete-case analysis algorithms and the weights can be obtained by resampling techniques (see Algorithm \ref{algo:general_missing_pattern}).

\paragraph{Connections to multiple imputation.} Multiple imputation based on Rubin's rule is a well-established methodology that has been widely applied to account for the uncertainty introduced by imputation models to guarantee valid downstream statistical inference~\citep{little2019statistical}. Multiple imputation often assumes a scenario where the observed data are the only resource, and imputation models are fitted based on these observed data to generate repeated imputations. To guarantee valid inference from these multiply-imputed datasets, multiple imputation imposes assumptions on the imputation models~\citep{rubin1987multiple}. Such a scenario is different from the scenario assumed in PS-PPI, as well as the broader prediction-based inference literature. In addition to the observed data, current prediction-based inference literature assumes that analysts also have access to some potentially precise predictions generated from models that are trained by external data. Commonly, these external models are ML or even deep learning (DL)-based. Therefore, they are either challenging to manipulate to satisfy any assumptions or beyond analysts' control. This is the core reason why prediction-based inference literature, including this work, does not impose any assumptions on the prediction models. However, under the MAR missing mechanism, this weak assumption on the quality of the prediction models comes at the expense of requiring the correct specification of the propensity score model. We present a numerical experiment to investigate PS-PPI's robustness against propensity score misspecification in Section~\ref{sec:experiments} and leave a detailed discussion on propensity score model specification in Section~\ref{sec:discussion}.

\begin{algorithm}[htp]
\caption{Pseudocode for obtaining $\widehat{\vtheta}_{\text{PS-PPI}}$ and uncertainty estimates.}
\label{algo:general_missing_pattern}
\begin{algorithmic}[1]
\Require 

An observed dataset $\cD_{\obs} = \{(G_{\cC_i}(\boldsymbol{Y}_i), \cC_i)\}^N_{i=1}$ with $K$ missingness patterns ($\cC=1, \ldots, K$) and a subset of complete data ($\cC=\infty$); a set of external machine learning prediction models $h_k$ imputing missing entries under each missingness pattern; a propensity score models (either known or estimated from the data) $\widehat{\bpi}(\cdot) \in \R^{K+1}$, with each dimension $\pi_k(\cdot)$ representing the probability of $\cC = k$. A Z-estimation procedure $f({\tt complete~data}; {\tt weighting~function})$ that supports weighted complete-case analysis and returns the estimates and the variance estimates for the parameter of interest $\boldsymbol{\theta}$.

\For{$k=1, ..., K$}
    \State For subjects with $\cC_i = k$, impute the missing values with ML predictions and obtain the ML-imputed dataset $\cD_{2,k} = \{\vY^{(k)}_{\text{imp}, i}: \cC_i = k\}$.
    \State Compute 
    $$\widehat{\vgamma}_{2,k}, \widehat{\Sigma}_{\vgamma_{2,k}} \gets f \left ( \cD_{2,k}, \pi^{-1}_k(\cdot) \right ).$$
\EndFor{\textbf{end for}}

\medskip

\State Denote the set of completely observed data $\cD_{\text{CC}} = \{\vY_i: \cC_i = \infty \}$.

\For{$j=1, ..., |\cD_{\text{CC}}|$} \Comment{{\tt Jackknife}}
    \State Drop the $j$-th completely observed data and obtain $\cD^{(j)}_{\text{CC}}$.
    \State Compute 
        $$\widehat{\vtheta}^{(j)}_{\text{WCC}} \gets f \left ( \cD^{(j)}_{\text{CC}}, \pi^{-1}_\infty(\cdot) \right ).$$
    \For{$k=1, ..., K$}
        \State Mask observed values based on $k$-th missingness patterns and impute predictions to obtain the \textit{pseudo} ML-imputed dataset $\cD^{(j)}_{1, k} = \{\vY^{(k)}_{\text{imp}, i}: \vY_i \in \cD^{(j)}_{\text{CC}} \}$.   
        \State Compute 
        $$\widehat{\vgamma}^{(j)}_{1,k} \gets f \left (\cD^{(j)}_{1, k}, \pi^{-1}_\infty(\cdot)  \right ).$$
    \EndFor{\textbf{end for}}
    \medskip
\EndFor{\textbf{end for}}
\State Compute the jackknife estimates of $\widehat{\vtheta}_{\text{WCC}}, \widehat{\vgamma}_{1, k}$ and $\widehat{\Sigma}_{\vtheta}, \widehat{\Sigma}_{\vgamma_{1,k}}, \widehat{\Sigma}_{\vtheta, \vgamma_{1,k}}, \widehat{\Sigma}_{\vgamma_{1,k}, \vgamma_{1,k'}}$.\\
\Comment{{\tt End of Jackknife}}
\State $\widehat{\Wb} \gets \left ( \sum^K_{k=1} \widehat{\Sigma}_{\vtheta, \vgamma_{1, k}} \right ) \times \left [ \left ( \sum^K_{k=1} \sum^K_{k'=1} \widehat{\Sigma}_{\vgamma_{1, k}, \vgamma_{1, k'}} \right ) + \left ( \sum^K_{k=1} \widehat{\Sigma}_{\vgamma_{2, k}} \right ) \right ]^{-1}$
\State Compute $\widehat{\vtheta}_{\text{PS-PPI}}$ from Equation (\ref{eq:general_missing_pattern_estimator}): \(
\widehat{\boldsymbol{\theta}} \gets \widehat{\vtheta}_{\text{WCC}} - \sum_{k=1}^{K} \widehat{\Wb} \left( \widehat{\vgamma}_{1,k} - \widehat{\vgamma}_{2,k} \right).
\)
\State Compute the estimated variance $\widehat{\Sigma}_{\text{PS-PPI}}$ based on Equation (\ref{eq:general_missing_patterns_plusplus_variance_opt}) by plugging in estimated quantities.

\medskip

\Ensure $\widehat{\boldsymbol{\theta}}_{\text{PS-PPI}}$ and the corresponding 95\% confidence intervals.
\end{algorithmic}
\end{algorithm}

\section{Numerical Experiments}
\label{sec:experiments}

In this section, we conduct simulation studies to show the operating characteristics of the proposed method to corroborate the theoretical predictions of estimation consistency, inferential validity, and efficiency gains. We compare our proposed approach with the following alternatives: 1) unweighted complete-case analysis (``CCA''); 2) inverse probability weighted (IPW) complete-case analysis (``WCCA''), with the weights set at true or estimated propensity scores; 3) multiple imputation (MI)~\citep{rubin1976inference}, a classical and popular method to deal with datasets with a non-monotone missingness under the MAR assumption. In this work, we use the implementation of MI provided by the \texttt{MICE 3.16.0}\footnote{\href{https://github.com/amices/mice}{https://github.com/amices/mice}} package in \texttt{R}~\citep{van2011mice}; 4) PPI\texttt{++}~\citep{angelopoulos2023PPI++}, a recent extension of the prediction-powered inference framework that guarantees better estimation efficiency compared to the CCA approach. For PPI\texttt{++}, fully observed cases and the cases with missing outcomes are used for parameter estimation; the cases with other missingness patterns will be dropped. Code for reproducing the experiments is available at: \href{https://github.com/chenxran/ps-ppi}{https://github.com/chenxran/ps-ppi}.

\subsection{Experimental setup}

To demonstrate the operating characteristics of the proposed method under various ML prediction qualities, here we focus on a typical analytic model specified by a multivariate linear regression as follows: 
$$Y_i = \beta_0 + \beta_1X_{i1} + \beta_2X_{i2} + \epsilon_i,$$
where the covariates are generated from $X_{i1} = 0.1\exp(Z_{i1}) + \tau_i$, $X_{i2} = \sin(Z_{i2}) + \nu_i$, with $(Z_{i1}, Z_{i2})$ randomly and independently drawn from a bivariate normal distribution with mean zero and a homogeneous marginal variance $\sigma_z^2$ and Pearson correlation $\rho$. The error terms are independently distributed as $\epsilon_i \sim \cN(0, \sigma^2)$, $\tau_i \sim \cN(0, \sigma_\tau^2)$, $\nu_i \sim {\tt Exponential}(\lambda)$, respectively. To connect with the notation $\vY_i$ introduced in Section~\ref{sec:methodology}, we have $\vY_i = (Y_i, X_{i1}, X_{i2}, Z_{i1}, Z_{i2})$ in this simulation setting. $Z_{i1}$ and $Z_{i2}$ are assumed to be fully observed and not used in the main analysis model. They can be used as inputs to produce ML predictions of possibly missing $(Y_i, X_{i1}, X_{i2})$ in the analysis model of scientific interest. We set $\cC_i = \infty$ to indicate that all $Y_{i}, X_{i1}, X_{i2}$ are observed. We introduce three distinct missingness patterns in addition to the fully observed data as follows: $\cC_i = 1$ indicates that $Y_i$ is missing; $\cC_i = 2$ indicates that only $X_{i2}$ is missing; $\cC_i = 3$ indicates that both $Y_i$ and $X_{i1}$ are missing. We assume an MAR mechanism. We define a propensity score model in the form proposed by~\citet{sun2018inverse} as follows:
\begin{equation} \label{eq:ps_model_1}
\textrm{logit}\{\pi(\cC_i|Z_i, Y_{i, \obs})\} = 
\begin{cases}
  -1 + 0.1X_{i2} + 0.1Z_{i1} + 0.1X_{i1}X_{i2}, & \cC_i = 1 \\
  -1.8 - 0.2Y_i + 0.1X_{i1} + 0.1Z_{i1} + 0.3X_{i1}Y_i, & \cC_i = 2 \\
  -1.0 + 0.1X_{i2} + 0.2Z_{i1}, & \cC_i = 3 \\
\end{cases},
\end{equation}

\begin{equation} \label{eq:ps_model_2}
    \pi(\cC_i = \infty| Z_i, Y_{i, \obs}) = 1 - \sum^3_{k=1} \pi(\cC_i = k|Z_i, Y_{i, \obs}).
\end{equation}

To generate simulation data, we first sample $(Z_{i1}, Z_{i2})$, then generate $(Y_i, X_{i1}, X_{i2})$ accordingly to obtain the full data. For each subject, we then sample $\cC_i$ from the propensity score model and mask the variables based on the missingness pattern assigned to construct the observed dataset. To enable nimble control of bias and variance of the ML predictions for revealing the performance of the different methods under distinct qualities of ML predictions, instead of an explicit ML prediction, we design the ML predictions by adding additive noise and bias at a pre-specified level onto the unknown real values. More specifically, 
$$\widehat{X}_{i1} = X_{i1} + b_{i1} + \epsilon_{i1}, \widehat{X}_{i2} = X_{i2} + b_{i2} + \epsilon_{i2}, \widehat{Y}_i = Y_i + b_{i3} + \epsilon_{i3}.$$
Here both $\epsilon_{i1}$, $\epsilon_{i2}$, and $\epsilon_{i3}$ follow Normal distribution $\cN(0, \sigma^2_{\text{pred}})$, while $b_{i1}$, $b_{i2}$, and $b_{i3}$ follow Exponential distribution ${\tt Exponential}(\lambda_{\text{pred}})$. The two types of error terms represent the noise and bias imposed on the ML predictions, respectively. When the noise and bias are set to 0, perfect ML predictions result. 

In our simulation, we fix $\sigma = 0.5$, $\sigma_\tau = 0.3$, $\lambda = 0.02$, $\sigma_z = 0.2$, and $\rho = 0.4$. To investigate the impact of the quality of the ML predictions on the performance of our proposed methods, we iterate both $\sigma^2_{\text{pred}}$ and $\lambda_{\text{pred}}$ from $\{0, 0.2, 0.4, 0.6, 0.8, 1.0, 1.5, 2.0\}$. For propensity scores, we adopt three settings. In the first setting, we assume that the propensity score model is unknown yet correctly specified in terms of both the functional forms and covariates. Under this assumption, we estimate the propensity score by maximizing the likelihood function proposed in~\citet{sun2018inverse}. In the second setting, we again assume an unknown propensity score model in the form proposed by~\citet{sun2018inverse}, while we only include $Z_1$ in the propensity score model to make it misspecified. This setting is used to investigate the robustness of the PS-PPI procedure against propensity score model misspecification. In the final setting, we assume that the true propensity score is known and can be directly applied to estimate the parameters of interest $\vbeta$. We set the sample size $N = 5,000$ and repeat the simulation $500$ times. The average number of subjects under each missingness pattern in the 500 simulations is $1,469.9$ for complete cases, and $1,350.7$, $829.5$, $1,349.9$ for $\cC = 1,2,3$, respectively. 

\subsection{Evaluation metrics}

To evaluate the performance of the proposed and the four alternative methods in simulations, we report three key metrics: 1) estimation bias; 2) the coverage rates of the 95\% confidence intervals (CIs); 3) the average width of these intervals. Specifically, suppose that $M$ replicates are generated in data simulation, and for the $j$-th replicate, let $\widehat{\beta}^{(j)}$ be the point estimate for a scalar parameter, and $(L^{(j)}, U^{(j)})$ its 95\% CI. We compute the estimation bias by $ \texttt{Bias} = \frac{1}{M} \sum^M_{j=1} \left ( \widehat{\beta}^{(j)} - \beta_{\text{true}} \right ).$ We compute the coverage rate by $ \texttt{Coverage Rate} = \frac{1}{M} \sum^M_{j=1} I \left (\beta_{\text{true}} \in (L^{(j)}, U^{(j)}) \right ). $ Finally, we compute the average width of the 95\% CIs by $ \texttt{Avg. 95\% CI Width} = \frac{1}{M} \sum^M_{j=1} \left ( U^{(j)} - L^{(j)} \right ). $

\subsection{Results}

\begin{figure}[hp!]
    \centering
    \begin{subfigure}[b]{.8\textwidth}
        \centering
        \includegraphics[width=\linewidth]{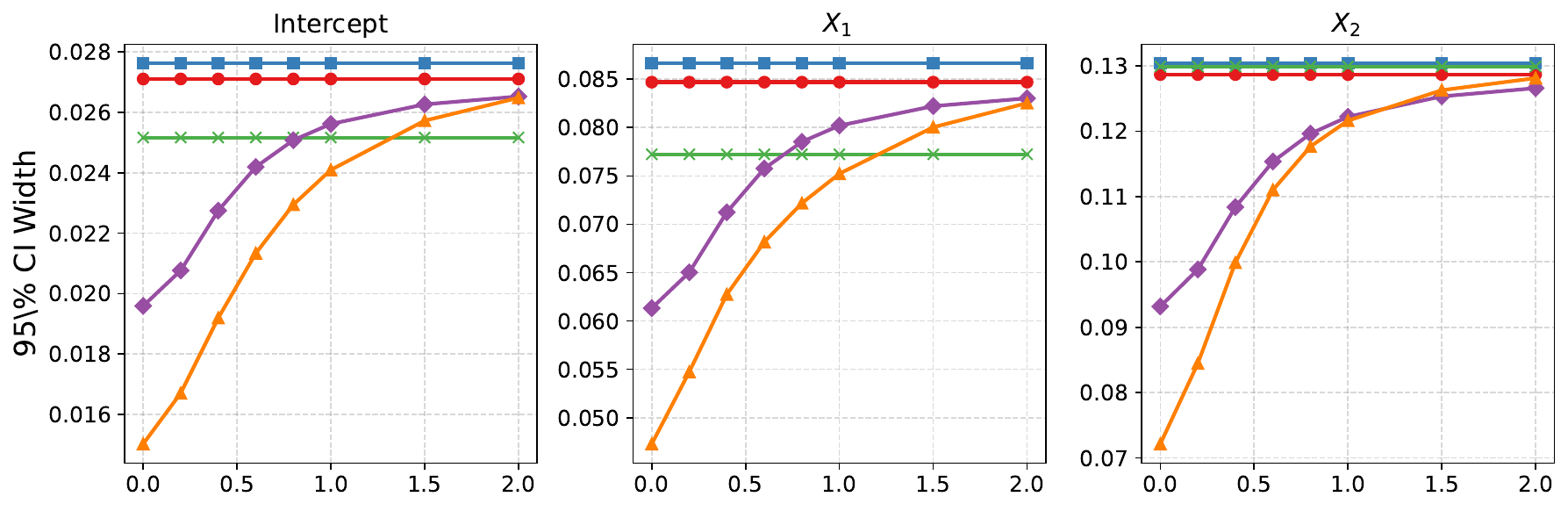}
        \caption{}
    \end{subfigure}
    \begin{subfigure}[b]{.8\textwidth}
        \centering
        \includegraphics[width=\linewidth]{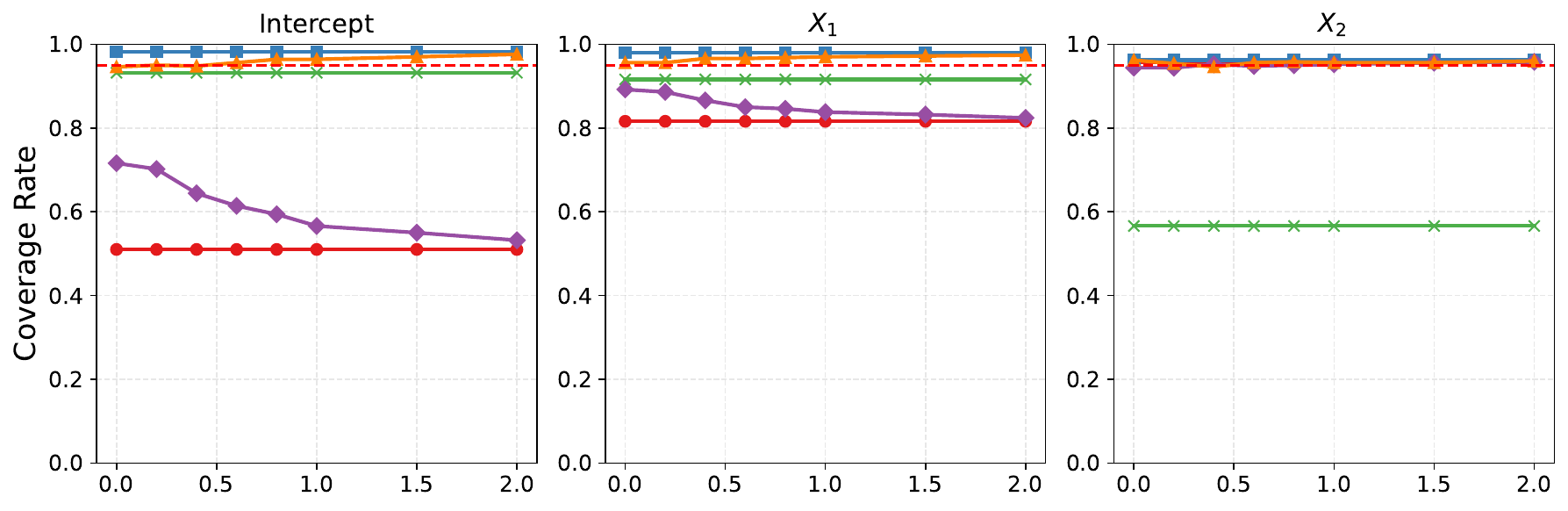}
        \caption{}
    \end{subfigure}
    \begin{subfigure}[b]{.8\textwidth}
        \centering
        \includegraphics[width=\linewidth]{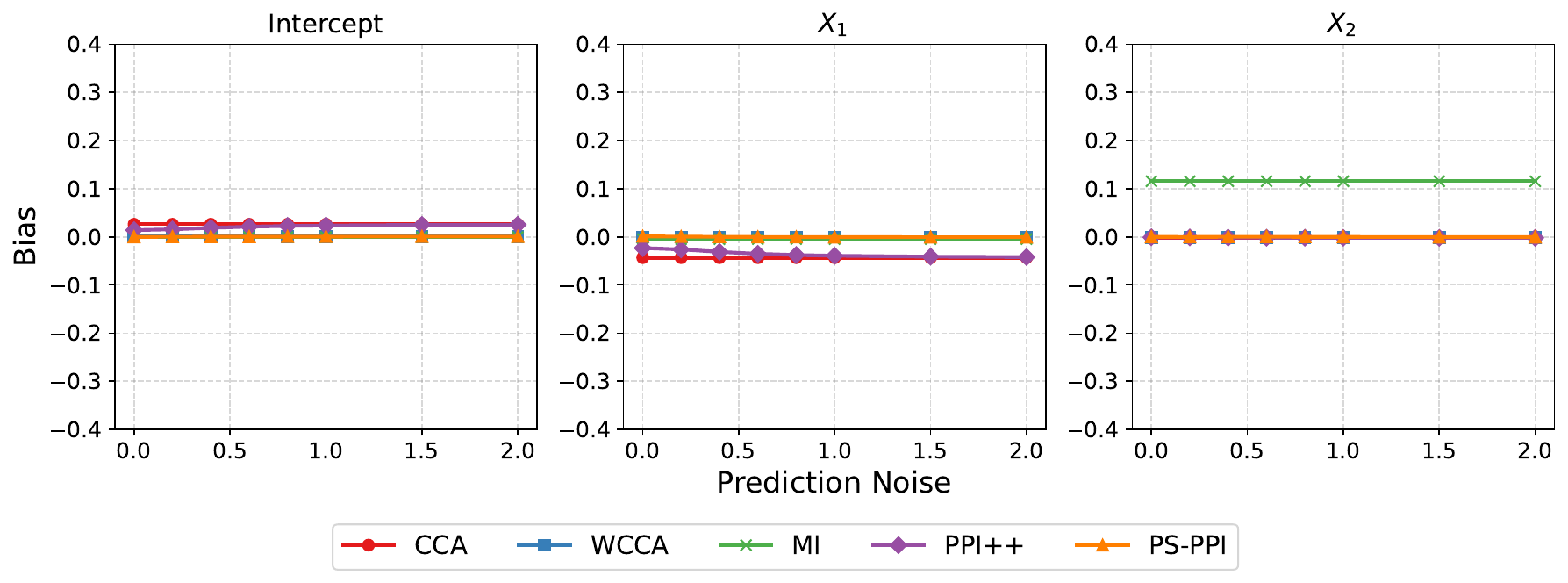}
        \caption{}
    \end{subfigure}
    \caption{\small Performance comparison among the proposed (PS-PPI) and four alternative methods (CCA, WCCA, MI, PPI\texttt{++}). (a): Widths of 95\% confidence intervals ($\log_{10}$ transformed); (b): coverage rates of the 95\% confidence intervals (red dashed lines indicate 95\%); (c): estimation biases for different methods with estimated propensity scores based on correctly specified propensity score models. The X-axis in all the panels represents the prediction \textbf{noise} level used in the simulations. The prediction bias level is fixed at $\lambda_{\text{pred}} = 0$. }
    \label{fig:varying_prediction_noise_estimated_ps}
\end{figure}

In the main text, we present the results under varying noise levels with a fixed prediction bias ($\lambda_{\text{pred}} = 0$), and estimated propensity scores in Figure~\ref{fig:varying_prediction_noise_estimated_ps}. Additional experimental results obtained under varying prediction biases, as well as different settings of the propensity score models are provided in Figures~\ref{fig:varying_prediction_bias_estimated_ps},~\ref{fig:varying_prediction_noise_known_ps},~\ref{fig:varying_prediction_bias_known_ps},~\ref{fig:varying_prediction_noise_misspecified_ps}, and~\ref{fig:varying_prediction_bias_misspecified_ps}, Section~\ref{sec:additional-results-simulation} of the Supplementary Materials. Note that because CCA, WCCA, and MI do not use ML predictions, the coverage rate, the width of the 95\% CIs, and the bias remain constant with respect to the quality of the ML predictions.

\paragraph{Coverage rate} Figure \ref{fig:varying_prediction_noise_estimated_ps}(b) shows that the CCA approach achieves coverage rates of 51.0\% and 81.6\% for $\beta_0$ and $\beta_1$, respectively, which are below the nominal level of 95\%. This is expected because CCA does not use propensity scores to adjust for the missing-at-random mechanism that renders the complete cases non-representative of the entire population. WCCA achieves conservative coverage rates of 98.2\%, 98.0\%, and 96.2\% on the three coefficients, respectively. The conservativeness is due to the use of the robust sandwich variance that does not account for the uncertainty from the estimated propensity score model. From Figure~\ref{fig:varying_prediction_noise_known_ps} in the Supplementary Materials, when the propensity score model is known, WCCA exhibits valid coverages around 95\% for the three coefficients. MI achieves coverage rates of 93.2\%, 91.6\% and 56.6\%, respectively for $\beta_0$, $\beta_1$, and $\beta_2$, falling short of the nominal 95\% level. Under perfect predictions, PPI\texttt{++} achieves coverage rates of 71.6\% and 89.2\% for $\beta_0$ and $\beta_1$ due to the fact that it does not use the propensity score, similar to the CCA approach. When the quality of the ML predictions degrades, the coverage rates for $\beta_0$ and $\beta_1$ decrease and converge to levels similar to the CCA approach. PS-PPI, on the other hand, achieves a coverage rate of around 95\% when the perfect predictions are available and tends to provide conservative confidence intervals when the quality of the ML predictions deteriorates, similar to the behavior of its base WCCA estimator. 

\paragraph{Widths} Based on Figure \ref{fig:varying_prediction_noise_estimated_ps}(a), the widths of the 95\% CIs for all three coefficients obtained from CCA are 0.027, 0.085, and 0.129, respectively; the widths for WCCA are similar (0.028, 0.087, and 0.130, respectively). MI has  95\% CIs of width 0.025, 0.077, and 0.130 for the three coefficients; however, the empirical coverage rates for $\beta_1$ and $\beta_2$ are below nominal levels. PS-PPI yields uniformly narrower 95\% CIs compared to WCCA, the only alternative method with valid coverage rates for all three coefficients. Furthermore, the widths of 95\% CIs obtained from PS-PPI are strongly impacted by the quality of the ML predictions. Specifically, PS-PPI tends to produce narrower 95\% CIs when ML predictions have better quality. The PPI\texttt{++} method shows a similar trend, but has wider 95\% CIs than PS-PPI, except for the 95\% CI of $\beta_2$ when the quality of ML predictions becomes mediocre.

\paragraph{Bias} Figure \ref{fig:varying_prediction_noise_estimated_ps}(c) shows that CCA, PPI\texttt{++}, and MI fail to achieve 95\% nominal coverage (see Figure \ref{fig:varying_prediction_noise_estimated_ps} (b)) primarily due to their biased estimation. For $\beta_0$ and $\beta_1$, WCCA results in negligible biases of 0.0005 and 0.0006, respectively. On the other hand, PS-PPI results in negligible biases of 0.0007 and 0.0008, even with the worst quality of ML predictions. For the parameters that CCA, PPI\texttt{++}, and MI fail to achieve 95\% nominal coverage, they exhibit greater biases. Specifically, MI shows a bias of 0.115 for the estimation of $\beta_2$, while CCA results in a bias of -0.043 for $\beta_1$. In terms of PPI\texttt{++}, the bias for $\beta_1$ increases from -0.023 to -0.043 when the quality of ML predictions degrades.

Similar findings are observed from Figure~\ref{fig:varying_prediction_bias_estimated_ps},~\ref{fig:varying_prediction_noise_known_ps},~\ref{fig:varying_prediction_bias_known_ps}  in the Supplementary Materials. The simulation results demonstrate that under non-monotone missingness patterns with the MAR assumption for the missing mechanism, the proposed PS-PPI approach can produce consistent estimates and valid inference with efficiency dominance over WCCA when the propensity score model is correctly estimated or known. Such consistency and validity in inference hold regardless of the quality of the ML predictions. Furthermore, when precise ML predictions are available, PS-PPI is much more efficient than WCCA. This corroborates our theoretical results.

\paragraph{Robustness against propensity score model misspecification} The simulation results under misspecified propensity score models are shown in Figure~\ref{fig:varying_prediction_noise_misspecified_ps} and~\ref{fig:varying_prediction_bias_misspecified_ps} in the Supplementary Materials. Note that WCCA and PS-PPI are the only two methods that are impacted by the propensity score model specification. When the propensity score model is misspecified in terms of the covariates, all the methods show a degree of performance in the width of the 95\% CIs similar to the one shown in Figure~\ref{fig:varying_prediction_noise_estimated_ps}. In terms of the coverage rate, Figure~\ref{fig:varying_prediction_noise_misspecified_ps} and~\ref{fig:varying_prediction_bias_misspecified_ps} show that WCCA fails to achieve the nominal 95\% coverage rate for $\beta_0$ and $\beta_1$. PS-PPI achieves the 95\% coverage rate when the ML predictions are sufficiently precise, while these coverage rates decline when the quality of ML predictions degrades. However, it can be seen that the coverage rates from PS-PPI still consistently outperform the WCCA baseline. Furthermore, PS-PPI also maintains a higher coverage rate and narrower intervals compared to the PPI\texttt{++} method, which only utilizes the fully-observed and missing-outcome subsets of data.

\section{Application to the \textit{All of Us} Data}
\label{sec:data_example}

\paragraph{Background} Education attainment is a key social determinant of cardiovascular health, shaping lifestyle behaviors, access to healthcare, and long-term disease outcomes \citep{wang2024association}. Among the biological pathways linking social conditions to cardiovascular risk, systemic inflammation (often measured through C-reactive protein (CRP)) plays a central role. CRP has been widely associated with the development of cardiovascular disease (CVD)\citep{ridker2003clinical}. In this section, we use data from the \textit{All of Us} (AoU) program~\citep{all2019all} to examine how education levels relate to CRP concentrations, accounting for demographic and clinical covariates, to explore whether disparities in inflammation help explain the socioeconomic gradient in CVD risk \citep{magnani2024educational}. AoU was launched in 2018 to establish a large-scale database of electronic health records (EHRs) and biospecimens for hypothesis-free population research. In this study, we analyze data from 314,625 participants who self-reported as male or female and have both EHR and genotype data available in the Curated Data Repository (CDR) version 8.

\paragraph{Analysis model with ML predictions} To investigate the association, we fit the following regression model with the primary covariate (education), adjusting for age, sex, race, smoking status, and the first 5 genetic principal components (PCs):  
\begin{align}\label{eq:all_of_us}
\text{CRP}  \sim & \text{Intercept} + \text{Education} + \text{Age} + \text{Sex} + \text{Race}  + \text{Smoking}  +\text{PC1} + \cdots + \text{PC5}.
\end{align}
To address genetic relatedness, we keep one individual from each kinship group for analysis and use the rest to train ML models~\citep{mccaw2024synthetic}; this results in the total $314,625$ split into i) 20,278 participants for ML model building, and ii) 294,347 participants for the analytic sample. In the analytical sample, 47,317 subjects are fully observed; Missingness occurs in race, smoking status, education, and CRP (15 patterns with at least one missing entry). Besides the complete cases, only patterns with more than 10,000 samples are kept: 1) missing CRP only (172,685); 2) missing race and CRP (50,040); 3) missing race only (10,934). We drop the remaining patterns with a total of 13,371 samples (4.54\% of the analytical sample). This analysis decision simplifies modeling of the probabilities of missingness for 4 instead of 16 patterns with a significant portion of the original analytic sample retained for analysis. Each missing variable is imputed using either gradient boosting decision trees \citep[XGBoost,][]{chen2016xgboost} or random forest, whichever performs better on a validation set. Predictors for imputation include age, sex, genetically inferred ancestry, PCs, and diagnostic events coded by Systematized Medical Nomenclature for Medicine–Clinical Terminology (SNOMED)~\citep{stearns2001snomed}. We apply the PS models in the form proposed by~\cite{sun2018inverse} (similar to Eq.~(\ref{eq:ps_model_1}) and~(\ref{eq:ps_model_2})) and include all the available variables in each pattern for fitting. We compare against WCCA, or using only the missing outcome pattern for PS-PPI. We conduct sensitivity analysis of PS-PPI without IPW and compare against CCA and its outcome-based variant. For each method, we report coefficient estimates, standard error, 95\% CIs, and variance ratio (relative to WCCA or CCA).

\begin{table}
\centering
\caption{Summary statistics for different subsets of the \textit{All of Us} analytic sample used in this study: 1) complete cases (CC); 2) CC + missing outcome only data; 3) CC + data with missingness patterns of size $\geq$ 10K; 4) all data. For C-reactive protein (CRP), mean, standard deviation, and range are reported. For other variables, the values represent percentages; counts are shown in parentheses. “Missing” rows show counts only. Missingness in smoking status and education levels are due to responses of ``Skip'', ``Don’t know'', or ``Prefer not to answer''; missingness in race is due to responses of ``Skip'', ``None of these'', or ``Prefer not to answer''. Note that age is treated as a continuous variable in the regression model.}
\label{tab:summary_aou}
\resizebox{1.0\textwidth}{!}{
\begin{tabular}{lcccc}
\toprule
 & \textbf{CC} & \textbf{CC+missing CRP} & \textbf{CC+patterns of size $\geq$ 10K} & \textbf{Total} \\
\midrule
\textbf{Sample Size} & 47{,}317 & 220{,}002 & 280{,}976 & 294{,}347 \\
\addlinespace[0.5em]
\textbf{C-reactive protein (CRP; mg/L)} & & & & \\
Mean (SD) & 3.9 (5.0) & 3.9 (5.0) & 4.1 (5.1) & 4.1 (5.1) \\
Range & 0.0--26.1 & 0.0--26.1 & 0.0--26.1 & 0.0--26.1 \\
\addlinespace[0.5em]
\textbf{College graduate or higher} & 50.1 (23{,}729) & 51.6 (113{,}617) & 46.5 (130{,}688) & 46.3 (133{,}523) \\
Missing & (0) & (0) & (0) & (6{,}072) \\
\addlinespace[0.5em]
\textbf{Age group} & & & & \\
18--24 & 0.4 (170) & 0.7 (1{,}464) & 0.8 (2{,}312) & 0.8 (2{,}489) \\
25--34 & 4.9 (2{,}332) & 9.4 (20{,}575) & 10.9 (30{,}497) & 10.9 (31{,}998) \\
35--44 & 9.7 (4{,}583) & 13.8 (30{,}386) & 14.9 (41{,}829) & 14.9 (43{,}882) \\
45--54 & 12.1 (5{,}734) & 13.3 (29{,}193) & 14.3 (40{,}168) & 14.3 (42{,}150) \\
55--64 & 20.0 (9{,}468) & 19.4 (42{,}609) & 19.4 (54{,}384) & 19.5 (57{,}277) \\
65+ & 52.9 (25{,}030) & 43.5 (95{,}775) & 39.8 (111{,}786) & 39.6 (116{,}551) \\
\addlinespace[0.5em]
\textbf{Sex at birth} & & & & \\
Female & 61.5 (29{,}102) & 59.6 (131{,}070) & 60.9 (171{,}010) & 60.6 (178{,}463) \\
Male & 38.5 (18{,}215) & 40.4 (88{,}932) & 39.1 (109{,}966) & 39.4 (115{,}884) \\
\addlinespace[0.5em]
\textbf{Race} & & & & \\
Asian & 2.3 (1{,}083) & 3.9 (8{,}666) & 3.9 (8{,}666) & 3.9 (8{,}995) \\
White & 78.8 (37{,}273) & 72.7 (159{,}848) & 72.7 (159{,}848) & 71.8 (164{,}883) \\
Black & 16.6 (7{,}862) & 20.9 (45{,}993) & 20.9 (45{,}993) & 21.7 (49{,}708) \\
Others & 2.3 (1{,}099) & 2.5 (5{,}495) & 2.5 (5{,}495) & 2.6 (5{,}968) \\
Missing & (0) & (0) & (60{,}974) & (64{,}793) \\
\addlinespace[0.5em]
\textbf{Have ever smoked $\geq$100 cigarettes} & 45.4 (21{,}463) & 42.1 (92{,}627) & 40.3 (113{,}103) & 40.3 (115{,}628) \\
Missing & (0) & (0) & (0) & (7{,}731) \\
\addlinespace[0.5em]
\textbf{Depression} & 42.1 (19{,}932) & 26.5 (58{,}217) & 26.2 (73{,}746) & 26.2 (77{,}196) \\
\addlinespace[0.5em]
\textbf{Hypertension} & 63.2 (29{,}909) & 42.4 (93{,}384) & 41.3 (115{,}996) & 41.2 (121{,}409) \\
\addlinespace[0.5em]
\textbf{Diabetes} & 30.6 (14{,}467) & 18.5 (40{,}739) & 19.5 (54{,}732) & 19.5 (57{,}469) \\
\addlinespace[0.5em]
\textbf{Anxiety} & 45.6 (21{,}596) & 29.0 (63{,}872) & 28.6 (80{,}450) & 28.6 (84{,}152) \\
\addlinespace[0.5em]
\textbf{Obesity} & 45.0 (21{,}289) & 28.8 (63{,}268) & 29.5 (82{,}849) & 29.4 (86{,}515) \\
\bottomrule
\end{tabular}}
\end{table}

\begin{table}
\centering
\caption{\textit{All of Us} based coefficient estimates and 95\% CIs for education level in the regression model in Eq.~(\ref{eq:all_of_us}). Smaller variance ratios represent larger efficiency gains.}
\label{tab:data_example_results}
\resizebox{1.0\linewidth}{!}{%
\begin{tabular}{c|lcccc}
\toprule
 & Method & $\widehat{\beta}_{\textrm{edu}}$ & Standard Error & 95\% CIs & Variance Ratio \\ 
\midrule
\multirow{3}{*}{without IPW} 
 & CCA (baseline)     & -1.187 & 0.048 & $[-1.281, -1.093]$ & 1 \\
 & PS-PPI (outcome missing only)     & -0.946 & 0.045 & $[-1.035, -0.857]$ & 0.888 \\
 & PS-PPI             & -1.091 & 0.042 & $[-1.174, -1.008]$ & \textbf{0.778} \\
\midrule
\multirow{3}{*}{with IPW} 
 & WCCA (baseline)    & -1.192 & 0.053 & $[-1.295, -1.088]$ & 1 \\
 & PS-PPI (outcome missing only) & -0.928 & 0.050 & $[-1.027, -0.830]$ & 0.903 \\
 & PS-PPI             & -1.110 & 0.047 & $[-1.202, -1.017]$ & \textbf{0.796} \\
\bottomrule
\end{tabular}}
\end{table}

\paragraph{Results} Table 1 shows the summary statistics of the included variables and several key conditions across different data subsets. Specifically, we compute the summary statistics for 1) the complete-case subset; 2) the subset consisting of complete cases and outcome-missing-only cases; 3) the subset of complete cases and all missingness patterns with more than 10,000 samples; and 4) the full dataset. The characteristics of several variables, including both the outcome and the covariate of interest, differ across these subsets. Specifically, in the subset with only complete cases and outcome-missing cases, the sample mean of the CRP level (in mg/L) is 3.9, compared with 4.1 in the full data. Similarly, the proportion of participants with a college degree or higher is 50.1\% and 51.6\% for the complete-case and complete-case plus outcome-missing subsets, respectively, both higher than that in the full dataset (46.3\%). These discrepancies indicate that CCA and outcome-based prediction-based inference methods rely on subsets with distributions shifted from the overall population, thereby making the result less generalizable to the target population. Even if a method that uses the cases with complete or outcome-missing-only data can be modified to account for such distributional differences using propensity scores for outcome missingness, participants with other missingness patterns and the associated information are not used. In contrast, for the subset including complete cases and all patterns with more than 10,000 samples, the sample mean of CRP (4.1) matches that of the full data, and the proportion of college graduates (46.5\%) closely approximates the full-sample value (46.3\%). This demonstrates that the proposed PS-PPI method enables inference based on data subsets that are more representative of the full data distribution, thereby improving the generalizability of the inferential results under data missingness.

In terms of the statistical analysis, from Table~\ref{tab:data_example_results}, all methods show that a higher education level is significantly associated with a lower CRP level, adjusting for age, sex, race, smoking status, and the first 5 genetic PCs. Specifically, the WCCA estimate for ${\beta}_{\textrm{edu}}$  is $-1.192$ (95\% CI: [-1.295, -1.088]). After integrating additional data from the three retained missingness patterns (each of size $\geq$ 10K) and ML imputations, the PS-PPI achieves a 20.4\% variance reduction relative to WCCA, resulting in an estimate of $-1.110$ (95 \% CI: [-1.202, -1.017]). We observe a smaller magnitude of variance reduction (9.7\%) when integrating additional data but with only CRP missing and predicted. As a sensitivity analysis, without IPW, the CCA estimate is $-1.187$ (95\% CI: [-1.281, -1.093]), similar to the WCCA estimate, indicating a small degree of selection bias of having CRP test results. The percent variance reduction of PS-PPI with only complete plus missing CRP data and PS-PPI with data from the three retained missingness patterns is 11.2\% and 22.2\%, respectively, favoring PS-PPI, which integrates more data and ML predictions.

\section{Conclusion and Discussion}
\label{sec:discussion}

This paper introduces a unified framework for conducting valid statistical inference using machine learning imputations in datasets characterized by general, multiple different missing data patterns under the MAR assumption. By leveraging the concept of missingness patterns and building upon the theoretical foundations of AIPW estimators, our work significantly extends the applicability of prediction-based inference beyond the more restrictive MCAR assumption and simple missing data structures, such as missing outcomes or missing multiple covariates. A key practical advantage of our proposed methods is their adaptability to any Z-estimation framework, allowing researchers to utilize existing statistical software for weighted complete-case analysis with minimal additional implementation effort. The method introduced here can integrate individuals across diverse missingness patterns, hence maximizing data utility and inclusiveness while ensuring accuracy.

Despite these important contributions, our work has several limitations that inform avenues for future research. First, a central assumption of our methodology is the correct specification of the propensity score model, which governs the probability of observing different missingness patterns. This assumption is the price we pay in the challenging scenario where prediction-based inference is most needed: analysts typically have minimal or no control over the prediction models. As such, assumptions on the prediction models need to be avoided, and the correct specification of the propensity score model comes at a price. Although we demonstrate valid inference when propensity scores are known or consistently estimated from a correctly specified model, results from Figure~\ref{fig:varying_prediction_noise_misspecified_ps} and~\ref{fig:varying_prediction_bias_misspecified_ps} in the Supporting Information show that propensity score model misspecification will lead to biased estimation when the ML predictions are not sufficiently precise. It is worth noting that results from Figure~\ref{fig:varying_prediction_noise_misspecified_ps} and~\ref{fig:varying_prediction_bias_misspecified_ps} in the Supporting Information show that PS-PPI achieves the 95\% nominal coverage rate when the ML predictions are precise enough (or perfect). Future work should focus on mitigating the effect of propensity score model misspecification. Second, another practical limitation of PS-PPI is that it imposes assumptions on the minimum size of data within each missingness pattern to guarantee the learnability of the parameter of interest. For example, to fit a linear regression analysis model with $p$ parameters (including the intercept), at least $p + 1$ data points are needed. In this case, PS-PPI requires each missingness pattern within the data to contain at least $p + 1$ data points, and if this requirement is not satisfied, then data with that missingness pattern have to be dropped from the dataset. Third, while PS-PPI provides a data-driven approach for selecting tuning parameters ($\Wb_k$) that guarantees efficiency gains, the assumption of identical $\Wb_k$ across strata was made for simplicity and ease of implementation. Future work could explore more adaptive stratum-specific weight selection to potentially achieve even greater efficiency, although this may involve more complex optimization procedures. Fourth, the number of possible missingness patterns can grow exponentially with the number of variables. Although in many practical applications, the number of observed patterns is manageable, scenarios with a very large number of distinct patterns could pose computational challenges. Future work can improve the utility of the framework by developing adaptive procedures to use data from a small number of the most informative missingness patterns. Fifth, while our framework is versatile within Z-estimation problems, its extension to other inferential paradigms or specific complex data types, such as longitudinal or spatial data where the temporal and spatial dependency of missingness is critical, or high-dimensional settings where variable selection and inference are performed simultaneously, presents fertile ground for future research. The current work also focuses on the MAR assumption. Extending these prediction-powered inference ideas to handle missing not at random (MNAR) scenarios, while notoriously challenging due to identifiability issues, would be a significant theoretical and practical advancement. We leave these directions for future work.

\section*{Acknowledgment}

We gratefully acknowledge \textit{All of Us} participants for their contributions, without whom this research would not have been possible. We also thank the National Institutes of Health’s \textit{All of Us} Research Program for making available the participant data examined in this study. The authors acknowledge the use of Large language models (LLM)-assisted technologies by GPT-5 to refine the language and grammar. All LLM suggestions were evaluated by the first and the last authors prior to their incorporation into the manuscript. LLMs were strictly limited to proposing edits to the initial drafts and were not used to generate any new materials.

\section*{Supplementary Materials}

Supplementary Materials contain a glossary of notation, proofs, supporting figures, longer derivations, and additional results. Code for reproducing the experiments have been deposited in \texttt{ps-ppi} GitHub repository at: \href{https://github.com/chenxran/ps-ppi}{https://github.com/chenxran/ps-ppi}. The \textit{All of Us} data can be accessed at: \href{https://www.researchallofus.org/}{https://www.researchallofus.org/}.

\bibliographystyle{apalike}
\bibliography{sample}

\newpage
\appendix

\renewcommand{\thesection}{S\arabic{section}}
\renewcommand\thefigure{S\arabic{figure}}    
\renewcommand\thetable{S\arabic{table}}    
\numberwithin{equation}{section}
\makeatletter 
\renewcommand\theequation{\thesection.\arabic{equation}}    
\newcommand{\section@cntformat}{Supplement \thesection:\ }
\makeatother

\newpage
\section*{Supplementary Materials}

\setcounter{figure}{0}
\setcounter{table}{0}

We first provide a glossary for key notation used in the Main Paper. 
\bgroup
\def\arraystretch{1.2}
\begin{table}[H]
    \centering
    \begin{tabular}[\textwidth] {p{0.1\textwidth} p{0.8\textwidth}}
      \toprule
	\centering \textbf{Symbol} & \textbf{Description} \\
      \midrule
	\centering $\widehat{\vtheta}_{\text{PS-PPI}}$    &  The PS-PPI estimator proposed in this work. \\ 
    \centering $\widehat{\vtheta}_{\text{Full}}$    &  An estimator obtained from the fully-observed data. \\
    \centering $\widehat{\vtheta}_{\text{WCC}}$  & An estimator from complete-case analysis with inverse probability weighting (IPW). \\ 
    \centering $\widehat{\vtheta}_{\text{PP}}$  & An estimator from outcome-based prediction-powered inference~\citep{angelopoulos2023prediction}.  \\ 
    \centering $\widehat{\vtheta}_{\text{CC}}$  & An estimator from unweighted complete-case analysis.  \\ 
    \centering $\widehat{\vtheta}_{\text{Chen}}$  &  An outcome-based PPI\texttt{++} style estimator from \cite{gronsbell2024another, chen2000unified} that guarantees to be more efficient than $\widehat{\vtheta}_{\text{CC}}$. \\ 
   \centering $\widehat{\vtheta}_{\text{PTD}}$  &  An estimator from \cite{kluger2025prediction} for missing multiple covariates with nonuniform sampling weights. \\ 
    \centering $\widehat{\vDelta}_k$  & Estimates of zero obtained from the $k$-th missingness patterns;  $\widehat{\vDelta} = \widehat{\vgamma}_{1,k} - \widehat{\vgamma}_{2,k}$. \\ 
    \centering 
    $\widehat{\vgamma}_{1,k}$ & The solution of Equation (\ref{eq:general_missing_pattern_ee_gamma_1}), which use complete-case data, with all entries supposed to be missing in the $k$-th missingness patterns being substituted with predictions generated from $h_k$. \\
    \centering
    $\widehat{\vgamma}_{2,k}$  &  The solution of Equation (\ref{eq:general_missing_pattern_ee_gamma_2}), which use data with $k$-th missingness patterns, with all missing values imputed by predictions generated from $h_k$.\\ 
    \centering 
    $\widehat{\Wb}_{k}$  &  Weight matrices of dimension $d \times d$ that may depend on data. In this work, Corollary~\ref{corollary:best_tuning_parameter} is used to determine $\widehat{\Wb}_{k}$. \\ 
    \centering
    $\vpsi(\vY_i; \vtheta)$  &  The estimating equation for $\widehat{\vtheta}_{\text{Full}}$ with the $i$-th data plugged in. It is a function of $\vtheta$ and can be abbreviated as $\vpsi^{(i)}$. When the true parameter $\vtheta^\ast$ is plugged in, i.e.,  $\vpsi(\vY_i; \vtheta^\ast)$, it can be abbreviated as  $\vpsi^{\ast(i)}$.  \\
    \centering
    $\vpsi^{(i)}_{\text{WCC}}$  &  The estimating equation for $\widehat{\vtheta}_{\text{PS-PPI}}$ with the $i$-th data point plugged in. \\
    \centering
    $\vpsi^{(i)}_{1,k}$  &  The estimating equation for $\widehat{\vgamma}_{1,k}$ with the $i$-th data point plugged in.  \\
    \centering
    $\vpsi^{(i)}_{2,k}$  &  The estimating equation for $\widehat{\vgamma}_{2,k}$ with the $i$-th data point plugged in. \\
    \centering
    $\vm^{(i)}$ & The estimating equation of the propensity score model with the $i$-th data point plugged in. \\
    \centering 
    $\Sigma_{\vtheta}$  & The asymptotic variance of $\widehat{\vtheta}_{\text{WCC}}$. \\
    \centering
    $\Sigma_{\vtheta,\vgamma_{1,k}}$ & The asymptotic covariance between $\widehat{\vtheta}_{\text{WCC}}$ and $\widehat{\vgamma}_{1,k}$. \\
    \centering
    $\Sigma_{\vgamma_{1,k},\vgamma_{1,k'}}$ & The asymptotic covariance between $\widehat{\vgamma}_{1,k}$ and $\widehat{\vgamma}_{1,k'}$, $k, k' \in \{1, ..., K\}$. When $k = k'$, it becomes the asymptotic variance of $\widehat{\vgamma}_{1,k}$. \\
    \centering
    $\Sigma_{\vgamma_{2,k}}$ & The asymptotic variance of $\widehat{\vgamma}_{2,k}$. \\
    \centering 
    $\Sigma_{\text{PS-PPI}}$ & The asymptotic variance of $\widehat{\vtheta}_{\text{PS-PPI}}$. \\ 
   \bottomrule 
    \end{tabular}
\end{table}
\egroup 

\section{Proofs}
\subsection{Proof of Proposition~\ref{proposition:gamma_1k_gamma_2k}}
\label{appendix:proof:gamma_1k_gamma_2k}

\begin{proof}
    Under the $k$-th missingness pattern, from Equation (\ref{eq:general_missing_pattern_ee_gamma_1}) and Assumption~\ref{assumption:gamma_existence} we have:
    \begin{align*}
        & \; \E_{(\boldsymbol{Y}, \cC)} \left [\frac{I(\cC_i = \infty)}{\pi_\infty(\boldsymbol{Y}_i)}  \vpsi \left ( \vY^{(k)}_{\text{imp}, i}; \vgamma^\ast_k \right ) \right ] \\
        &= \E_{\boldsymbol{Y}} \left[ \vpsi \left ( \vY^{(k)}_{\text{imp}, i}; \vgamma^\ast_k \right ) \E_{(\cC|\boldsymbol{Y})} \left[ \frac{I(\cC_i = \infty)}{\pi_\infty(\boldsymbol{Y}_i)} \right ] \right ] \\
        &= \E_{\boldsymbol{Y}} \left[ \vpsi \left ( \vY^{(k)}_{\text{imp}, i}; \vgamma^\ast_k \right ) \right ] = \vzero.
    \end{align*}
    Similarly, from Equation (\ref{eq:general_missing_pattern_ee_gamma_2}) we have:
    \begin{align*}
        &\quad \; \E_{(\boldsymbol{Y}, \cC)} \left [\frac{I(\cC_i = k)}{\pi_j(G_{k}(\boldsymbol{Y}_i))}  \vpsi \left ( \vY^{(k)}_{\text{imp}, i}; \vgamma^\ast_k \right ) \right ] \\
        &= \E_{\boldsymbol{Y}} \left[ \vpsi \left ( \vY^{(k)}_{\text{imp}, i}; \vgamma^\ast_k \right ) \E_{(\cC|\boldsymbol{Y})} \left[ \frac{I(\cC_i = k)}{\pi_j(G_{k}(\boldsymbol{Y}_i))} \right ] \right ] \\
        &= \E_{\boldsymbol{Y}} \left[ \vpsi \left ( \vY^{(k)}_{\text{imp}, i}; \vgamma^\ast_k \right ) \right ] = \vzero.
    \end{align*}
\end{proof}

\subsection{Proof of Theorem~\ref{thrm:general_missing_patterns}}
\label{appendix:proof:thrm1}

To prove Theorem~\ref{thrm:general_missing_patterns}, we first stack the estimating equations and obtain:

\begin{equation*}
        \vzero = \sum^N_{i=1} \bm{\Psi}^{(i)} =  
        \begin{bmatrix}
          \sum^N_{i=1}\vpsi_{\text{WCC}}^{(i)} \\[6pt]
          \sum^N_{i=1}\vpsi_{1,1}^{(i)} \\[2pt]
          \vdots \\[2pt]
          \sum^N_{i=1}\vpsi^{(i)}_{1,K} \\[6pt]
          \sum^N_{i=1}\vpsi^{(i)}_{2,1} \\[2pt]
          \vdots \\[2pt]
          \sum^N_{i=1}\vpsi^{(i)}_{2,K} \\[6pt]
          \sum^N_{i=1}\vm^{(i)}
        \end{bmatrix}.
\end{equation*}

We denote $\widehat{\bm{\vartheta}} = (\widehat{\vtheta}_{\text{WCC}}, \widehat{\vgamma}_{1,1}, \cdots \widehat{\vgamma}_{1,K}, \widehat{\vgamma}_{2,1}, \cdots, \widehat{\vgamma}_{2,K},  \widehat{\vomega})^\top$, and $\bm{\vartheta}^\ast = (\vtheta^\ast, \vgamma^\ast_1, \cdots \vgamma^\ast_K, \\ \vgamma^\ast_1, \cdots, \vgamma^\ast_K, \vomega^\ast)^\top$. We note that $ \vpsi^{(i)}_{\text{WCC}}$ is considered as a function of $(\vomega, \vtheta)$; similarly, $\vpsi^{(i)}_{1,k}$ is a function of $(\vomega, \gamma_{1,k})$, $\vpsi^{(i)}_{2,k}$ is a function of $(\vomega, \gamma_{2,k})$, and $\vm^{(i)}$ is afunctino of $\vomega$. As a result, $\bm{\Psi}^{(i)}$ is a function of $\bm{\vartheta} = (\vtheta, \vgamma_{1,1}, \cdots \vgamma_{1,K}, \vgamma_{2,1}, \cdots, \vgamma_{2,K}, \vomega)$. For simplicity, we will mark all the estimating equations with true parameters plugged in with an asteroid sign. For example, $\vpsi^{(i)}_{\text{WCC}}$ with $(\vomega^\ast, \vtheta^\ast)$ is $\vpsi^{\ast(i)}_{\text{WCC}}$. We follow~\citet[Theorem 5.21][]{van2000asymptotic} to make the following assumption:

\begin{assumption}
    \label{assumption:main}
    We assume that $\bm{\Psi}$ is measurable vector-valued function and for every $\bm{\vartheta}_1$ and $\bm{\vartheta}_2$ in a neighborhood of $\bm{\vartheta}^\ast$, 
    $$|| \bm{\Psi}(\vY; \bm{\vartheta}_1) - \bm{\Psi}(\vY; \bm{\vartheta}_2)|| \leq \dot{\bm{\Psi}}(\vY)||\bm{\vartheta}_1 - \bm{\vartheta}_2||,$$
    where the first-order derivative $\dot{\bm{\Psi}}$ is a measurable function with $P\dot{\bm{\Psi}}^2 < \infty$. Furthermore, assume that $P||\bm{\Psi}||^2 < \infty$ and $P\bm{\Psi}$ is differentiable at $\bm{\vartheta}^\ast$ with a nonsingular derivative matrix. Assume $P_n \bm{\Psi}(\vY; \widehat{\bm{\vartheta}}) = o_p(n^{-1/2})$ and $\widehat{\bm{\vartheta}} \overset{p}{\rightarrow} \bm{\vartheta}^\ast$.
\end{assumption}

The proof sketch of the Theorem~\ref{thrm:general_missing_patterns} is that: 1) we derive the asymptotic linear expansion of the stacked estimator $\widehat{\bm{\vartheta}}$; 2) we show that $\widehat{\vtheta}_{\text{PS-PPI}}$ is a linear transformation of the elements in $\widehat{\bm{\vartheta}}$, thereby obtaining the asymptotic linear expansion of $\widehat{\vtheta}_{\text{PS-PPI}}$.

\begin{proof}

By working on the stacked estimating equations, we have the following asymptotic linear expansion:
\begin{equation*} 
        \sqrt{N} (\widehat{\bm{\vartheta}} - \bm{\vartheta}^\ast) = \frac{1}{\sqrt{N}} \sum^N_{i=1} -\dot{\bm{\Psi}}^{-1} \bm{\Psi}^{\ast(i)} + o_p(1),
\end{equation*}
where
\begin{equation*}
    \dot{\bm{\Psi}} 
    = \E\Bigl[\tfrac{\partial \bm{\Psi}^{(i)}}{\partial \bm{\vartheta}}\Bigr|_{\bm{\vartheta}=\bm{\vartheta}^\ast}\Bigr]
    = \E\!\left[\,
    \begin{bmatrix}
      \frac{\partial \vpsi^{(i)}_{\text{WCC}}}{\partial \vtheta} & 
      \frac{\partial \vpsi^{(i)}_{\text{WCC}}}{\partial \vgamma_{1,1}} & 
      \cdots & 
      \frac{\partial \vpsi^{(i)}_{\text{WCC}}}{\partial \vgamma_{1,K}} & 
      \frac{\partial \vpsi^{(i)}_{\text{WCC}}}{\partial \vgamma_{2,1}} & 
      \cdots & 
      \frac{\partial \vpsi^{(i)}_{\text{WCC}}}{\partial \vgamma_{2,K}} & 
      \frac{\partial \vpsi^{(i)}_{\text{WCC}}}{\partial \vomega} \\[6pt]
      \frac{\partial \vpsi^{(i)}_{1,1}}{\partial \vtheta} & 
      \frac{\partial \vpsi^{(i)}_{1,1}}{\partial \vgamma_{1,1}} & 
      \cdots & 
      \frac{\partial \vpsi^{(i)}_{1,1}}{\partial \vgamma_{1,K}} & 
      \frac{\partial \vpsi^{(i)}_{1,1}}{\partial \vgamma_{2,1}} & 
      \cdots & 
      \frac{\partial \vpsi^{(i)}_{1,1}}{\partial \vgamma_{2,K}} & 
      \frac{\partial \vpsi^{(i)}_{1,1}}{\partial \vomega} \\[2pt]
      \vdots & \vdots & \ddots & \vdots & \vdots & \ddots & \vdots & \vdots \\[2pt]
      \frac{\partial \vm^{(i)}}{\partial \vtheta} & 
      \frac{\partial \vm^{(i)}}{\partial \vgamma_{1,1}} & 
      \cdots & 
      \frac{\partial \vm^{(i)}}{\partial \vgamma_{1,K}} & 
      \frac{\partial \vm^{(i)}}{\partial \vgamma_{2,1}} & 
      \cdots & 
      \frac{\partial \vm^{(i)}}{\partial \vgamma_{2,K}} & 
      \frac{\partial \vm^{(i)}}{\partial \vomega}
    \end{bmatrix}_{\bm{\vartheta}=\bm{\vartheta}^\ast}
    \,\right].
\end{equation*}

To obtain the inverse of $\dot{\bm{\Psi}}$, we first introduce the following notation:

$$\dot{\vpsi}_\infty = \E \left [ \frac{\partial \vpsi^{(i)}_{\text{WCC}}}{\partial \vtheta} \Bigr|_{(\vtheta, \vomega) = (\vtheta^\ast, \vomega^\ast)} \right ],~ \dot{\vpsi}_k = \E \left [\frac{\partial \vpsi^{(i)}_{1,k}}{\partial \vgamma_{1,k}} \Bigr|_{(\vgamma_{1,k}, \vomega) = (\vgamma^\ast_k, \vomega^\ast)} \right ] = \E \left [\frac{\partial \vpsi^{(i)}_{2,k}}{\partial \vgamma_{2,k}} \Bigr|_{(\vgamma_{2,k}, \vomega) = (\vgamma^\ast_k, \vomega^\ast)} \right ],$$
$$\dot{\vpsi}_\infty^\vomega = \E \left [ \frac{\partial \vpsi^{(i)}_{\text{WCC}}}{\partial \vomega} \Bigr|_{(\vtheta, \vomega) = (\vtheta^\ast, \vomega^\ast)} \right ],~ \dot{\vpsi}^\vomega_{1,k} = \E \left [\frac{\partial \vpsi^{(i)}_{1,k}}{\partial \vomega} \Bigr|_{(\vgamma_{1,k}, \vomega) = (\vgamma^\ast_k, \vomega^\ast)} \right ],$$
$$\dot{\vpsi}^\vomega_{2,k}  = \E \left [\frac{\partial \vpsi^{(i)}_{2,k}}{\partial \vomega} \Bigr|_{(\vgamma_{2,k}, \vomega) = (\vgamma^\ast_k, \vomega^\ast)} \right ],~ \Mb = \E \left [ \frac{\partial \vm^{(i)}}{\partial \vomega} \Bigr |_{\vomega = \vomega^\ast} \right ].$$
Note that according to Proposition~\ref{proposition:gamma_1k_gamma_2k}, $\E \left [\frac{\partial \vpsi^{(i)}_{1,k}}{\partial \vgamma_{1,k}} \Bigr|_{(\vgamma_{1,k}, \vomega) = (\vgamma^\ast_k, \vomega^\ast)} \right ] = \E \left [\frac{\partial \vpsi^{(i)}_{2,k}}{\partial \vgamma_{2,k}} \Bigr|_{(\vgamma_{2,k}, \vomega) = (\vgamma^\ast_k, \vomega^\ast)} \right ]$, therefore we are able to use a single notation to represent both. However, when taking derivative on $\vomega$ instead of $\vgamma_{1,k}$ or $\vgamma_{2,k}$, there is no guarantee that $\E \left [\frac{\partial \vpsi^{(i)}_{1,k}}{\partial \vomega} \Bigr|_{(\vgamma_{1,k}, \vomega) = (\vgamma^\ast_k, \vomega^\ast)} \right ]$ and $\E \left [\frac{\partial \vpsi^{(i)}_{2,k}}{\partial \vomega} \Bigr|_{(\vgamma_{2,k}, \vomega) = (\vgamma^\ast_k, \vomega^\ast)} \right ]$ are the same. As a result, we use two different notation to represent the two.

Then, by noticing that all the non-diagonal elements except those in the last column are $\vzero$, we have:
\begin{equation*}
    \dot{\bm{\Psi}}
    = 
    \begin{bmatrix}
        \dot{\vpsi}_\infty & \vzero & \cdots & \dot{\vpsi}_\infty^{\vomega} \\
        \vzero & \dot{\vpsi}_{1} & \cdots & \dot{\vpsi}^{\vomega}_{1,1} \\
        \vdots & \vdots & \ddots & \vdots \\
        \vzero & \vzero & \cdots & \Mb
    \end{bmatrix}
    =
    \begin{bmatrix}
        \Jb_{1} & \Jb_{2} \\
        \vzero & \Mb
    \end{bmatrix}
\end{equation*}
where $\Jb_1 = \text{diag}(\dot{\vpsi}_\infty, \dot{\vpsi}_1, \cdots, \dot{\vpsi}_K, \dot{\vpsi}_1, \cdots \dot{\vpsi}_K)$, $\Jb_2 = (\dot{\vpsi}_\infty^\vomega, \dot{\vpsi}^\vomega_{1, 1}, \cdots, \dot{\vpsi}^\vomega_{1, K}, \dot{\vpsi}^\vomega_{2,1}, \cdots \dot{\vpsi}^\vomega_{2, K})^\top$. From \cite{lu2002inverses}, the inverse of such a upper triangle $2 \times 2$ block matrix $\dot{\bm{\Psi}}$ is:
\begin{equation*}
\dot{\bm{\Psi}}^{-1}
=
\begin{bmatrix}
  \dot{\vpsi}^{-1}_\infty                              & \vzero                                   & \cdots & -\dot{\vpsi}_\infty^{-1}\dot{\vpsi}_\infty^{\vomega}\Mb^{-1}       \\[6pt]
  \vzero                                         & \dot{\vpsi}_{1}^{-1}                     & \cdots & -\dot{\vpsi}_{1}^{-1}\dot{\vpsi}_{1,1}^{\vomega}\Mb^{-1} \\[2pt]
  \vdots                                         & \vdots                                   & \ddots & \vdots                                                   \\[2pt]
  \vzero                                         & \vzero                                   & \cdots & \Mb^{-1}
\end{bmatrix}
\end{equation*}

From~\citet[Theorem 5.21][]{van2000asymptotic} and Assumption~\ref{assumption:main}, $\sqrt{N}(\widehat{\bm{\vartheta}} - \bm{\vartheta}^\ast)$ is asymptotically normal. To show that $\sqrt{N}(\widehat{\vtheta}_{\text{PS-PPI}} - \vtheta^\ast)$ is asymptotically normal, we want to show that it is $\widehat{\vtheta}_{\text{PS-PPI}}$ is a linear combination of the elements in $\widehat{\bm{\vartheta}}$. We know that $\widehat{\vtheta}_{\text{PS-PPI}} = \widehat{\vtheta}_{\text{WCC}} - \sum^{K}_{k = 1}\widehat{\Wb}_k (\widehat{\vgamma}_{1, k} - \widehat{\vgamma}_{2, k}) = \widehat{\vc}^\top \widehat{\bm{\vartheta}}$, where $\widehat{\vc} = (\Ib, -\widehat{\Wb}_1, \cdots, -\widehat{\Wb}_K, \widehat{\Wb}_1, \cdots, \widehat{\Wb}_K, \vzero)^\top$. On the other hand, since the true parameter of interest $\vtheta^\ast$ is the first element in $\bm{\vartheta}^\ast$, we can rewrite it as $\vtheta^\ast = \vtheta^\ast - \sum^K_{k=1} \widehat{\Wb}_k\vgamma^\ast_k  + \sum^K_{k=1} \widehat{\Wb}_k\vgamma^\ast_k = \widehat{\vc}^\top \bm{\vartheta}^\ast$. Define $\vc = (\Ib, -\Wb_1, \cdots, -\Wb_K, \Wb_1, \cdots, \\ \Wb_K, \vzero)^\top$; since $\widehat{\Wb}_k \overset{p}{\rightarrow} \Wb_k$, we have $\widehat{\vc} \overset{p}{\rightarrow} \vc$. The asymptotic linear expansion of $\sqrt{N}(\widehat{\vtheta}_{\text{PS-PPI}} - \vtheta^\ast)$ becomes:
\begin{equation*}
    \sqrt{N}(\widehat{\vtheta}_{\text{PS-PPI}} - \vtheta^\ast) = \widehat{\vc}^\top \sqrt{N}(\widehat{\bm{\vartheta}} - \bm{\vartheta}^\ast) = \frac{1}{\sqrt{N}} \sum^N_{i=1} -\vc^\top \dot{\bm{\Psi}}^{-1} \bm{\Psi}^{\ast(i)} + o_p(1).
\end{equation*}
Hence, $\sqrt{N}(\widehat{\vtheta}_{\text{PS-PPI}} - \vtheta^\ast)$ is asymptotically normal. We now derive the form of its asymptotic variance. First, 
\begin{equation*}
    \vc^\top \dot{\bm{\Psi}}^{-1} = 
    \begin{bmatrix}
        \dot{\vpsi}_\infty^{-1} \\[6pt]
        -\Wb_1 \dot{\vpsi}^{-1}_{1} \\[6pt]
        \vdots \\
        -\Wb_1 \dot{\vpsi}^{-1}_{K} \\[6pt]
        \Wb_1 \dot{\vpsi}^{-1}_{1} \\[6pt]
        \cdots \\ \Wb_K \dot{\vpsi}^{-1}_{K} \\[6pt]
        \left ( \dot{\vpsi}_\infty^{-1}\dot{\vpsi}_\infty^\vomega - \sum^K_{k=1} \Wb_k \dot{\vpsi}^{-1}_{k}(\dot{\vpsi}^\vomega_{1, k} - \dot{\vpsi}^\vomega_{2, k}) \right ) \Mb^{-1}
    \end{bmatrix}^\top
\end{equation*}
Denote $\vpsi^{(i)} = \vpsi(\vY_i; \vtheta)$, and $\vpsi^{(i)}_k = \vpsi(\vY^{(k)}_{\text{imp}, i}; \vgamma)$; similarly we have $\vpsi^{\ast(i)} = \vpsi(\vY_i; \vtheta^\ast)$ and $\vpsi^{\ast(i)}_k = \vpsi(\vY^{(k)}_{\text{imp}, i}; \vgamma_k^\ast)$. Since $\widehat{\Wb}_k \overset{p}{\rightarrow} \Wb_k$, we have:
\begin{align}
    &\; \; \; \; \sqrt{N}(\widehat{\vtheta}_{\text{PS-PPI}} - \vtheta^\ast) \nonumber \\ 
    &= \frac{1}{\sqrt{N}} \sum^N_{i=1} -\vc^\top \dot{\bm{\Psi}}^{-1} \bm{\Psi}^{\ast(i)} + o_p(1) \nonumber \\
    &= \frac{1}{\sqrt{N}} \sum^N_{i=1} \left [ -\dot{\vpsi}_\infty^{-1}\vpsi^{\ast(i)}_{\text{WCC}} + \sum^K_{k=1} \Wb_k\dot{\vpsi}^{-1}_k (\vpsi^{\ast(i)}_{1,k} - \vpsi^{\ast(i)}_{2,k}) \right . \nonumber \\
    &\; \; \; +  \left . \left ( \dot{\vpsi}_\infty^{-1}\dot{\vpsi}_\infty^\vomega - \sum^K_{k=1} \Wb_k \dot{\vpsi}^{-1}_{k}(\dot{\vpsi}^\vomega_{1, k} - \dot{\vpsi}^\vomega_{2, k}) \right ) \Mb^{-1}\vm^{\ast(i)} \right ] + o_p(1). \label{eq:psppi_asymptotic_linear_expansion}
\end{align}
Denote $\vGamma^{(i)} = -\dot{\vpsi}_\infty^{-1}\vpsi^{(i)}_{\text{WCC}} + \sum^K_{k=1} \Wb_k\dot{\vpsi}^{-1}_k (\vpsi^{(i)}_{1,k} - \vpsi^{(i)}_{2,k})$. Note that this is a function of $\bm{\vartheta}$. Then we have
$$ \dot{\vGamma}_\vomega := \E \left [ \frac{\partial \vGamma^{(i)}}{\partial \vomega} \Bigr|_{\bm{\vartheta} = \bm{\vartheta}^\ast} \right ] =  \dot{\vpsi}^{-1}\dot{\vpsi}^\vomega - \sum^K_{k=1} \Wb_k \dot{\vpsi}^{-1}_{k}(\dot{\vpsi}^\vomega_{1, k} - \dot{\vpsi}^\vomega_{2, k}).$$
As a result, from Eq.~(\ref{eq:psppi_asymptotic_linear_expansion}) we can obtain:
$$\sqrt{N}(\widehat{\vtheta}_{\text{PS-PPI}} - \vtheta^\ast) = \frac{1}{\sqrt{N}} \sum^N_{i=1} \left ( \vGamma^{\ast(i)} + \dot{\vGamma}_\vomega \Mb^{-1} \vm^{\ast(i)} \right )  + o_p(1),$$
and therefore the $i$-th influence function of $\widehat{\vtheta}_{\text{PS-PPI}}$ is 
$$\varphi_i = \vGamma^{\ast(i)} + \dot{\vGamma}_\vomega \Mb^{-1} \vm^{\ast(i)}.$$
Accordingly, $\sqrt{N}(\widehat{\vtheta}_{\text{PS-PPI}} - \vtheta) \overset{d}{\rightarrow} \cN(0, \text{Var}(\varphi_i))$.

\begin{equation}
    \label{eq:full_asymptotic_variance}
    \text{Var}(\varphi_i) = \E \left [ \left ( \vGamma^{\ast(i)} + \dot{\vGamma}_\vomega \Mb^{-1} \vm^{\ast(i)} \right ) \left ( \vGamma^{\ast(i)} + \dot{\vGamma}_\vomega \Mb^{-1} \vm^{\ast(i)} \right )^\top \right ]
\end{equation}
\end{proof}

\subsection{Proof of Corollary~\ref{corollary:ignore_uncertainty}}
\label{appendix:proof:corollary3.2}
\begin{proof}
    By ignoring the uncertainty from $\widehat{\pi}(\cdot)$, the $i$-th influence function of $\widehat{\vtheta}_{\text{PS-PPI}}$ becomes:
    \begin{align*}
        \vGamma^{\ast(i)} &= -\dot{\vpsi}_\infty^{-1}\vpsi^{\ast(i)}_{\text{WCC}} + \sum^K_{k=1} \Wb_k\dot{\vpsi}^{-1}_k (\vpsi^{\ast(i)}_{1,k} - \vpsi^{\ast(i)}_{2,k}) \\
        &= -\underbrace{\dot{\vpsi}_\infty^{-1} \frac{I(\cC_i = \infty)}{\pi_\infty(\boldsymbol{Y}_i)} \vpsi^{\ast(i)}}_{\Ab^{(i)}} + \underbrace{\sum^K_{k=1} \Wb_k \dot{\vpsi}^{-1}_k \left ( \frac{I(\cC_i = \infty)}{\pi_\infty(\boldsymbol{Y}_i)} - \frac{I(\cC_i = k)}{\pi_k(G_k(\boldsymbol{Y}_i))} \right ) \vpsi^{\ast(i)}_k}_{\Bb^{(i)}}.
    \end{align*}
    Correspondingly, the asymptotic variance becomes $\text{Var}(\vGamma^{\ast(i)}) = \E \left [ \vGamma^{\ast(i)} \vGamma^{\ast(i)\top} \right ] = \E [ \Ab^{(i)}\Ab^{(i)\top} - \Ab^{(i)}\Bb^{(i)\top} - \Bb^{(i)}\Ab^{(i)\top} + \Bb^{(i)}\Bb^{(i)\top} ]$. Here,
    \begin{align*}
        \Ab^{(i)}\Ab^{(i)\top} &= \dot{\vpsi}_\infty^{-1} \vpsi^{\ast(i)} \vpsi^{\ast(i)\top} \dot{\vpsi}_\infty^{-1\top} \frac{I(\cC_i = \infty)}{\pi^2_\infty(\boldsymbol{Y}_i)}, \\ \\
        \Ab^{(i)}\Bb^{(i)\top} &= \sum^K_{k=1}  \dot{\vpsi}_\infty^{-1} \vpsi^{\ast(i)} \vpsi^{\ast(i)\top}_k \dot{\vpsi}^{-1\top}_k \Wb^\top_k \frac{I(\cC_i = \infty)}{\pi_\infty(\boldsymbol{Y}_i)} \left ( \frac{I(\cC_i = \infty)}{\pi_\infty(\boldsymbol{Y}_i)} - \frac{I(\cC_i = k)}{\pi_k(G_k(\boldsymbol{Y}_i))} \right ) \\
        &= \sum^K_{k=1}  \dot{\vpsi}_\infty^{-1} \vpsi^{\ast(i)} \vpsi^{\ast(i)\top}_k \dot{\vpsi}^{-1\top}_k \Wb^\top_k \frac{I(\cC_i = \infty)}{\pi^2_\infty(\boldsymbol{Y}_i)}, \\ \\
        \Bb^{(i)}\Bb^{(i)\top} &= \sum^K_{k=1} \sum^K_{k'=1} \Wb_k  \dot{\vpsi}^{-1}_k \vpsi^{\ast(i)}_k \vpsi^{\ast(i)\top}_{k'} \dot{\vpsi}^{-1\top}_{k'} \Wb^\top_{k'} \left ( \frac{I(\cC_i = \infty)}{\pi^2_\infty(\boldsymbol{Y}_i)} + \frac{I(\cC_i = k)I(\cC_i = k')}{\pi_k(G_k(\boldsymbol{Y}_i)) \pi_{k'}(G_{k'}(\boldsymbol{Y}_i))} \right ) \\
    \end{align*}
    Therefore,
    \begin{align}
        \E \left [ \Ab^{(i)}\Ab^{(i)\top} \right ] &= \underbrace{\dot{\vpsi}_\infty^{-1} \E \left [ \frac{\vpsi^{\ast(i)} \vpsi^{\ast(i)\top}}{\pi_\infty(\boldsymbol{Y}_i)} \right ] \dot{\vpsi}_\infty^{-1\top}}_{\Sigma_{\vtheta}} \label{eq:sigma_theta}, \\
        \E \left [ \Ab^{(i)}\Bb^{(i)\top} \right ] &= \sum^K_{k=1}  \underbrace{\dot{\vpsi}_\infty^{-1} \E \left [ \frac{\vpsi^{\ast(i)} \vpsi^{\ast(i)\top}_k}{\pi_\infty(\boldsymbol{Y}_i)} \right ] \dot{\vpsi}^{-1\top}_k}_{\Sigma_{\vtheta, \vgamma_{1, k}}} \Wb^\top_k \label{eq:sigma_theta_gamma_1k}, \\
        \E \left [ \Bb^{(i)}\Bb^{(i)\top} \right ] &= \sum^K_{k=1} \sum^K_{k'=1} \Wb_k \underbrace{\dot{\vpsi}^{-1}_k \E \left [ \frac{\vpsi^{\ast(i)}_k \vpsi^{\ast(i)\top}_{k'}}{\pi_\infty(\vY_i)} \right ]  \dot{\vpsi}^{-1\top}_{k'}}_{\Sigma_{\vgamma_{1,k}, \vgamma_{1,k'}}} \Wb^\top_{k'} \label{eq:sigma_gamma_1k1k} \\
        &+ \sum^K_{k = 1} \Wb_k \underbrace{\dot{\vpsi}^{-1}_k \E \left [ \frac{\vpsi^{\ast(i)}_k \vpsi^{\ast(i)\top}_{k}}{\pi_k(G_k(\vY_i))} \right ]  \dot{\vpsi}^{-1\top}_{k}}_{\Sigma_{\vgamma_{2, k}}} \Wb^\top_{k} \label{eq:sigma_gamma_2k}.
    \end{align}
From the asymptotic linear expansion of $\widehat{\vtheta}_{\text{WCC}}$, $\widehat{\vgamma}_{1,k}$
and $\widehat{\vtheta}_{2,k}$, we know that $\Sigma_{\vtheta}$ is the asymptotic variance of $\widehat{\vtheta}_{\text{WCC}}$; $\Sigma_{\vtheta, \vgamma_{1, k}}$ is the asymptotic covariance between $\widehat{\vtheta}_{\text{WCC}}$ and $\widehat{\vgamma}_{1,k}$; $\Sigma_{\vgamma_{1, k}, \vgamma_{1, k'}}$ is the asymptotic covariance between $\widehat{\vgamma}_{1,k}$ and $\widehat{\vgamma}_{1,k'}$; $\Sigma_{\vgamma_{2, k}}$ is the asymptotic variance of $\widehat{\vtheta}_{2,k}$. As a results, the asymptotic variance of $\widehat{\vtheta}_{\text{PS-PPI}}$ is:

\begin{equation}
    \Sigma_{\text{PS-PPI}} = \Sigma_{\vtheta} - \sum^K_{k=1} \Wb_k \Sigma_{\vtheta, \vgamma_{1, k}} - \sum^K_{k=1} (\Sigma_{\vtheta, \vgamma_{1, k}})^\top \Wb_k^\top + \sum^K_{k=1} \sum^K_{k'=1} \Wb_k \Sigma_{\vgamma_{1,k}, \vgamma_{1,k'}} \Wb_{k'}^\top + \sum^K_{k=1} \Wb_{k} \Sigma_{\vgamma_{2, k}} \Wb^\top_k.
\end{equation}

\end{proof}

\subsection{Details of estimating $\Sigma_{\text{PS-PPI}}$.}
\label{sec:sigma_examples}

To obtain a consistent estimate of $\Sigma_{\text{PS-PPI}}$, we may want to obtain consistent estimates for both the meat (e.g., $\E \left [ \frac{\vpsi^{\ast(i)} \vpsi^{\ast(i)\top}}{\pi_\infty(\boldsymbol{Y}_i)} \right ]$ for $\Sigma_\vtheta$) and the bread (e.g., $\dot{\vpsi}_\infty^{-1}$ and $\dot{\vpsi}_\infty^{-1 \top}$ for $\Sigma_\vtheta$) parts for each of the variance-covariance matrices $\Sigma_{\vtheta}$, $\Sigma_{\vtheta, \vgamma_{1, k}}$, $\Sigma_{\vgamma_{1, k}, \vgamma_{1, k'}}$, and $\Sigma_{\vgamma_{2, k}}$. This can be done once the closed-form expression of $\vpsi^{\ast(i)}$, $\dot{\vpsi}_\infty$, $\vpsi^{\ast(i)}_k$, and $\dot{\vpsi}_k$ can be derived. In the following, we provide two examples, linear regression and logistic regression, to illustrate the derivation of the closed-form of $\Sigma_{\vtheta}$ from Eq.~(\ref{eq:sigma_theta}). Similar derivations can be done for $\Sigma_{\vtheta, \vgamma_{1, k}}$, $\Sigma_{\vgamma_{1, k}, \vgamma_{1, k'}}$, and $\Sigma_{\vgamma_{2, k}}$ when referring to Eq.~(\ref{eq:sigma_theta_gamma_1k}),~(\ref{eq:sigma_gamma_1k1k}), and~(\ref{eq:sigma_gamma_2k}).

\subsubsection{Linear Regression}

The $\vpsi$ of the linear regression for each subject:
$$\vpsi^{(i)} = X_i(y_i - X_i^\top \vtheta).$$
From Eq.~(\ref{eq:sigma_theta}), the meat of $\Sigma_{\vtheta}$ is 
$$ \E \left [ \frac{\vpsi^{\ast(i)} \vpsi^{\ast(i)\top}}{\pi_\infty(\boldsymbol{Y}_i)} \right ] = \E \left [ \frac{I(\cC_i = \infty)}{\pi_\infty^2(X_i, y_i)} X_i (y_i - X^\top_i\vtheta^\ast)^2X^\top_i\right ].$$
On the other hand, the bread of $\Sigma_{\vtheta}$ is:
$$\dot{\vpsi}_\infty = \E \left [ \frac{I(\cC_i = \infty)}{\pi(X_i, y_i)} X_iX_i^\top \right ].$$
Therefore, we can compute the consistent estimate of $\Sigma_\vtheta$ by estimating the bread and the meat using data:
$$ \widehat{\E} \left [ \frac{\vpsi^{\ast(i)} \vpsi^{\ast(i)\top}}{\pi_\infty(\boldsymbol{Y}_i)} \right ] = \frac{1}{N} \sum^N_{i=1} \frac{I(\cC_i = \infty)}{\widehat{\pi}^2_\infty(X_i, y_i)} X_i(y_i - X_i^\top \widehat{\vtheta}_{\text{WCC}})^2 X^\top_i = \frac{1}{N} \sum^N_{i=1} \frac{I(\cC_i = \infty) \widehat{\epsilon}^2_i}{\widehat{\pi}^2_\infty(X_i, y_i)} X_i X^\top_i,$$
$$\widehat{\E} \left [ \frac{I(\cC_i = \infty)}{\pi(X_i, y_i)} X_iX_i^\top \right ] = \frac{1}{N} \sum^N_{i=1} \frac{I(\cC_i = \infty)}{\widehat{\pi}_\infty(X_i, y_i)} X_iX^\top_i,$$
where $\widehat{\epsilon}_i = y_i - X_i^\top \widehat{\vtheta}_{\text{WCC}}$. For the corresponding meat and bread parts in all the other variance-covariance matrices, we could apply a similar procedure to obtain their consistent estimates.

\subsubsection{Logistic Regression}

In the logistic regression, the procedure for obtaining the consistent estimates of the variance-covariance matrices are similar; the only difference is the closed-form expression of $\vpsi^{\ast(i)}$, $\dot{\vpsi}_\infty$, $\vpsi^{\ast(i)}_k$, and $\dot{\vpsi}_k$. Specifically, under the logistic regression setting, the log-likelihood for each subject is:
$$\ell_i(\vtheta) = y_i X_i^\top\vtheta - \log \left (1 + e^{X_i^\top \vtheta} \right ). $$
Therefore $\vpsi^{(i)}$ is in the form of its first-order derivative:
$$\vpsi^{(i)} = X_i \left ( y_i - \frac{e^{X_i^\top \vtheta}}{1 + e^{X_i^\top \vtheta}} \right ).$$
As such, the meat is:
$$ \widehat{\E} \left [ \frac{\vpsi^{\ast(i)} \vpsi^{\ast(i)\top}}{\pi_\infty(\boldsymbol{Y}_i)} \right ] = \frac{1}{N} \sum^N_{i=1} \frac{I(\cC_i = \infty)}{\widehat{\pi}^2_\infty(X_i, y_i)} X_i \left (y_i - \frac{e^{X_i^\top \vtheta}}{1 + e^{X_i^\top \vtheta}} \right )^2 X^\top_i.$$
Similarly, since $\dot{\vpsi}_\infty = \E \left [ \frac{I(\cC_i = \infty)}{\pi(X_i, y_i)} X_iX^\top_i \mu_i (1 - \mu_i) \right ]$, where $\mu_i = \frac{e^{X_i^\top \vtheta}}{1 + e^{X_i^\top \vtheta}}$, the bread part is:
$$
\widehat{\E} \left [ \frac{I(\cC_i = \infty)}{\pi(X_i, y_i)} X_iX^\top_i \mu_i (1 - \mu_i) \right ] = \frac{1}{N} \sum^N_{i=1} \frac{I(\cC_i = \infty)}{\widehat{\pi}_\infty(X_i, y_i)} X_iX^\top_i \mu_i (1 - \mu_i).
$$
For the corresponding meat and bread parts in all the other variance-covariance matrices, we could apply a similar procedure to obtain their consistent estimates.

\subsection{Comment on the PS-PPI estimator as an AIPW estimator}
\label{sec:aipw}

In this subsection, we make a comment on why $\widehat{\vtheta}_{\text{PS-PPI}}$ is an AIPW estimator. Specifically, when the propensity score model is known, the $i$-th influence function of the AIPW estimators, as defined in~\cite{tsiatis2006semiparametric}, is:
\begin{align*}
    \varphi_{\text{Tsiatis}, i} & = - \E \left [ \frac{\partial \vpsi^{(i)}}{\partial \vtheta} \Bigr|_{\vtheta = \vtheta^\ast} \right ]^{-1} \left [ \frac{I(\cC_i = \infty)}{\pi_\infty(\vY_i)}\vpsi(\vY_i; \vtheta^\ast)  + \frac{I(\cC_i = \infty)}{\pi_\infty(\vY_i)} \left ( \sum^K_{k =1} \pi_{k}(G_k(\vY_i))L_{2k}(G_k(\vY_i)) \right ) \right . \\
    & \; \; \; \left . - \sum^K_{k=1} I(\cC = k)L_{2k}(G_k(\vY_i)) \right ] \\
    &= -\dot{\vpsi}_\infty^{-1} \frac{I(\cC_i = \infty)}{\pi_\infty(\vY_i)} \vpsi^{\ast(i)} - \sum^K_{k=1} \left ( \frac{I(\cC_i = \infty)\pi_{k}(G_k(\vY_i))}{\pi_\infty(\vY_i)} -  I(\cC = k)\right ) \dot{\vpsi}_\infty^{-1} L_{2k}(G_k(\vY_i)).
\end{align*}
Note that $\E \left [ \frac{\partial \vpsi^{(i)}}{\partial \vtheta} \Bigr|_{\vtheta = \vtheta^\ast} \right ]^{-1} = -\dot{\vpsi}_\infty^{-1}$ holds when the propensity score is known:
$$\dot{\vpsi}_\infty = \E \left [ \frac{\partial \vpsi^{(i)}_{\text{WCC}}}{\partial \vtheta} \Bigr|_{\vtheta = \vtheta^\ast} \right ] = \E \left [ \frac{I(\cC_i = \infty)}{\pi_\infty(\boldsymbol{Y}_i)} \frac{\partial \vpsi^{(i)} }{\partial \vtheta}  \Bigr|_{\vtheta = \vtheta^\ast} \right ] = \E \left [ \frac{\partial \vpsi^{(i)} }{\partial \vtheta}  \Bigr|_{\vtheta = \vtheta^\ast} \right ].$$
$L_{2k}(G_k(\vY_i))$ are a series of arbitrarily chosen functions. By carefully designing $L_{2k}(G_k(\vY_i))$ as:
$$L_{2k}(G_k(\vY_i)) = -\frac{1}{\pi_k(G_k(\vY_i))} \dot{\vpsi}_\infty \Wb_k \dot{\vpsi}^{-1}_k\vpsi^{\ast(i)}_k,$$
we have $\varphi_{\text{Tsiatis}, i} = \vGamma^{(i)}$, which suggests that the $\widehat{\vtheta}_{\text{PS-PPI}}$ estimator is an AIPW estimator when the propensity score model is known. A similar conclusion could be derived when the propensity score model is unknown but correctly specified, as one only needs to further expand $\widehat{\vomega}$ in $\vGamma^{(i)}$ (and $\varphi_{\text{Tsiatis}, i}$) to obtain the $i$-th influence function of the PS-PPI estimator  $\varphi_i$.

\subsection{Proof of Corollary~\ref{corollary:best_tuning_parameter}}

\begin{proof}
    If $\Wb_k$ are the same across $k$, then
    $$\Sigma_{\vtheta_{\text{PS-PPI}}} = \Sigma_{\vtheta} - \sum^K_{k=1} \Wb \Sigma_{\vtheta, \vgamma_{1, k}} - \sum^K_{k=1} (\Sigma_{\vtheta, \vgamma_{1, k}})^\top \Wb^\top + \sum^K_{k=1} \sum^K_{k'=1} \Wb \Sigma_{\vgamma_{1,k}, \vgamma_{1,k'}} \Wb^\top + \sum^K_{k=1} \Wb \Sigma_{\vgamma_{2, k}} \Wb^\top.$$
    Equating its first-order derivative to $\vzero$:
    \begin{align*}
        - \sum^K_{k=1} \Sigma_{\vtheta, \vgamma_{1, k}} - \sum^K_{k=1} (\Sigma_{\vtheta, \vgamma_{1, k}})^\top + 2 \sum^K_{k=1} \sum^K_{k'=1} \Wb \Sigma_{\vgamma_{1,k}, \vgamma_{1,k'}} + 2 \sum^K_{k=1} \Wb \Sigma_{\vgamma_{2, k}} = \vzero.
    \end{align*}

    Solving the equation, we know that 
    $$\Wb^\ast  = \left ( \sum^K_{k=1} \Sigma_{\vtheta, \vgamma_{1, k}} \right ) \times \left [ \left ( \sum^K_{k=1} \sum^K_{k'=1} \Sigma_{\vgamma_{1, k}, \vgamma_{1, k'}} \right ) + \left ( \sum^K_{k=1} \Sigma_{\vgamma_{2, k}} \right ) \right ]^{-1} $$
    leads to the optimal efficiency of $\widehat{\vtheta}_{\text{PS-PPI}}$. Plugging in $\Wb^\ast$ into Equation (\ref{eq:general_missing_patterns_plusplus_variance}), we have
    $$\Sigma_{\text{PS-PPI}} = \Sigma_{\vtheta} - \left ( \sum^K_{k=1} \Sigma_{\vtheta, \vgamma_{1, k}} \right ) \times \left [ \left ( \sum^K_{k=1} \sum^K_{k'=1} \Sigma_{\vgamma_{1, k}, \vgamma_{1, k'}} \right ) + \left ( \sum^K_{k=1} \Sigma_{\vgamma_{2, k}} \right ) \right ]^{-1} \times \left ( \sum^K_{k=1} \Sigma_{\vtheta, \vgamma_{1, k}} \right )^\top.$$    
    Furthermore, we know that 
    $$\left ( \sum^K_{k=1} \sum^K_{k'=1} \Sigma_{\vgamma_{1, k}, \vgamma_{1, k'}} \right ) + \left ( \sum^K_{k=1} \Sigma_{\vgamma_{2, k}} \right ) = \textrm{Var} \left ( \sum^K_{k=1} \widehat{\vgamma}_{1,k} \right ) + \sum^K_{k=1} \textrm{Var}(\widehat{\vgamma}_{2,k}).$$
    This suggests that $\left ( \sum^K_{k=1} \sum^K_{k'=1} \Sigma_{\vgamma_{1, k}, \vgamma_{1, k'}} \right ) + \left ( \sum^K_{k=1} \Sigma_{\vgamma_{2, k}} \right )$ is a positive definite matrix, therefore its inverse is a positive definite matrix. Therefore, 
    $$\left ( \sum^K_{k=1} \Sigma_{\vtheta, \vgamma_{1, k}} \right ) \times \left [ \left ( \sum^K_{k=1} \sum^K_{k'=1} \Sigma_{\vgamma_{1, k}, \vgamma_{1, k'}} \right ) + \left ( \sum^K_{k=1} \Sigma_{\vgamma_{2, k}} \right ) \right ]^{-1} \times \left ( \sum^K_{k=1} \Sigma_{\vtheta, \vgamma_{1, k}} \right )^\top$$ 
    is a positive definite matrix. As a result, we have $\Sigma_{\text{PS-PPI}} \preceq \Sigma_{\vtheta}$.

    Now we prove the if-and-only-if condition. If $\sum^K_{k=1} \Sigma_{\vtheta, \vgamma_{1, k}} = \vzero$, then obviously $\Sigma_{\text{PS-PPI}} = \Sigma_{\vtheta}$. If $\Sigma_{\text{PS-PPI}} = \Sigma_{\vtheta}$, since $\left [ \left ( \sum^K_{k=1} \sum^K_{k'=1} \Sigma_{\vgamma_{1, k}, \vgamma_{1, k'}} \right ) + \left ( \sum^K_{k=1} \Sigma_{\vgamma_{2, k}} \right ) \right ]^{-1}$ is a positive definite matrix, it cannot be $\vzero$. Hence, $\sum^K_{k=1} \Sigma_{\vtheta, \vgamma_{1, k}} = \vzero$. 
\end{proof}

\subsection{Proof of Proposition~\ref{proposition:equivalence_gronsbell}}

\begin{proof}
    From~\citet{gronsbell2024another}, we know that 
    $$\textbf{IF}_{i, \text{Chen}} = -\dot{\vpsi}_\infty^{-1} \left \{ \vpsi^{\ast(i)} \frac{I(\cC_i = \infty)}{\pi_\infty} +  \text{Cov}(\vpsi^{\ast(i)}, \vpsi^{\ast(i)}_1)\text{Var}(\vpsi^{\ast(i)}_1)^{-1}\vpsi^{\ast(i)}_1 \frac{\pi_\infty - I(\cC_i = \infty)}{\pi_\infty} \right \}. $$
    Note that $-\dot{\vpsi}_\infty^{-1}$ introduced in the above equation is equivalent to the $\Ab = \E \left [ \Xb \Xb^\top \right ]^{-1}$ introduced in~\citet{gronsbell2024another} in the linear regression case they considered. On the other hand, recall that when $\pi(\cdot)$ is known, the $i$-th influence function of $\widehat{\vtheta}_{\text{PS-PPI}}$ is:
    $$\textbf{IF}_{i, \text{PS-PPI}} = -\dot{\vpsi}_\infty^{-1} \frac{I(\cC_i = \infty)}{\pi_\infty(\boldsymbol{Y}_i)} \vpsi^{\ast(i)} + \sum^K_{k=1} \Wb_k \dot{\vpsi}^{-1}_k \left ( \frac{I(\cC_i = \infty)}{\widehat{\pi}_\infty(\boldsymbol{Y}_i)} - \frac{I(\cC_i = k)}{\pi_k(G_k(\boldsymbol{Y}_i))} \right ) \vpsi^{\ast(i)}_k.$$
    Under the missing outcome setting and the MCAR assumption, 
    $$\textbf{IF}_{i, \text{PS-PPI}} = -\dot{\vpsi}_\infty^{-1} \frac{I(\cC_i = \infty)}{\pi_\infty} \vpsi^{\ast(i)} + \Wb_1 \dot{\vpsi}^{-1}_1 \left ( \frac{I(\cC_i = \infty)}{\pi_\infty} - \frac{I(\cC_i = 1)}{\pi_1} \right ) \vpsi^{\ast(i)}_1.$$
    From Corollary~\ref{corollary:best_tuning_parameter}, we know that the optimal weight matrix:
    $$\Wb^\ast = \Sigma_{\vtheta, \vgamma_{1, 1}}\left ( \Sigma_{\vgamma_{1, 1}} + \Sigma_{\vgamma_{2, 1}} \right )^{-1},$$
    where 
    $$\Sigma_{\vtheta, \vgamma_{1, 1}} = \dot{\vpsi}_\infty^{-1} \E \left [ \frac{\vpsi^{\ast(i)} \vpsi^{\ast(i)\top}_1}{\pi_\infty(\boldsymbol{Y}_i)} \right ] \dot{\vpsi}^{-1\top}_1 = \dot{\vpsi}_\infty^{-1} \text{Cov}(\vpsi^{\ast(i)}, \vpsi^{\ast(i)}_1) \dot{\vpsi}^{-1\top}_1 \pi^{-1}_\infty,$$
    $$\Sigma_{\vgamma_{1, 1}} = \dot{\vpsi}^{-1}_1 \E \left [ \frac{\vpsi^{\ast(i)}_1 \vpsi^{\ast(i)\top}_{1}}{\pi_\infty(\vY_i)} \right ]  \dot{\vpsi}^{-1\top}_{1} = \dot{\vpsi}^{-1}_1 \text{Var}(\vpsi^{\ast(i)}_1) \dot{\vpsi}^{-1\top}_{1} \pi^{-1}_\infty,$$
    $$\Sigma_{\vgamma_{2, 1}} = \dot{\vpsi}^{-1}_1 \E \left [ \frac{\vpsi^{\ast(i)}_1 \vpsi^{\ast(i)\top}_{1}}{\pi_1(G_1(\vY_i))} \right ]  \dot{\vpsi}^{-1\top}_{1} = \dot{\vpsi}^{-1}_1 \text{Var}(\vpsi^{\ast(i)}_1) \dot{\vpsi}^{-1\top}_{1} (1 - \pi_\infty)^{-1}.$$
    Therefore,
    \begin{align*}
        \Wb^\ast &= \left ( \dot{\vpsi}_\infty^{-1} \text{Cov}(\vpsi^{\ast(i)}, \vpsi^{\ast(i)}_1) \dot{\vpsi}^{-1\top}_1 \pi^{-1}_\infty \right ) \times \left ( \dot{\vpsi}^{-1}_1 \text{Var}(\vpsi^{\ast(i)}_1) \dot{\vpsi}^{-1\top}_{1}\right )^{-1} \pi_\infty\pi_1 \\
        &= \left ( \dot{\vpsi}_\infty^{-1} \text{Cov}(\vpsi^{\ast(i)}, \vpsi^{\ast(i)}_1) \dot{\vpsi}^{-1\top}_1 \pi^{-1}_\infty \right ) \times \left ( \dot{\vpsi}^{\top}_1 \text{Var}(\vpsi^{\ast(i)}_1)^{-1} \dot{\vpsi}_{1} \right ) \pi_\infty\pi_1 \\
        &= \dot{\vpsi}_\infty^{-1} \text{Cov}(\vpsi^{\ast(i)}, \vpsi^{\ast(i)}_1) \text{Var}(\vpsi^{\ast(i)}_1)^{-1} \dot{\vpsi}_{1} \pi_1.
    \end{align*}
    Plugging $\Wb^\ast$ back to $\textbf{IF}_{i, \text{PS-PPI}}$:
    \begin{align*}
        \textbf{IF}_{i, \text{PS-PPI}} &= -\dot{\vpsi}_\infty^{-1} \frac{I(\cC_i = \infty)}{\pi_\infty} \vpsi^{\ast(i)} \\
        &+ \dot{\vpsi}^{-1} \text{Cov}(\vpsi^{\ast(i)}, \vpsi^{\ast(i)}_1) \text{Var}(\vpsi^{\ast(i)}_1)^{-1} \left ( \frac{I(\cC_i = \infty)}{\pi_\infty} - \frac{I(\cC_i = 1)}{\pi_1} \right ) \vpsi^{\ast(i)}_1 \pi_1 \\
        &= -\dot{\vpsi}_\infty^{-1} \left \{ \vpsi^{\ast(i)} \frac{I(\cC_i = \infty)}{\pi_\infty} + \text{Cov}(\vpsi^{\ast(i)}, \vpsi^{\ast(i)}_1) \text{Var}(\vpsi^{\ast(i)}_1)^{-1} \vpsi^{\ast(i)}_1 \frac{\pi_\infty - I(\cC_i = \infty)}{\pi_\infty} \right \} \\
        &= \textbf{IF}_{i, \text{Chen}}.
    \end{align*}
    Therefore, the two estimators are asymptotically equivalent.
\end{proof}


\subsection{Proof of Proposition~\ref{proposition:equivalence_kluger}}
\begin{proof}
    From~\citet{kluger2025prediction}, we know that the asymptotic variance of $\widehat{\vtheta}_{\text{PTD}}$ is
    $$\Sigma_{\text{PTD}} = \Sigma_{\vtheta} - \Sigma_{\vtheta, \vgamma_{1}} \left ( \Sigma_{\vgamma_{1}} + \Sigma_{\vgamma_{2}} \right )^{-1}\Sigma_{\vtheta, \vgamma_{1}}^\top,$$
    where $\Sigma_{\vtheta}$ is the asymptotic variance of $\widehat{\vtheta}_{\text{WCC}}$; $\Sigma_{\vgamma_{1}}$ is the asymptotic variance of $\widehat{\vgamma_1}$ (the second subscript is removed as there is only two missingness pattern, and one of them is the fully-observed pattern); $\Sigma_{\vtheta, \vgamma_{1}}$ is the asymptotic covariance between $\widehat{\vtheta}_{\text{WCC}}$ and $\widehat{\vgamma_1}$; $\Sigma_{\vgamma_{2}}$ is the asymptotic variance of $\widehat{\vgamma_2}$.
    
    On the other hand, under the MAR assumption, with only two missingness patterns, Equation (\ref{eq:general_missing_patterns_plusplus_variance_opt}) becomes:
    $$\Sigma_{\text{PS-PPI}} = \Sigma_{\vtheta} - \Sigma_{\vtheta, \vgamma_{1}} \left ( \Sigma_{\vgamma_{1}} + \Sigma_{\vgamma_{2}} \right )^{-1}\Sigma_{\vtheta, \vgamma_{1}}^\top.$$
    This suggests that both $\widehat{\vtheta}_{\text{PTD}}$ and $\widehat{\vtheta}_{\text{PS-PPI}}$ have the same asymptotic variance. As they also have the same asymptotic mean, they are asymptotically equivalent.
\end{proof}

\section{Additional Experimental Results}

\subsection{Synthetic data simulation}
\label{sec:additional-results-simulation}

In this subsection, we present the additional experimental results of the synthetic data simulation under different simulation settings. Specifically, instead of fixing the prediction bias level, we fix the prediction noise level at $\sigma_{\text{pred}} = 0$ and vary the bias level and present the results in Figure~\ref{fig:varying_prediction_bias_estimated_ps}. Based on the simulation settings adopted in Figure~\ref{fig:varying_prediction_noise_estimated_ps} and~\ref{fig:varying_prediction_bias_estimated_ps}, we correspondingly substitute the estimated propensity score with the real propensity score and present the simulation results in Figure~\ref{fig:varying_prediction_noise_known_ps} and~\ref{fig:varying_prediction_bias_known_ps}. Similarly, we substitute the estimated propensity score with the one obtained from the misspecified model and present the simulation results in Figure~\ref{fig:varying_prediction_noise_misspecified_ps} and~\ref{fig:varying_prediction_bias_misspecified_ps}.

\begin{figure}[H]
    \centering
    \begin{subfigure}[b]{0.8\textwidth}
        \centering
        \includegraphics[width=\linewidth]{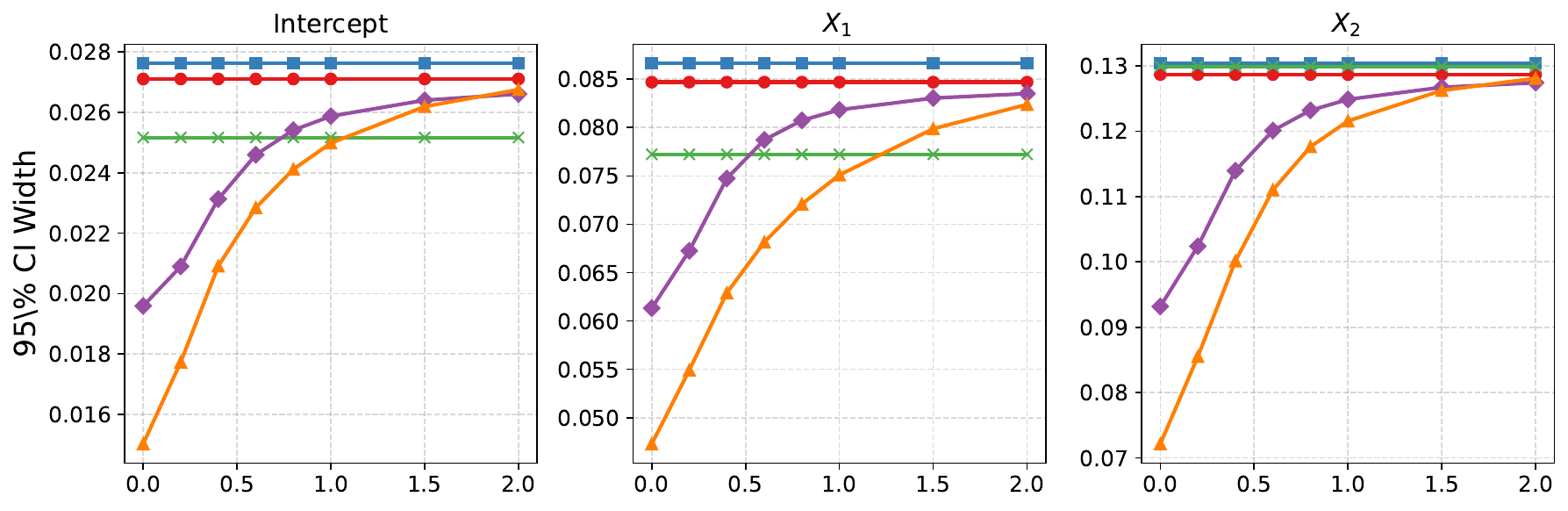}
        \caption{}
    \end{subfigure}
    \begin{subfigure}[b]{0.8\textwidth}
        \centering
        \includegraphics[width=\linewidth]{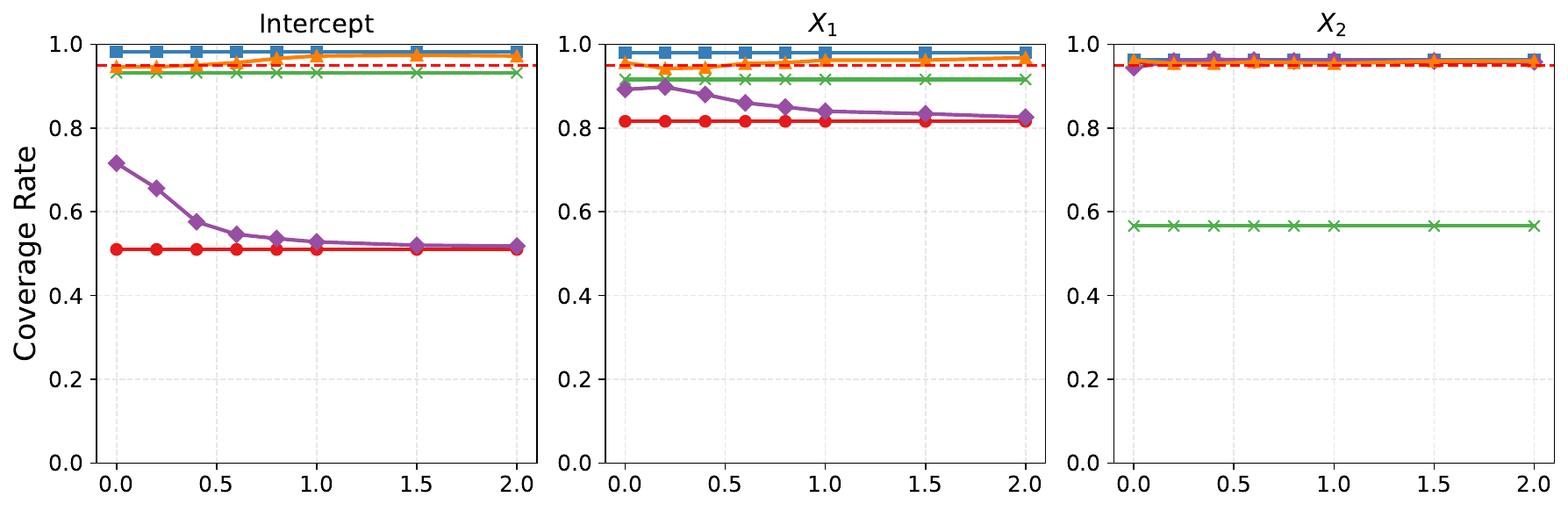}
        \caption{}
    \end{subfigure}
    \begin{subfigure}[b]{0.8\textwidth}
        \centering
        \includegraphics[width=\linewidth]{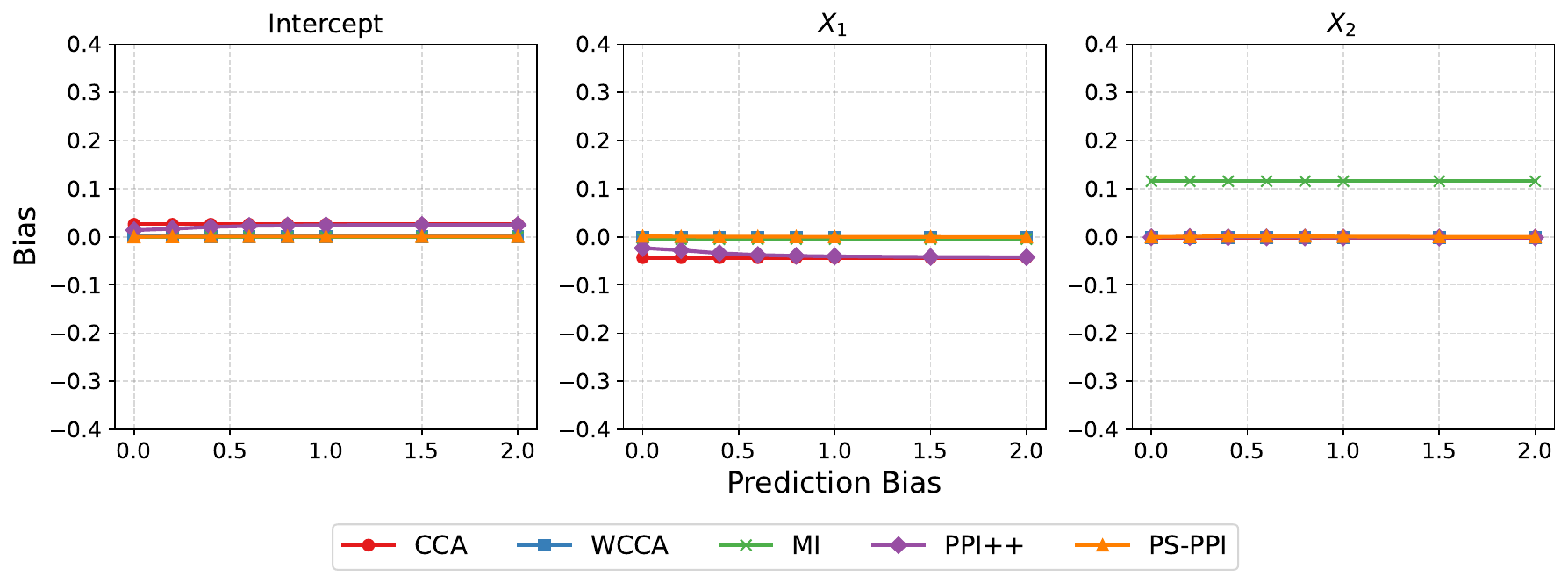}
        \caption{}
    \end{subfigure}
    \caption{\small The same as Figure~\ref{fig:varying_prediction_noise_estimated_ps}, except that the noise level is fixed at $\sigma_{\text{pred}} = 0$, and the X-axis in all panels represents the prediction \textbf{bias} level used in the simulations.}
    \label{fig:varying_prediction_bias_estimated_ps}
\end{figure}

\begin{figure}[H]
    \centering
    \begin{subfigure}[b]{0.8\textwidth}
        \centering
        \includegraphics[width=\linewidth]{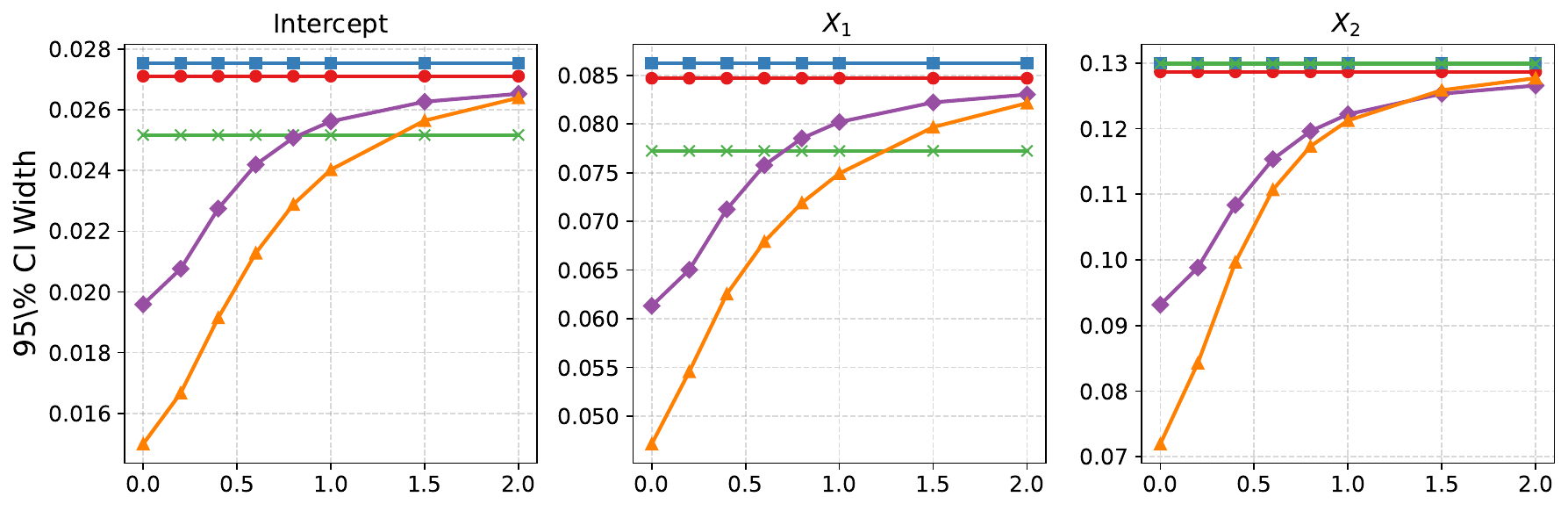}
        \caption{}
    \end{subfigure}
    \begin{subfigure}[b]{0.8\textwidth}
        \centering
        \includegraphics[width=\linewidth]{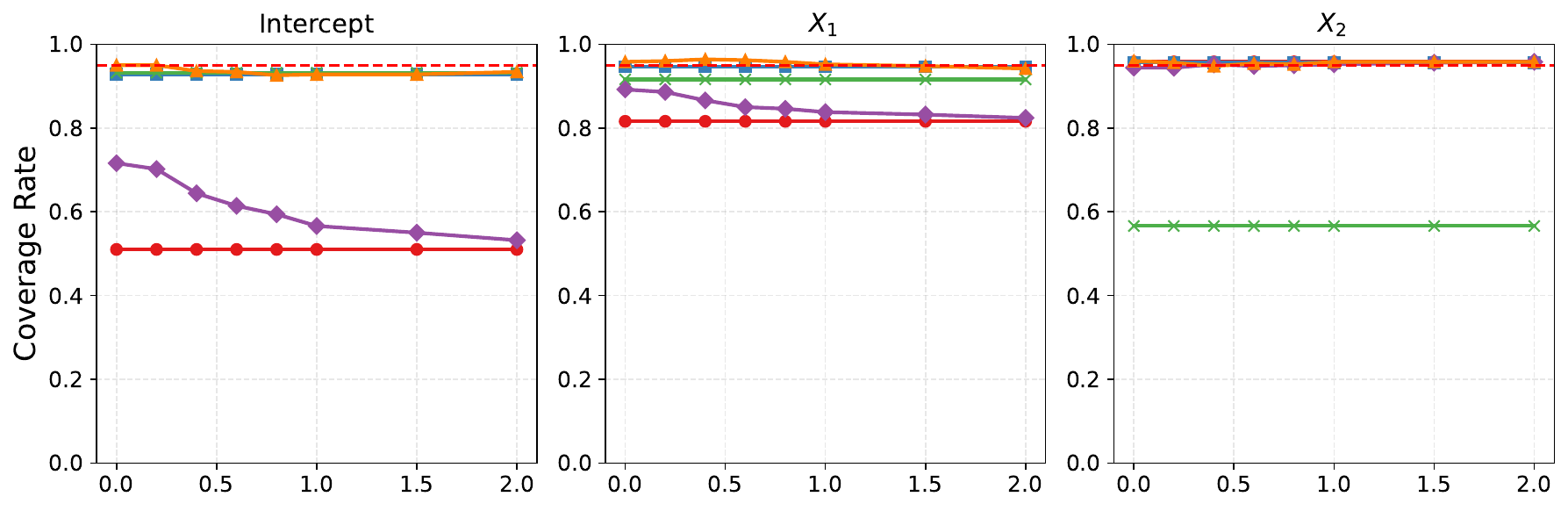}
        \caption{}
    \end{subfigure}
    \begin{subfigure}[b]{0.8\textwidth}
        \centering
        \includegraphics[width=\linewidth]{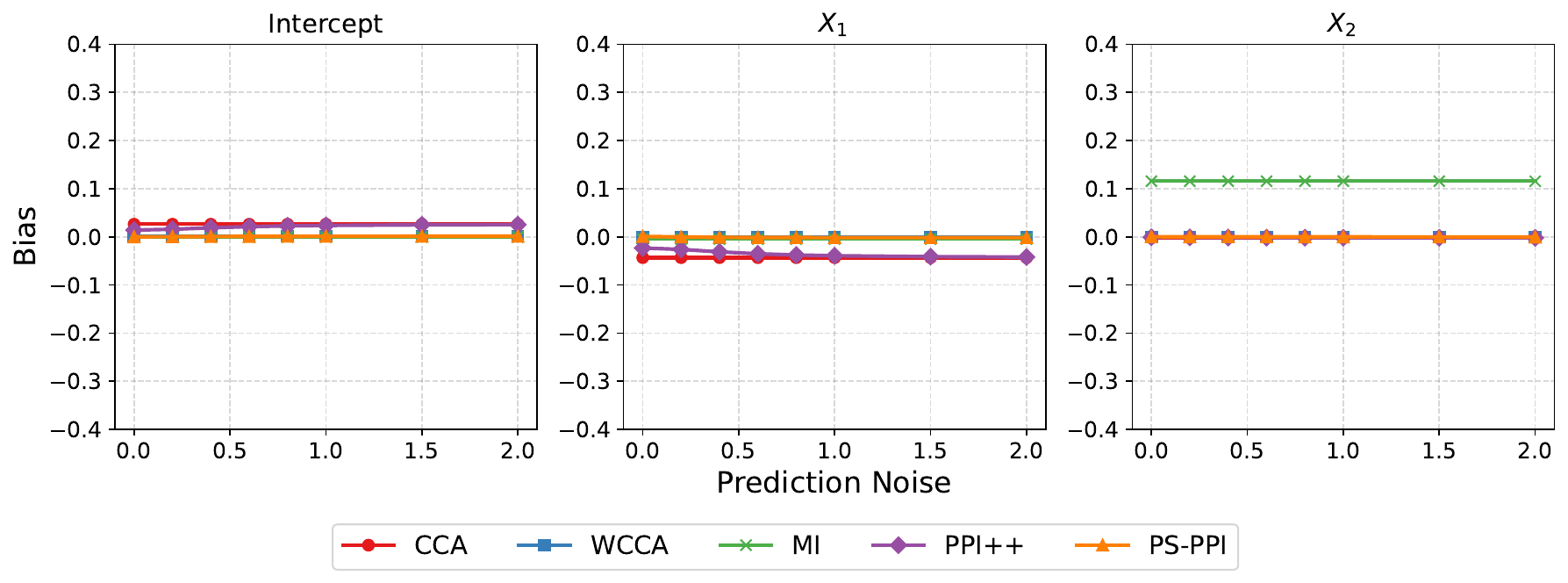}
        \caption{}
    \end{subfigure}
    \caption{\small The same as Figure~\ref{fig:varying_prediction_noise_estimated_ps}, except that the propensity scores are assumed to be known.}
    \label{fig:varying_prediction_noise_known_ps}
\end{figure}

\begin{figure}[H]
    \centering
    \begin{subfigure}[b]{0.8\textwidth}
        \centering
        \includegraphics[width=\linewidth]{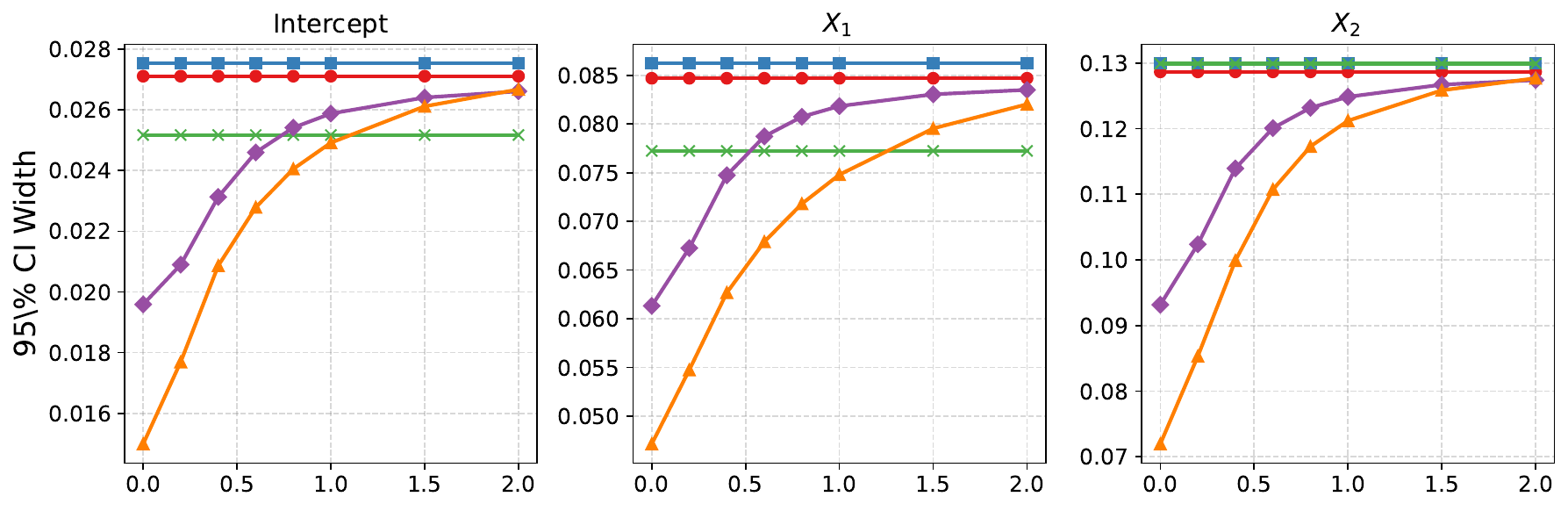}
        \caption{}
    \end{subfigure}
    \begin{subfigure}[b]{0.8\textwidth}
        \centering
        \includegraphics[width=\linewidth]{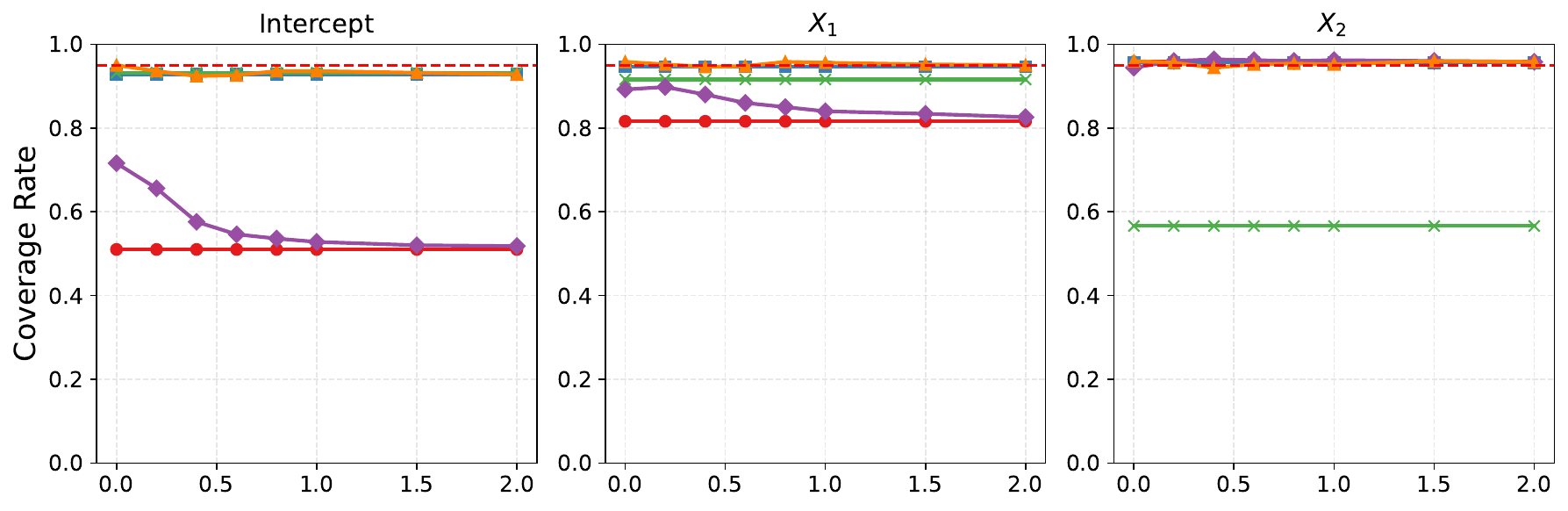}
        \caption{}
    \end{subfigure}
    \begin{subfigure}[b]{0.8\textwidth}
        \centering
        \includegraphics[width=\linewidth]{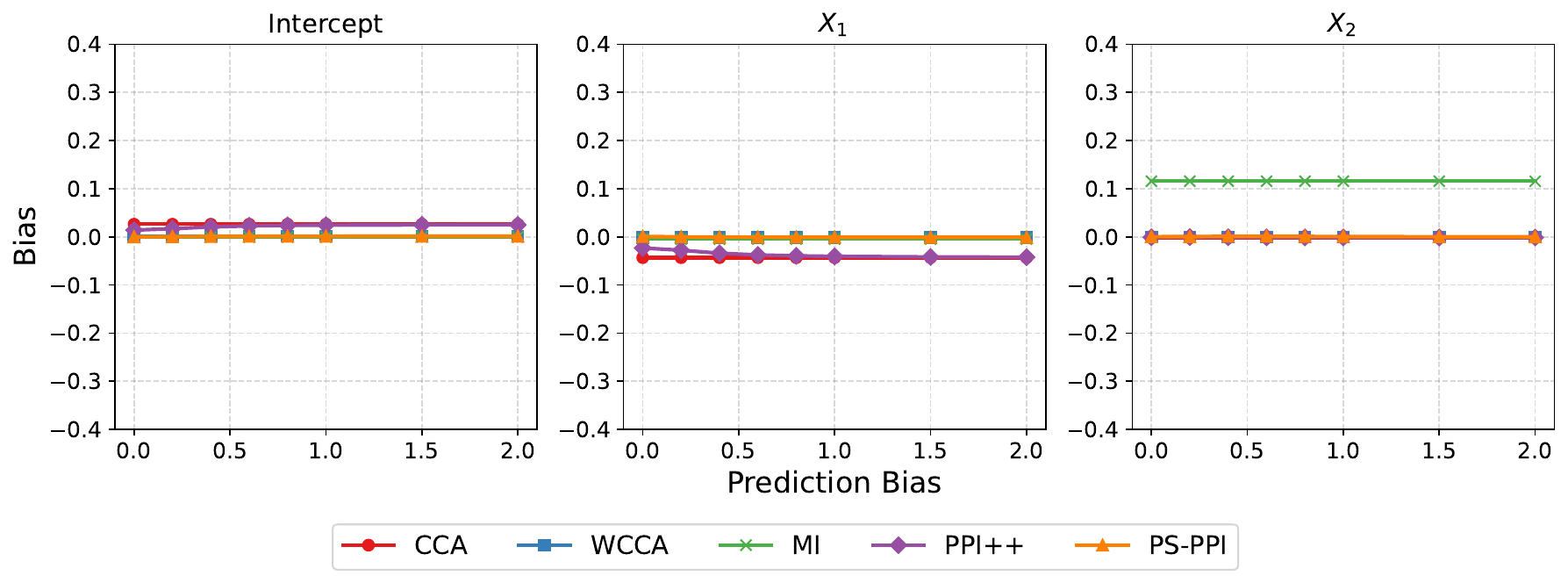}
        \caption{}
    \end{subfigure}
    \caption{\small The same as Figure~\ref{fig:varying_prediction_noise_estimated_ps}, except that the propensity scores are assumed to be known, and the noise level is fixed at $\sigma_{\text{pred}} = 0$, and X-axis in all panels represents the prediction \textbf{bias} level used in the simulations.}
    \label{fig:varying_prediction_bias_known_ps}
\end{figure}

\begin{figure}[H]
    \centering
    \begin{subfigure}[b]{0.8\textwidth}
        \centering
        \includegraphics[width=\linewidth]{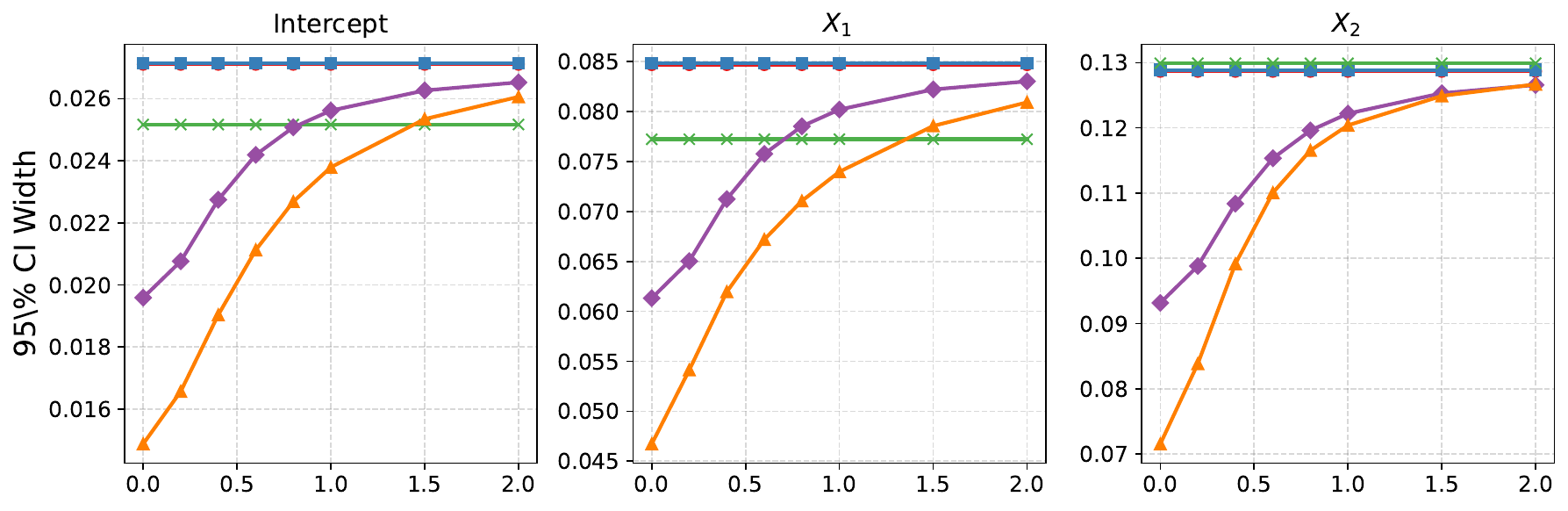}
        \caption{}
    \end{subfigure}
    \begin{subfigure}[b]{0.8\textwidth}
        \centering
        \includegraphics[width=\linewidth]{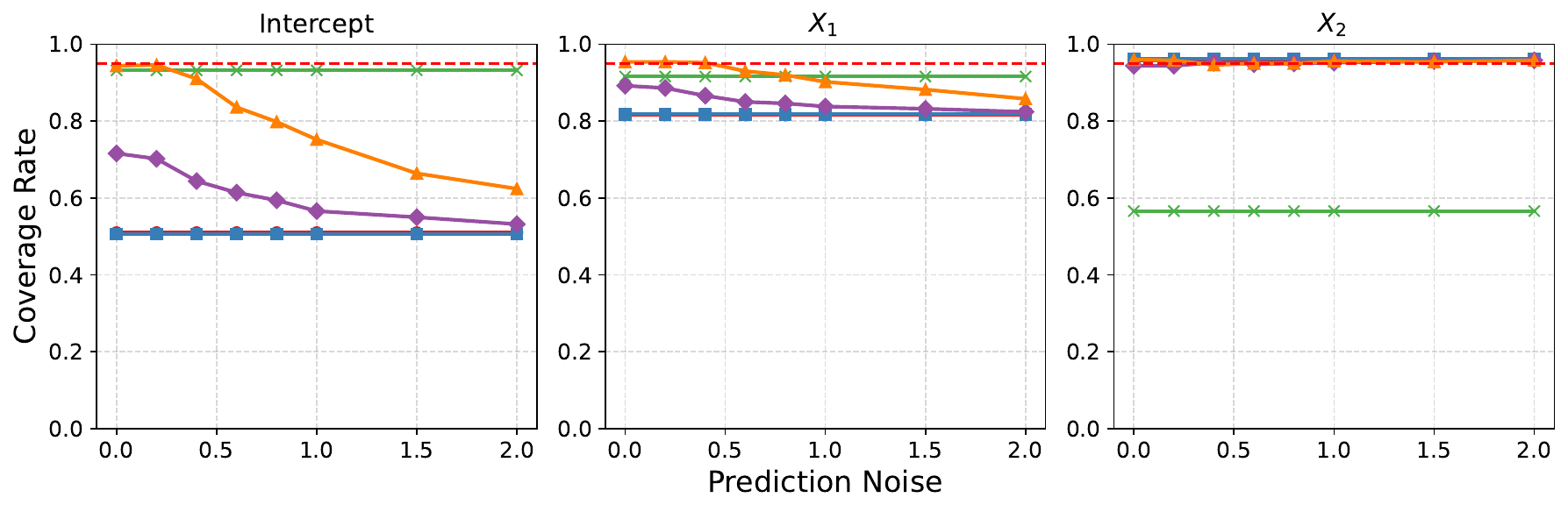}
        \caption{}
    \end{subfigure}
    \begin{subfigure}[b]{0.8\textwidth}
        \centering
        \includegraphics[width=\linewidth]{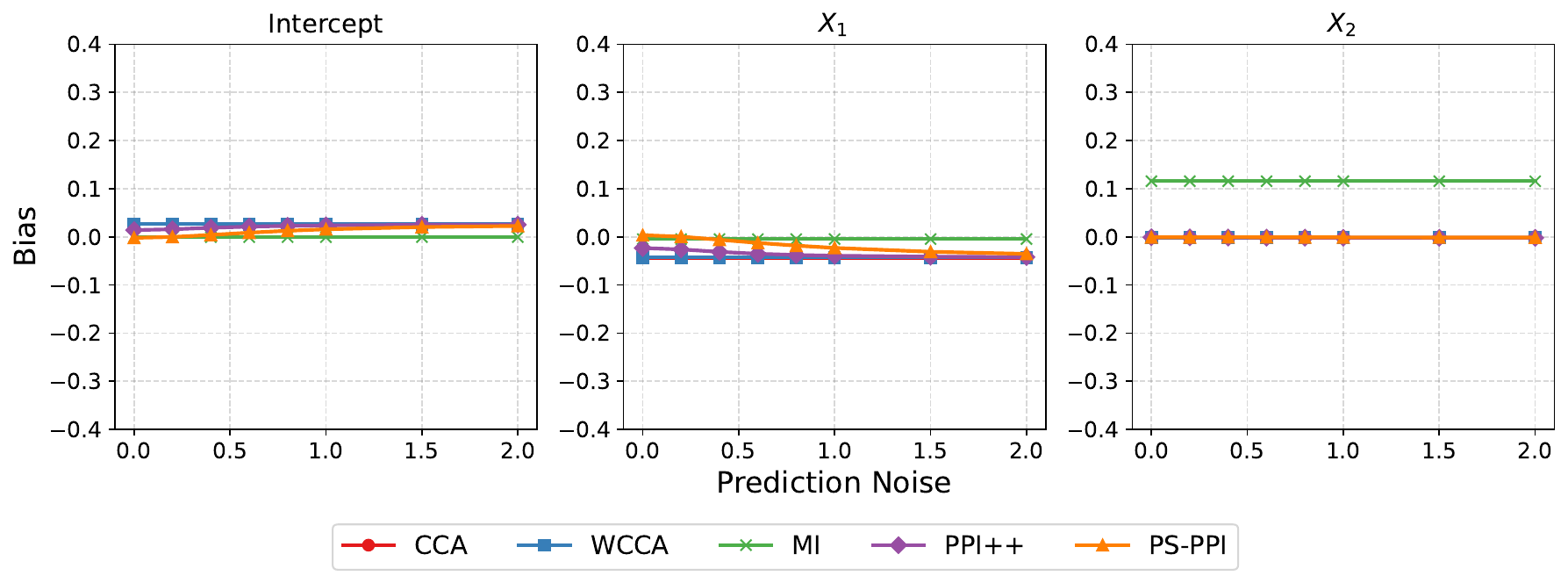}
        \caption{}
    \end{subfigure}
    \caption{\small The same as Figure~\ref{fig:varying_prediction_noise_estimated_ps}, except that the propensity scores are assumed to be estimated from a misspecified model.}
    \label{fig:varying_prediction_noise_misspecified_ps}
\end{figure}

\begin{figure}[H]
    \centering
    \begin{subfigure}[b]{0.8\textwidth}
        \centering
        \includegraphics[width=\linewidth]{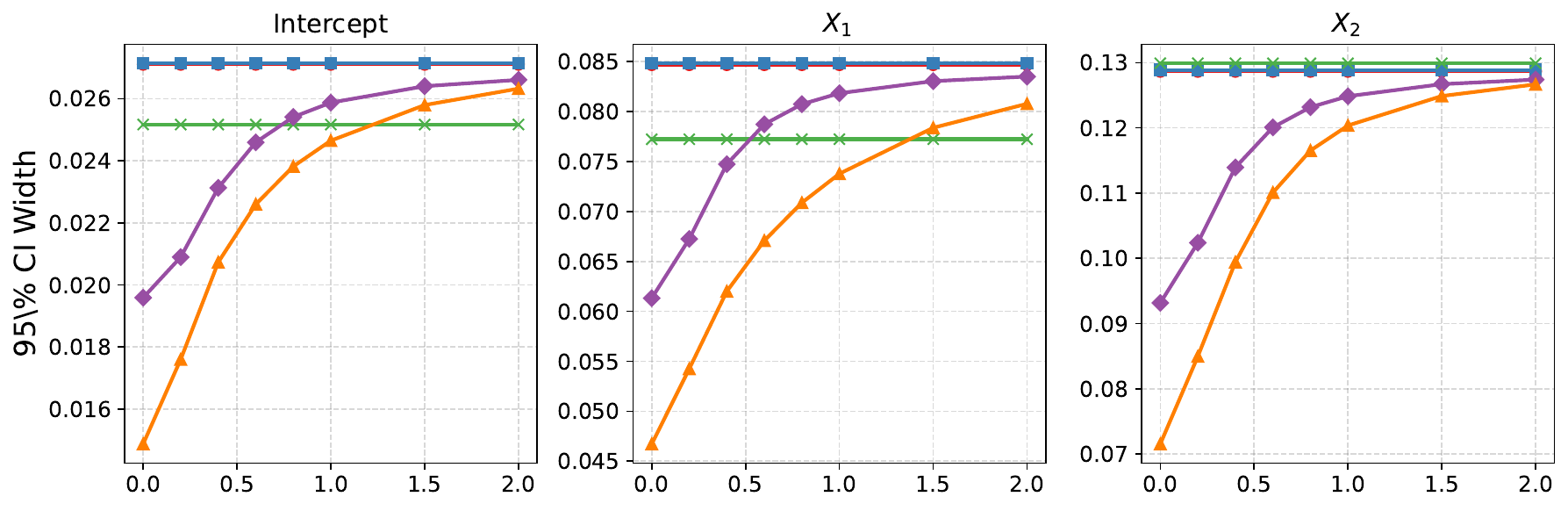}
        \caption{}
    \end{subfigure}
    \begin{subfigure}[b]{0.8\textwidth}
        \centering
        \includegraphics[width=\linewidth]{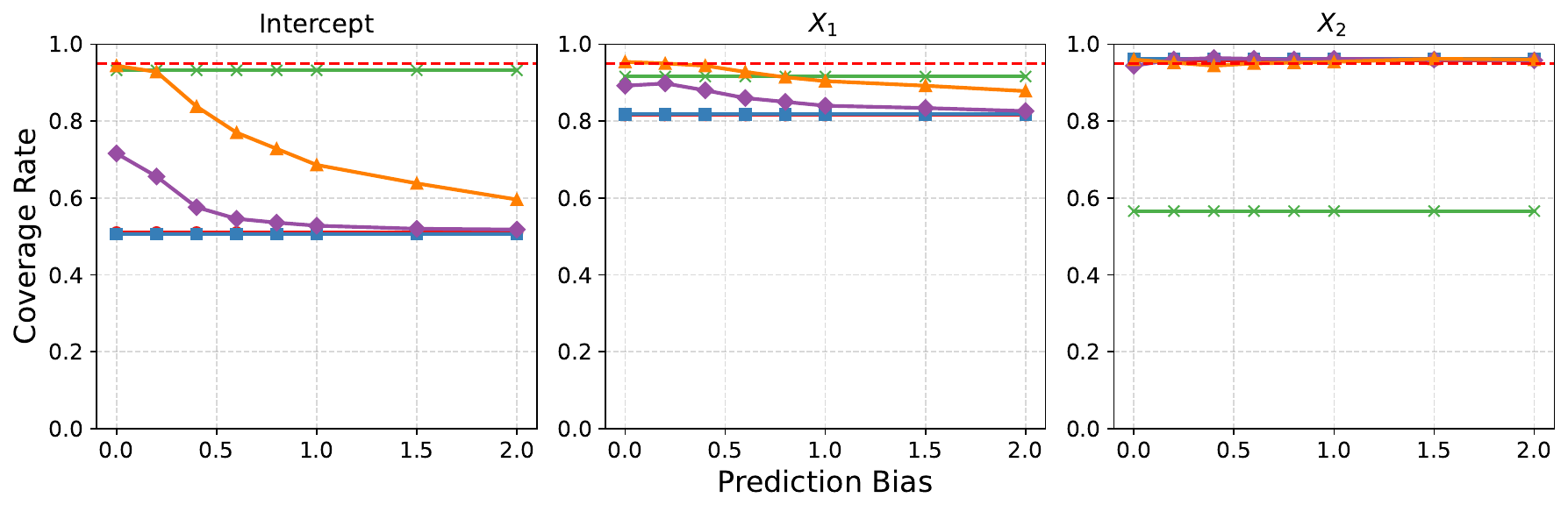}
        \caption{}
    \end{subfigure}
    \begin{subfigure}[b]{0.8\textwidth}
        \centering
        \includegraphics[width=\linewidth]{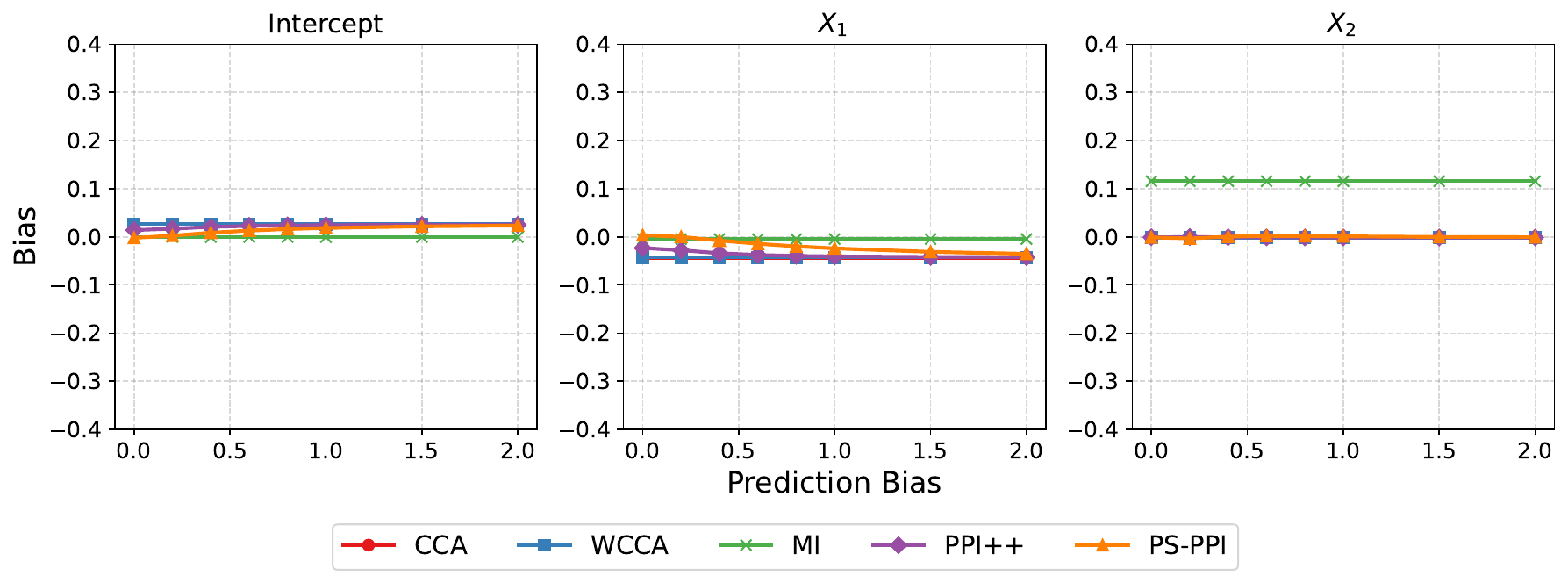}
        \caption{}
    \end{subfigure}
    \caption{\small The same as Figure~\ref{fig:varying_prediction_noise_estimated_ps},  except that the propensity scores are assumed to be estimated from a misspecified model, and the noise level is fixed at $\sigma_{\text{pred}} = 0$, and X-axis in all panels represents the prediction \textbf{bias} level used in the simulations.}
    \label{fig:varying_prediction_bias_misspecified_ps}
\end{figure}

\end{document}